\newtheorem{theorem}{Theorem}[section]
\newtheorem{corollary}[theorem]{Corollary}
\newtheorem{proposition}[theorem]{Proposition}
\newtheorem{definition}[theorem]{Definition}
\newtheorem{example}[theorem]{Example}
\newtheorem{remark}[theorem]{Remark}
\newcommand{\supp}[1]{\mathrm{supp}({#1})}
\newcommand{\lcm}[1]{\mathrm{lcm}({#1})}
\newcommand{\XX}{\mathbb{X}}
\newcommand{\TT}{\mathcal{T}}
\newcommand{\RR}{\mathbb{R}}
\newcommand{\PP}{\mathcal{P}}
\newcommand{\II}{\mathcal{I}}
\newcommand{\degree}[1]{\mathrm{deg}({#1})}
\newcommand{\degreek}[2]{\mathrm{deg}_{{#1}}({#2})}
\newcommand{\cnorm}[1]{\norm{{#1}}_{\mathrm{c}}}
\newcommand{\enorm}[1]{\norm{{#1}}_{2}}
\newcommand{\gwnorm}[1]{\norm{{#1}}_{\mathrm{\nabla},\mathbb{X}}}
\newcommand{\mc}[1]{\mathcal{{#1}}}
\begin{document}

\title{Gradient-Weighted, Data-Driven Normalization \\ for Approximate Border Bases---Concept and Computation}

\author[1,2]{Hiroshi Kera\thanks{Email: \href{mailto:kera@chiba-u.jp}{kera@chiba-u.jp}}}
\author[3]{Achim Kehrein\thanks{Email: \href{achim.kehrein@hochschule-rhein-waal.de}{achim.kehrein@hochschule-rhein-waal.de}}}

\affil[1]{Chiba University}
\affil[2]{Zuse Institute Berlin}
\affil[3]{Rhine-Waal University of Applied Sciences}

\date{}  %

\maketitle

\begin{abstract}
This paper studies the concept and the computation of approximately vanishing ideals of a finite set of data points. By data points, we mean that the points contain some uncertainty, which is a key motivation for the approximate treatment. A careful review of the existing border basis concept for an exact treatment motivates a new adaptation of the border basis concept for an approximate treatment. In the study of approximately vanishing polynomials, the normalization of polynomials plays a vital role. So far, the most common normalization in computational commutative algebra uses the coefficient norm of a polynomial. Inspired by recent developments in machine learning, the present paper proposes and studies the use of gradient-weighted normalization. The gradient-weighted semi-norm evaluates the gradient of a polynomial at the data points. This data-driven nature of gradient-weighted normalization produces, on the one hand, better stability against perturbation and, on the other hand, very significantly, invariance of border bases with respect to scaling the data points. Neither property is achieved with coefficient normalization. In particular, we present an example of the lack of scaling invariance with respect to coefficient normalization, which can cause an approximate border basis computation to fail. This is extremely relevant because scaling of the point set is often recommended for preprocessing the data. Further, we use an existing algorithm with coefficient normalization to show that it is easily adapted to gradient-weighted normalization. The analysis of the adapted algorithm only requires tiny changes, and the time complexity remains the same. 
Finally, we present numerical experiments on three affine varieties to demonstrate the superior stability of our data-driven normalization over coefficient normalization. We obtain robustness to perturbations and invariance to scaling.
\end{abstract}

\section{Introduction}
\label{sec:Introduction}

A finite set of points $\XX=\{p_1,\dots,p_m\}\subset \RR^n$ can be described by its vanishing ideal of polynomials,
\begin{align}
    \mathcal{I}(\XX)=\{f\in \RR[x_1,\dots,x_n]\mid f(p_1)=\dots=f(p_m)=0\}\ .
\end{align}
There are effective algorithms that compute a finite set of generating polynomials, $\langle g_1,\dots,g_\nu\rangle=\mathcal{I}(\XX)$, for example, a Gr\"obner basis or a border basis of the vanishing ideal \citep{cox1992ideals, kreuzer2005computational, dickenstein2005solving}. If the point coordinates are exactly known rational numbers, these computations can be performed exactly. However, recent applications use points that are only known with uncertainty. Examples include machine learning~\citep{livni2013vanishing,kiraly2014dual,hou2016discriminative,kera2016vanishing,kera2018approximate,wirth2022conditional,wirth2023approximate,pelleriti2025latent}, computer vision~\citep{zhao2014hand,yan2018deep}, robotics~\citep{iraji2017principal,antonova2020analytic}, nonlinear systems~\citep{torrente2009application,kera2016noise,karimov2020algebraic,karimov2023identifying}, and signal processing~\citep{wang2018nonlinear, wang2019polynomial}. The notion of the vanishing ideal has to be extended from the exact setting to an approximate setting. This extension starts with the concept of vanishing for polynomials.

\subsection{Approximately Vanishing Polynomials -- Geometric Distance and Algebraic Distance}
\label{subsec:Intro-Approximately-Vanishing-Polynomials}

For a finite set $\XX$ of points that are considered only known with uncertainty, exactly vanishing polynomials become irrelevant. For instance, if data points are almost collinear, it may make more sense to describe the data set by linear polynomials. Alhough these linear polynomials may not vanish at the data points, they vanish at points \textit{close} to the data points. This notion of proximity considers the distance in $\RR^n$ between a data point with uncertainty and a point where the polynomial vanishes. We call this distance the \textit{geometric distance}. 

However, this geometric distance is a very inconvenient concept for calculations, and we replace it with the evaluation of a polynomial $f$ at the given data point set $\XX$. Heuristically, a polynomial that vanishes on points close to the data points must have ``small'' evaluations at the data points themselves. Instead of requiring that polynomials vanish at points close to the data points, we demand that polynomials almost vanish at the given data points. Therefore, we turn our attention to $\epsilon$-vanishing polynomials $f$ on $\XX$,
\begin{align}\label{eq:tolerance}
\enorm{f(\XX)}=\sqrt{f(p_1)^2+\dots+f(p_m)^2}\le \epsilon
\end{align}
for a prescribed tolerance $\epsilon\ge 0$. To distinguish this function-value approach from the argument-approach of the geometric distance, we call this the \textit{algebraic distance}.

The tolerance condition \eqref{eq:tolerance} by itself is insufficient. For any non-exactly vanishing polynomial on $\XX$, we can decrease or increase its evaluation by scalar multiplication: while $\gamma\cdot f$  is $\epsilon$-vanishing for $\gamma\le\epsilon/\enorm{f(\XX)}$, it is not $\epsilon$-vanishing for $\gamma >\epsilon/\enorm{f(\XX)}$. In \citep{kera2019spurious}, this problem of scaling arbitrariness is named the \textit{spurious vanishing problem}: every polynomial---however geometrically unrelated its zeros are to the given point set---is approximately vanishing if the polynomial is multiplied by a sufficiently small scalar. 

Non-zero constant polynomials $f=c\ne 0$ underline this scaling problem. Their evaluation on  $\XX$ is
\begin{align}
    \enorm{f(\XX)}=\sqrt{|\XX|\cdot c^2}=|c|\cdot\sqrt{|\XX|}\ .
\end{align}
Without a normalization constraint, constant polynomials $c\ne 0$, but sufficiently close to zero, would be $\epsilon$-vanishing. Certainly, this is geometrically unintended. Non-zero constant polynomials should never be considered as vanishing on a non-empty set of points. Hence, we treat non-zero constant polynomials as exceptional cases.

To avoid the scaling influence for non-constant polynomials, we are going to combine the tolerance condition~\eqref{eq:tolerance} with a normalization constraint, $\|f\|=1$, for some semi-norm $\norm{\,\cdot\,}$ on polynomials. A suitable choice of this semi-norm is a central aspect of this paper.

\subsection{Gradient-Weighted and Data-Driven Normalization}
\label{subsec:Intro-Normalization}

A common normalization constraint in computational commutative algebra uses the coefficient norm in the polynomial ring $\mc{P}=\RR[x_1,\dots,x_n]$,
\begin{align}
    \cnorm{f}=\cnorm{\sum\nolimits_{\, t\in\TT} c_t\cdot t}=\sqrt{\sum\nolimits_{\,t\in \TT} c_t^2}\ ,
\end{align}
where $\TT$ denotes the set of terms in $\PP$, i.e., the products of indeterminates. 

However, the choice of the coefficient norm for normalization is rather arbitrary. This choice ignores the prescribed point set $\XX$. Moreover, polynomials react differently to perturbations of the data points. To remedy these shortcomings, we adopt ideas from machine learning and propose a normalization that evaluates the gradients of terms at the data points: our \textit{gradient-weighted semi-norm}---to be defined in Section~\ref{sec:gradient-weighted-semi-norm}---is a hybrid of coefficient normalization from computational commutative algebra and of gradient normalization from machine learning. A coefficient $c_t$ is weighted by the gradient of its term $t$ evaluated at the prescribed point set $\XX$. The use of $\XX$ makes the gradient-weighted semi-norm \textit{data-driven}. 

The gradient relates the algebraic with the geometric distance. Ideally, for each potentially perturbed point $p\in \XX$, an approximate vanishing polynomial $f$ exactly vanishes for some nearby point $p^*$. Let $\XX^*=\{p^*_1,\dots,p^*_m\}$ be the set of such \textit{clean} points and let $\Delta \XX=(p_1-p_1^*,\dots,p_m-p_m^*)$ be the vector of perturbations, which contains the information about the geometric distance. Then 
\begin{align}
    \underbrace{\enorm{f(\XX)}}_{\text{algebraic distance}} = \ 
    \|f(\XX)-\underbrace{f(\XX^*)}_{= 0}\|_2\ \approx\ \enorm{f(\XX)-\bigl(f(\XX)-\nabla f(\XX)\cdot \Delta\XX\bigr)}\ \le  \ \underbrace{\enorm{\nabla f(\XX)}}_{\text{evaluated gradient}}\cdot \underbrace{\enorm{\Delta\XX}}_{\text{geometric distance}}\ .
\end{align}
Therefore, the norm of the gradient of the polynomial evaluated at the data points is an important heuristic of how well algebraic and geometric distance are related.

\subsection{Organization of the Paper}
\label{subsec:Intro-Organization-Paper}

After this introduction, Section~\ref{sec:related-work} briefly reviews related work on approximately vanishing polynomials. We focus on the computation of approximate border bases~\citep{abbott2008stable,heldt2009approximate,limbeck2013computation}, since they have greater numerical stability than Gr\"obner bases~\citep{stetter2004numerical,fassino2010almost}. 

Section~\ref{sec:basic-notation} begins with the definitions, discusses some details of approximately vanishing ideals, and adapts the concept of a border basis to our approximation and normalization needs. Then, Section~\ref{sec:gradient-weighted-semi-norm} presents the key concept of this paper: the gradient-weighted semi-norm. Gradient-weighted normalization can easily be implemented in many algorithms for computing bases of vanishing ideals: replace eigenvalue problems with generalized eigenvalue problems. This is such a minor modification that the analysis of the modified algorithms hardly differs. Section~\ref{sec:implementing-gradient-normalization}, for concreteness, implements gradient-weighted normalization in the approximate Buchberger--M\"oller (ABM) algorithm, which has been studied by Limbeck with coefficient normalization~\citep{limbeck2013computation}. 

Gradient-weighted normalization features (i) robustness against perturbation of the data points and (ii) invariance of the scale of the points. This is shown in Section~\ref{sec:gradient-weighted-features}. Border basis polynomials computed from two scales of the data points are also just scaled versions of one another. By contrast, Section~\ref{sec:coefficient-scaling-dependence} shows that approximate bases with respect to coefficient normalization need not be scale-invariant. There exists a point set with two scales, for which the respectively computed basis polynomials are not scaled versions of one another. In particular, the customary preprocessing of scaling the data set into the $n$-dimensional cube $[-1,1]^n$ turns out to be problematic with respect to coefficient normalization. 

Finally, Section~\ref{sec:numerical-experiments} documents the results of some numerical experiments. Three affine varieties are sampled finitely many times with perturbation. We compute bases of approximately vanishing ideals of the finite sample sets for various tolerances and study how well these bases approximate the exact vanishing ideal of the original algebraic variety.

\section{Related Work in Machine Learning and Computational Commutative Algebra}
\label{sec:related-work}

Approximately vanishing polynomials appear in various fields such as reconstruction of dynamical systems~\citep{torrente2009application,kera2016noise,karimov2020algebraic}, signal processing~\citep{wang2018nonlinear,wang2019polynomial}, and machine learning~\citep{hou2016discriminative,kera2016vanishing,antonova2020analytic}. Accordingly, the computation of bases of vanishing ideals with approximately vanishing polynomials has been studied extensively during the last decades~\citep{abbott2008stable,heldt2009approximate,fassino2010almost,robbiano2010approximate,limbeck2013computation,livni2013vanishing,kiraly2014dual,kera2018approximate,kera2019spurious,kera2020gradient,wirth2022conditional,kera2022border,wirth2023approximate,pelleriti2025latent}. 

In machine learning, the basis computation of vanishing ideals is performed in a term-agnostic manner to circumvent symbolic computation and term orderings~\citep{livni2013vanishing,kiraly2014dual}. In this approach, there is no efficient access to the coefficients of terms, and polynomials are often handled without proper normalization. The resulting problem of spuriously vanishing polynomials was pointed out by  
\cite{kera2019spurious}. Then, \cite{kera2020gradient,kera2024monomial} addressed the lack of normalization by introducing the gradient semi-norm $(\sum_{p \in \XX} \enorm{\nabla g(p)}^2)^{1/2}$. Moreover, this semi-norm specifically incorporates data-driven aspects. 

By contrast, the term-aware basis computations in computational commutative algebra used normalization, usually with respect to the coefficient norm. As it seemed at first that the use of a gradient semi-norm would merely increase computational cost, algebraic approaches have often ignored the data-driven aspect of this new normalization. 

The gradient has been exploited for approximate computations of vanishing ideals in some studies. For example, \cite{abbott2008stable} computed the first-order approximation of polynomials, i.e.,  their gradient,  to determine a set of terms whose evaluation matrix maintains full-rank for small perturbations in the points. This approach incurs heavy computational costs and strong sensitivity to an additional parameter. Similarly, ~\cite{fassino2013simple} consider the first-order approximation of polynomials to compute a low-degree polynomial that vanishes on points geometrically close to the given points. It focuses on the lowest degree polynomial and does not compute a basis. Nevertheless, both approaches use coefficient normalization.

One of the earliest occurrences of data-driven normalization of approximately vanishing polynomials appears in \citep{kera2020gradient,kera2024monomial}. Similarly to the present paper, they normalize polynomials with respect to a gradient semi-norm. However, their method focuses on term-agnostic computations, which may be helpful when symbolic computation and term orderings are contraindicated. For example, several data-driven applications including machine learning exploit fast numerical computation libraries and powerful hardware accelerators such as graphics processing units (GPU). However, to our knowledge, there is no framework that seamlessly runs numerical and symbolic computations on GPUs. Besides, in such applications, one often needs to interpret the intrinsic structure of data from the output polynomials. The term ordering is problematic for this use as it introduces an artificial bias to the outputs. 

It has been unknown how helpful data-driven normalization is in the term-\textit{aware} setting---the standard in computer algebra. Consequentially, the relationship between the gradient semi-norm and the coefficient norm has not been studied. While the implementation of gradient normalization in border basis computations cannot exploit all advantages of term-agnostic basis computations, the gradient-weighted normalization presented in this paper transfers many nice properties of gradient normalization from term-agnostic computations to term-aware computations. The algorithmic complexity stays in the same order of magnitude.

\section{Basic Concepts and Definitions: Review and Adaptation}\label{sec:basic-notation}

Let $\PP=\RR[x_1,\ldots, x_n]$ denote the ring of polynomials in the indeterminates $x_1,\ldots, x_n$ with real coefficients. The set of all \textbf{terms} $t=x_1^{e_1}\cdots x_n^{e_n}$ with integer exponents $e_1,\dots,e_n\ge 0$ is labeled $\TT$. For $1\le k\le n$, the $k$-th \textbf{partial degree of the term} $t$ is $\degreek{k}{t}=e_k$ and its \textbf{(total) degree} is $\degree{t}=\sum_{k=1}^n e_k$.

A polynomial is a finite linear combination of terms, $\sum_{t\in \TT} c_t t$. All sums in this paper are finite. The \textbf{support of a polynomial} $f=\sum_{t\in \TT} c_t t$ is the finite set of terms that occur with a nonzero coefficient, $\supp{f}=\{t\in\TT\mid c_t\ne 0\}$. The $k$-th \textbf{partial degree of the polynomial} $f$ is $\degreek{k}{f}=\max\{\degreek{k}{t}\mid t\in \supp{f}\}$, while its \textbf{degree} is $\degree{f}=\max\{\deg{t}\mid t\in \mathrm{supp}(f)\}$. 

The \textbf{coefficient norm} of a polynomial is 
\begin{align}
    \cnorm{f}= \Bigl(\sum_{t\in \supp{f}} |c_t|^2\Bigr)^{1/2}\ .
\end{align}
A polynomial will be \textbf{coefficient-normalized}, if $\cnorm{f}=1$.

\subsection{Evaluation of Polynomials and Approximately Vanishing Polynomials}\label{subsec:approximately-vanishing-polynomials}

Let $\XX = \{p_1,\dots,p_m\} \subset \RR^n$ be a finite, non-empty set of points. We often consider its elements enumerated to fix an order in definitions and algorithms. More formally, we may think of $\XX$ as a tuple of non-repeating points. The \textbf{evaluation on} $\XX$ assigns to each polynomial the vector of its function values,
\begin{align}
    \text{eval}_{\XX}\colon \RR[x_1,\dots, x_n]\to \RR^m,\quad f\mapsto f(\XX)=
    \begin{pmatrix}
    f(p_1)\\
    \vdots\\
    f(p_m)
    \end{pmatrix}\ .
\end{align}
We call $f(\XX)$ the \textbf{evaluation vector} and use pointwise multiplication to make the vector space $\RR^m$ a commutative algebra with unit $(1,\dots,1)^\intercal$ and zero divisors.
Then the evaluation map $\text{eval}_{\XX}$ is a ring homomorphism with $\text{eval}_{\XX}(1)=(1,\dots,1)^\intercal$ and its kernel is the vanishing ideal $\mathcal{I}=\mathcal{I}(\XX)$. The evaluation map induces the isomorphism
\begin{align}
    \PP/\mathcal{I}=\RR[x_1,\dots,x_n]/\mathcal{I}\cong \RR^m\ .
\end{align}
We are particularly interested in vector space bases of $\PP/\mathcal{I}$ induced by terms.

For a tuple of polynomials, we extend the evaluation vector to an \textbf{evaluation matrix},
\begin{align}
    \begin{pmatrix}
    f_1(\XX)& \dots &f_r(\XX)
    \end{pmatrix}
    =\begin{pmatrix}
    f_1(p_1) & \dots & f_r(p_1)\\
    \vdots & &\vdots \\
    f_1(p_m) & \dots & f_r(p_m)\\
    \end{pmatrix}
    \in \RR^{m\times r}\ .
\end{align}
Further, we evaluate the gradient of $f$ on $\XX$ by concatenating the column vector gradients $\nabla f(p_1)^\intercal$, \dots, $\nabla f(p_m)^\intercal$ into a long column vector. Analogously, gradient evaluations of tuples of polynomials are matrices.
\begin{align}
    \nabla f(\XX)=\begin{pmatrix}
    | \\
    \nabla f(p_1)^\intercal\\
    |\\
    \vdots\\
    | \\
    \nabla f(p_m)^\intercal\\
    |\\
    \end{pmatrix}
    \quad\text{and}\quad
    \begin{pmatrix}
    \nabla f_1(\XX) & \dots & \nabla f_r(\XX)
    \end{pmatrix}=\begin{pmatrix}
    |  &  & | \\
    \nabla f_1(p_1)^\intercal & \dots &\nabla f_r(p_1)^\intercal\\
    |&  & | \\
    \vdots &&\vdots \\
    |&  & | \\
    \nabla f_1(p_m)^\intercal & \dots &\nabla f_r(p_m)^\intercal\\
    |& & | \\
    \end{pmatrix}
    \ .
\end{align}

The following is our key definition.

\begin{definition}
Let $\XX=\{p_1,\dots,p_m\}\subset \RR^n$ be a finite, non-empty set of points and $\epsilon\ge 0$ a tolerance. A polynomial $f\in \PP$ is \textbf{$\boldsymbol{\epsilon}$-vanishing on} $\boldsymbol{\XX}$ if the Euclidean norm of its evaluation vector is within the tolerance,
\begin{align}
    \enorm{f(\XX)}=\left(\sum_{j=1}^m |f(p_j)|^2\right)^{1/2}\le\epsilon\ .
\end{align}
A zero-vanishing polynomial on $\XX$ is simply called \textbf{vanishing} on $\XX$ or---for emphasis---\textbf{exactly vanishing} on $\XX$.
\end{definition}

As outlined in Section~\ref{sec:Introduction}, scalar multiplication affects this concept of approximately vanishing. On the one hand, for every positive tolerance $\epsilon >0$, every polynomial $f$ can be scaled down to become $\epsilon$-vanishing. If $f$ is exactly vanishing, nothing needs to be done. If $f$ is not exactly vanishing on $\XX$, then $\enorm{\alpha\cdot f(\XX)}\le \epsilon$ for any $\alpha\le \epsilon/\enorm{f(\XX)}$. On the other hand, every non-vanishing polynomial can be scaled up to fail the $\epsilon$-vanishing condition. For $\alpha >\epsilon/\enorm{f(\XX)}$ the scalar multiple $\alpha \cdot f$ is not $\epsilon$-vanishing. 

To avoid this scaling dependence, $\epsilon$-vanishing conditions will next be combined with normalization constraints on the polynomials.

\subsection{Approximately Vanishing Ideals}\label{subsec:approximate-ideal}

The following definition generalizes a definition in~\citep{heldt2009approximate} from coefficient-normalization to other normalizations.

\begin{definition} \label{def:approximate-vanishing-ideal}
Let $\XX\subset \RR^n$ be a non-empty finite set of points and $\epsilon\ge 0$ a tolerance. Let $\norm{\,\cdot\,}$ be a semi-norm on the algebra of polynomials $\PP$. 
An ideal $\mathcal{I}\subset \PP$ is \textbf{$\epsilon$-vanishing on $\XX$ with respect to $\norm{\,\cdot\,}$-normalization}, if there exist polynomials $g_1,\dots, g_r$ that generate $\mathcal{I}$ such that each $g_i$ either vanishes exactly on $\XX$ (no normalization required) or is $\norm{\,\cdot\,}$-normalized and $\epsilon$-vanishes on $\XX$.
\end{definition}

The relationship between the vanishing ideal $\II(\XX)$ and an $\epsilon$-vanishing ideal $\II$ on $\XX$ is complicated. First of all, unlike the vanishing ideal $\II(\XX)$, an $\epsilon$-vanishing ideal on $\XX$ is not uniquely determined. For example, any subset of the generators of an $\epsilon$-vanishing ideal on $\XX$ is also a set of generators for a possibly different $\epsilon$-vanishing ideal on $\XX$. In particular, if we use a subset of exactly vanishing generators, even a zero-vanishing ideal may be strictly larger than $\II(\XX)$. Secondly, an $\epsilon$-vanishing generator $g_j$ need not be exactly vanishing, so an approximately vanishing ideal $\mathcal{I}$ need not be contained in $\II(\XX)$. Conversely, a non-zero exactly vanishing polynomial on $\XX$ need not be in the $\epsilon$-vanishing ideal $\II$. The following example illustrates some of these statements.  

\begin{example} \label{ex:epsilon-vanishing-ideal}
Consider $\mathbb{X}=\{(0.9, 1.1), (1.1, 0.9)\}$ and coefficient normalization of polynomials. The following examples are illustrated in Figure~\ref{fig:vanishing-ideals}.

\begin{figure} [h]
    \centering
    \includegraphics[width=0.95\textwidth]{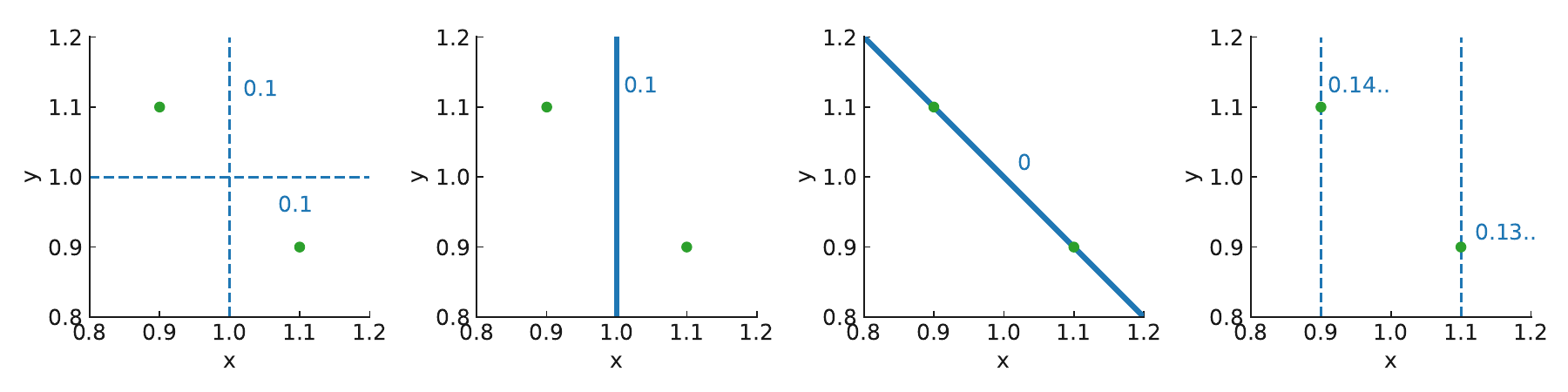}
    \caption{Examples of the polymorphy of approximately vanishing ideals on a set of two points, see Example~\ref{ex:epsilon-vanishing-ideal}. The two points $(0.9,1.1)$ and $(1.1,0.9)$ are shown in red. The zero sets of a single generating polynomial are shown as a blue line---solid, if it belongs to all generators, and dashed, if it does not belong to all generators. The numbers indicate the Euclidean norm of the evaluation of the generator on $\XX$, i.e.,\ the lower bound for which the coefficient-normalized generator becomes approximately vanishing on $\XX$. From left to right: (i) The generators span the vanishing ideal of the blue point, which is the intersection of the zero sets. For a tolerance $\epsilon\ge 0.1$, both generators become $\epsilon$-vanishing. (ii) The single generator spans the exact-vanishing ideal of the vertical line. This ideal is $\epsilon$-vanishing for $\epsilon\ge 0.1$ (iii) The single generator spans the exact-vanishing ideal of the slanted line. This ideal is exact-vanishing on $\XX$. (iv) The two generators have no zeros in common. Their vanishing ideal is the unit ideal, which becomes $\epsilon$-vanishing for $\epsilon \ge 0.14\dots$, which is the larger algebraic distance of the two generators. The evaluations of the polynomials are positive since one point is not a zero. The evaluations differ, since the coefficient normalizations differ.}
    \label{fig:vanishing-ideals}
\end{figure}

\begin{enumerate}[(i)]
\item The coefficient-normalized polynomials 
    $h_1=(x-1)/\sqrt{2}$ and $h_2=(y-1)/\sqrt{2}$
generate the vanishing ideal $\mathcal{I}(\mathbb{Y})$ of the single point $\mathbb{Y}=\{(1,1)\}$. We also have
\begin{align}
    h_1(\mathbb{X})=\frac{1}{\sqrt{2}}\begin{pmatrix}
    -0.1 \\ \phantom{-}0.1
    \end{pmatrix}
    \quad\text{and}\quad 
    h_2(\mathbb{X})=\frac{1}{\sqrt{2}}\begin{pmatrix}
    \phantom{-}0.1 \\ -0.1
    \end{pmatrix}
    \quad\text{with}\quad
    \|h_1(\mathbb{X})\|_2=\|h_2(\mathbb{X})\|_2=\sqrt{\frac{0.02}{2}}=0.1\ .
\end{align}
So, for any tolerance $\epsilon \ge 0.1$ the generators $h_1$ and $h_2$ are $\epsilon$-vanishing on $\XX$, coefficient-normalized, and $\langle h_1,h_2\rangle=\II(\mathbb{Y})$ is an $\epsilon$-vanishing ideal on $\XX$. Neither of the vanishing ideals $\II(\mathbb{Y})$ and $\II(\XX)$ is contained in the other. The points of $\XX$ can be seen as two differently perturbed samples of the point in $\mathbb{Y}$.

\item The same polynomial $h_1=(x-1)/\sqrt{2}$ by itself generates the vanishing ideal $I(\mathbb{L})$ of the line $\mathbb{L}=\{(1,y)\mid y\in\RR\}$. For any tolerance $\epsilon \ge 0.1$ the generator $h_1$ is $\epsilon$-vanishing on $\XX$, coefficient-normalized, and $\langle h_1\rangle$ is an $\epsilon$-vanishing ideal on $\XX$. Again, neither of the vanishing ideals $\II(\mathbb{L})$ and $\II(\XX)$ is contained in the other. With regard to the previous example, we have $\II(\mathbb{L})\subsetneq \II(\mathbb{Y})$. Again, the points of $\XX$ can be seen as two perturbed samples of points in $\mathbb{L}$.

\item The polynomial $f=(x+y-2)/\sqrt{6}$ is coefficient normalized and vanishes exactly on $\XX$. It generates the vanishing ideal $\II(\mathbb{U})$ of the line $y=2-x$, which is strictly contained in the vanishing ideal $\II(\XX)$. The points of $\XX$ can be seen as two unperturbed samples of points in $\mathbb{U}$.

\item The polynomials $f_1=(x-1.1)/\sqrt{2.21}$ and $f_2=(x-0.9)/\sqrt{1.81}$ are coefficient-normalized with evaluations
\begin{align}
    \enorm{f_1(\XX)}=\frac{1}{\sqrt{2.21}}\enorm{\begin{pmatrix}
        -0.2 \\ 0
    \end{pmatrix}}=\frac{0.2}{\sqrt{2.21}}= 0.134\dots
    \quad\text{and}\quad
    \enorm{f_2(\XX)}=\frac{1}{\sqrt{1.81}}\enorm{\begin{pmatrix}
        0 \\ 0.2
    \end{pmatrix}}=\frac{0.2}{\sqrt{1.81}}=0.148\dots \ .
    \end{align}
    They generate an $0.15$-vanishing ideal $\II$ on $\XX$. Since $\sqrt{2.21}\cdot f_1-\sqrt{1.81}\cdot f_2=-0.2$, this ideal contains constant polynomials and equals the unit ideal. Nevertheless, a constant polynomial does not qualify as a generator of a vanishing ideal of a non-empty set. The coefficient normalized constant polynomial is $1$, which is not $0.15$-vanishing, since $\enorm{1(\XX)}=\sqrt{2}$. The points of $\XX$ can be seen as two samples from the parallel lines.
\end{enumerate}
\end{example}

Approximately vanishing ideals of $\XX$ can be either the unit ideal or a vanishing ideal of a not necessarily finite set, of which the points of $\XX$ can be considered finitely many, possibly perturbed samples. There are many more examples of approximately vanishing ideals on this set of two points. In particular, the generators can be chosen to be nonlinear polynomials.

Caution, being an $\epsilon$-vanishing ideal with respect to $\|\cdot\|$-normalization does not mean that each normalized polynomial is $\epsilon$-vanishing. For example, if the $\epsilon$-vanishing ideal is the unit ideal, the coefficient-normalized term 1 is not $\epsilon$-vanishing for sufficiently small $\epsilon$. The existence of an $\epsilon$-vanishing, $\|\cdot\|$-normalized generating set does not force the vanishing property on other normalized polynomials.

Next, we look at bases, i.e., generating sets of ideals that possess a certain structure.

\subsection{Exact and Approximate Border Bases}\label{subsec:border-bases}

We recall the definitions of an order ideal and an exact border basis, which will then be extended to our approximate setup.

\begin{definition}[\cite{kehrein2005algebraist}] \label{def:order-border-neighbor}
\begin{enumerate}[(i)]
\item A finite set of terms $\mathcal{O}\subset \TT$ is an \textbf{order ideal}, if the set is closed under divisors: if $t\in\mathcal{O}$ and $s\in \TT$ divides $t$, then $s\in \mathcal{O}$. 
\item The \textbf{border} of $\mathcal{O}$ is the set of all terms that are products of an order ideal term $t$ and an indeterminate $x_k$, and different from order terms, i.e., $\partial\mathcal{O} = \qty(\bigcup_{k=1}^n x_k\mathcal{O}) \setminus \mathcal{O}$. 
\item Two border terms $b_i, b_j \in \partial\mathcal{O}$ are \textbf{neighbors} if one is an indeterminate-multiple of the other, $b_i = x_kb_j$ or $b_j = x_kb_i$ (\textbf{next-door neighbors}), or, if they have a common indeterminate-multiple, $x_kb_i = x_\ell b_j$ (\textbf{across-the-street neighbors}). 
\end{enumerate}
\end{definition}

Given a polynomial ideal, we are interested in a finite set of generating polynomials that are supported by an order ideal and its border: each border term corresponds to a generator whose support consists of this border term and a subset of the order ideal.

\begin{definition}[\cite{kehrein2005algebraist}; restricted to {$\PP = \RR[x_1,\ldots, x_n]$}]\label{def:border-basis}
Let $\mathcal{O}=\{t_1,\dots,t_s\}$ be an order ideal with border $\partial\mathcal{O}=\{b_1,\dots,b_r\}$. Let $\mathcal{I}\subset \PP$ be a polynomial ideal. An \textbf{$\mathcal{O}$-border prebasis} $G=\{g_1,\dots,g_r\}\subset\mathcal{I}$ is a set of polynomials of the form 
\begin{align}
    g_i=c_i\, b_i - \sum_{1\le j\le s} c_{ij}\, t_j\ ,\quad c_i>0,\ c_{ij}\in \RR,\quad 1\le i\le r,\  1\le j\le s\ .
\end{align}
An alternative notation uses indexing by terms,
\begin{align}
    g_b=c_b\, b - \sum_{t\in\mathcal{O}} c_{b,t}\, t\ ,\quad c_b>0,\ c_{b,t}\in \RR,\ b\in\partial\mathcal{O}\ .
\end{align}
If the cosets $t_1+\mathcal{I},\dots, t_s+\mathcal{I}$ of the order ideal terms form a basis of the $\mathbb{R}$-vector space $\PP/\mathcal{I}$, then $G$ is an \textbf{$\mathcal{O}$-border basis} of $\mathcal{I}$. %
\end{definition}

In the commonly used algebraic definition, the border polynomials are ``normalized'' to a unit border coefficient $c_i=1$, which is nice for elimination. Since this is incompatible with our normalization needs, we have adapted the definition and allow positive border coefficients $c_i>0$. By excluding negative border coefficients, a normalized border polynomial is uniquely determined.

A helpful characterization is that a prebasis in $\mathcal{I}$ is a basis if and only if $\mathcal{I}\cap \langle \mathcal{O}\rangle_{\RR}=\{0\}$. In other words, in the basis case, the only polynomial in $\mathcal{I}$ which is supported by $\mathcal{O}$ is the zero polynomial. For this characterization and more, see~\citep{kehrein2005charactorizations}.

The defining property of a border basis about the quotient space $\PP/\mathcal{I}$ does not transfer nicely to approximately vanishing ideals. The main problem is that the algebraic structure of an ideal is closed under scalar multiples, so every approximately vanishing polynomial is accompanied by its non-vanishing multiples. Moreover, there is often a linear combination of the generators which produces the constant polynomial 1. The approximately vanishing ideal is often the unit ideal, $\PP=\mathcal{I}$, and $\PP/\mathcal{I}=\{0\}$. Therefore, we use a new definition that focuses on normalized polynomials.

\begin{definition}\label{def:our-approximate-border-basis}
    Let $\XX\subset\RR^n$ be a finite, non-empty set of points and $\epsilon\ge 0$ a tolerance. Let $\|\cdot\|$ be a semi-norm on $\PP$.
    An \textbf{$\epsilon$-approximate, $\|\cdot\|$-normalized $\mathcal{O}$-border basis for $\XX$} is an $\mathcal{O}$-border prebasis $G=\{g_1,\dots, g_r\}$ such that 
    \begin{enumerate}[(i)]
    \item each $g_i$ is exactly vanishing on $\XX$ or can be $\|\cdot\|$-normalized and then is $\epsilon$-vanishing on $\XX$, 
    \begin{align}
        \frac{1}{\|g_i\|}\cdot \enorm{g_i(\XX)}\le \epsilon\ ,
    \end{align}
    and
    \item all non-constant polynomials $f$ supported by $\mathcal{O}$ can be $\|\cdot\|$-normalized and then are not $\epsilon$-vanishing,
    \begin{equation*}
        \frac{1}{\|f\|}\cdot \enorm{f(\XX)}>\epsilon\ .
    \end{equation*}
    \end{enumerate}
\end{definition}
Condition (i) makes $\mathcal{I}=\langle g_1,\dots,g_r\rangle$ an $\epsilon$-vanishing ideal of $\XX$ by Definition~\ref{def:approximate-vanishing-ideal}. Condition (ii) transfers the linear independence of the term residue classes in $\PP/\mathcal{I}$ from exact border bases to the approximate setting. In the exact case, any non-trivial linear combination of terms in $\mathcal{O}$ must be non-zero in $\PP/\mathcal{I}$, i.e., it must not belong to the vanishing ideal. In the approximate setting, a non-constant, $\|\cdot\|$-normalized polynomial supported by $\mathcal{O}$ must not $\epsilon$-vanish. In particular, every non-constant order term $t\in\mathcal{O}$ must be $\|\cdot\|$-normalizable and must satisfy $\enorm{t(\XX)}/\|t\|>\epsilon$. The remaining case, a non-zero scalar multiple of the constant term, is considered not to vanish, not even approximately. 

Our Definition~\ref{def:our-approximate-border-basis} deviates from the classical one~\citep{heldt2009approximate}. We review next and then address some of the differences. The classical definition is based on the following combinations of border prebasis polynomials that eliminate the original border terms. 

\begin{definition}\label{def:S-polynomial}
Let $\mathcal{O}=\{t_1,\dots,t_s\}$ be an order ideal with border $\partial\mathcal{O}=\{b_1,\dots,b_r\}$ and let $G=\{g_1,\dots,g_r\}$ be an $\mathcal{O}$-border prebasis, 
\begin{align}
    g_i=c_i\, b_i+\sum_{j=1}^s c_{ij}\,t_j\quad\text{for}\ 1\le i\le r\ .
\end{align}
For neighboring border terms $b_i$ and $b_j$ (cf.~Definition~\ref{def:order-border-neighbor}) we consider the \textbf{S-polynomial} (syzygy)
    \begin{align}
    S(g_i,g_j) = \frac{\lcm{b_i,b_j}}{c_i\,b_i}g_i - \frac{\lcm{b_i,b_j}}{c_j\,b_j}g_j\ , 
    \end{align}
    where  $\lcm{b_i,b_j}$ denotes the least common multiple of $b_i$ and $b_j$.
\end{definition}

The least common multiple of two neighboring border terms $b_i$ and $b_j$ is either one of the border terms (next-door neighbors) or the common indeterminate multiple of the border terms (across-the-street neighbors). The S-polynomial is defined so that the multiples of the border terms cancel out. All other occurring terms in the basis polynomials are order ideal terms. In the construction of an S-polynomial, these order terms are multiplied by at most an indeterminate, so the support of the S-polynomial is in $\mathcal{O}\cup \partial \mathcal{O}$. By subtracting suitable scalar multiples of the border polynomials, we eliminate each border term. The result of this elimination is a polynomial supported by $\mathcal{O}$, which is the \textbf{normal remainder} of the S-polynomial. 

For an exact $\mathcal{O}$-border basis, all normal remainders of S-polynomials are zero since they are in $\mathcal{I}\cap \langle\mathcal{O}\rangle_{\RR}=\{0\}$. In other words, a non-zero polynomial $f$ supported by $\mathcal{O}$ must not be in the ideal $\mathcal{I}$. Since we consider an $\epsilon$-vanishing ideal on $\XX$, which may be the unit ideal, the proper transition according to Definition~\ref{def:our-approximate-border-basis} from the exact to the approximate case is: the $\|\cdot\|$-normalized multiple of $f$ must not be $\epsilon$-vanishing on $\XX$. 

Now, let us compare our definition of an approximate border basis with the classical one.

\begin{definition}[\cite{heldt2009approximate}] \label{def:classical-approximate-border-basis}
Let $\delta \ge 0$. An $\mathcal{O}$-border prebasis $G=\{g_1,\dots,g_\nu\}\subset P$ is a \textbf{$\delta$-approximate $\mathcal{O}$-border basis} of an ideal $\mathcal{I}$, if the border polynomials are border-coefficient normalized and if the normal remainders of all S-polynomials have a coefficient norm less than or equal to $\delta$. 
\end{definition}

This definition makes every border prebasis a $\delta$-approximate border basis, as long as $\delta$ is sufficiently large. Since the number of S-polynomials is finite, there is a maximum $\delta_0$ of the coefficient norms of their normal remainders. The border prebasis is $\delta$-approximate if and only if $\delta\ge \delta_0$. Thus, $\delta_0$ measures how much a prebasis fails to be a basis. 

The fundamental difference between the Definitions~\ref{def:our-approximate-border-basis} and~\ref{def:classical-approximate-border-basis} is that the latter does not refer to a point set $\XX$. While the former concept is motivated by approximately describing a point set $\XX$, the latter is about approximating the basis property.
At first glance, the two definitions seem to contradict one another as the former uses an upper bound and the latter a lower bound in connection with polynomials supported by $\mathcal{O}$. However, this seeming contradiction resolves because Definition~\ref{def:classical-approximate-border-basis} does not evaluate the polynomials. The upper bound is on the coefficient norm, i.e., it corresponds to the normalization, not to the evaluation. Definition~\ref{def:our-approximate-border-basis} is data-driven---it takes the point set $\XX$ into account---, while Definition~\ref{def:classical-approximate-border-basis} is data-independently formulated. Most importantly, our Definition~\ref{def:our-approximate-border-basis} reuses the prescribed tolerance $\epsilon$ instead of introducing the second tolerance $\delta$.

The next example shows differences between exactly vanishing and approximately vanishing ideals. It also illustrates the order ideal structure and the border polynomials. 

\begin{example}
Let $\mathbb{X}=\{(0.9, 1.1), (1.1, 0.9)\}$. 

\begin{enumerate}[(i)]
\item An exact border basis for the order ideal $\mathcal{O}=\{1, x\}$ with border $\partial \mathcal{O}=\{y,xy, x^2\}$ is given by
\begin{align}
    g_1=y+x-2,\quad g_2=xy+x^2-2x,\quad\text{and}\quad g_3=x^2-2x+0.99=(x-0.9)(x-1.1)\ .
\end{align}
Since $g_2=x\cdot g_1$ is a redundant generator, we have $\mathcal{I}=\langle g_1,g_3\rangle$. As the polynomials are exactly vanishing, normalization is irrelevant. 
The order ideal terms $1$ and $x$ are of course coefficient-normalized with the linearly independent evaluations
\begin{align}
    1(\mathbb{X})=\begin{pmatrix}
    1 \\ 1
    \end{pmatrix}
    \quad\text{and}\quad
    x(\mathbb{X})=\begin{pmatrix}
    0.9 \\ 1.1
    \end{pmatrix}\ ,
\end{align}
and the Euclidean norms $\enorm{1(\XX)}=\sqrt{2}$ and $\enorm{x(\XX)}=\sqrt{0.81+1.21}=\sqrt{2.02}$. The number of order ideal terms equals the number of prescribed points.

\item Now consider the coefficient-normalized polynomials 
\begin{align}
    h_1=\frac{1}{\sqrt{2}}(x-1)\quad\text{and}\quad h_2=\frac{1}{\sqrt{2}}(y-1)\ .
\end{align}
They generate the exact vanishing ideal $\mathcal{I}(\mathbb{Y})$ of the single point $\mathbb{Y}=\{(1,1)\}$ and constitute a border basis for the order ideal $\mathcal{O}=\{1\}$ with border $\partial\mathcal{O}=\{h_1,h_2\}$. The evaluation vector $1(\mathbb{Y})=(1,1)^{\intercal}$ is nonzero and therefore linearly independent. For the original point set $\XX$ we have
\begin{align}
    h_1(\mathbb{X})=\frac{1}{\sqrt{2}}\begin{pmatrix}
    -0.1 \\ \phantom{-}0.1
    \end{pmatrix}
    \quad\text{and}\quad 
    h_2(\mathbb{X})=\frac{1}{\sqrt{2}}\begin{pmatrix}
    \phantom{-}0.1 \\ -0.1
    \end{pmatrix}
    \quad\text{with}\quad
    \|h_1(\mathbb{X})\|_2=\|h_2(\mathbb{X})\|_2=\sqrt{\frac{0.02}{2}}=0.1 \ .
\end{align}
For any tolerance $\epsilon \ge 0.1$ the generators $h_1$ and $h_2$ are $\epsilon$-vanishing on $\XX$, coefficient-normalized, and $\langle h_1,h_2\rangle$ is an $\epsilon$-vanishing ideal of $\XX$. The number of order ideal terms is less than the number of prescribed points.

Any $\{1\}$-border prebasis consists of $g_x=C\cdot(x-a)$ and $g_y=D\cdot(y-b)$ with coefficients $C,D>0$. If $g_x$ and $g_y$ are $\epsilon$-vanishing on $\XX$ then the exactly vanishing ideal of the point $(a,b)$ is an $\epsilon$-vanishing ideal of $\XX$.
\item What about approximately vanishing generators that generate the unit ideal? Note $y-a$ and $y-b$ linearly combine to the constant polynomial $b-a$, but these polynomials cannot be part of border basis generators as they use the same border term. 

Instead consider the order ideal $\mathcal{O}=\{1,x\}$ with border $\{x^2,y,xy\}$. As $\XX$ consists of two points, there is the exactly vanishing quadratic polynomial $g_{x^2}=C\cdot(x-0.9)(x-1.1)$ with positive coefficient $C>0$. As $g_{x^2}$ vanishes exactly on $\XX$, its normalization is irrelevant. We construct an $\epsilon$-vanishing border basis that generates the unit ideal. Let $g_y=(y-1)/\sqrt{2}$. It is coefficient-normalized with $\enorm{g_y(\XX)}=0.1$. Consider the slightly offset multiple $g_{xy}=E\cdot \bigl(x(y-1)+0.1\bigr)$. For coefficient normalization, we need $E=1/\sqrt{2.01}$. Its evaluation is $\enorm{g_{xy}(\XX)}\approx 0.13$. For $\epsilon=0.14$ all polynomials are $\epsilon$-vanishing. They generate the unit ideal, since $\sqrt{2.01}g_{xy}-\sqrt{2}x\cdot g_y=0.1\cdot \sqrt{2.01}$ is constant. 
\end{enumerate}
\end{example}

These examples are typical of the general situation. In the exact case, the number of ideal order terms equals the number of prescribed points. In the approximate case, the number of ideal order terms may be smaller than the number of prescribed points. Here, this happens because some prescribed points can be considered as different perturbations of a common zero of the approximately vanishing ideal. 

Alhough we concentrate our attention on border bases, Example~\ref{ex:not-an-order-ideal} below shows that the so-called ABM algorithm need not always produce an order ideal. However, it always produces a set of terms that is connected to the constant term 1 in the following meaning.

\begin{definition}[\cite{elkadi2005symbolic}] A set of terms, $\mathcal{C}\subset \TT$, is \textbf{connected to 1}, if $1\in \mathcal{C}$ and if for each term $1\ne t\in \mathcal{C}$ there are an indeterminate $x_i$ and another term $\tilde{t}\in \mathcal{C}$ such that $t=x_i\cdot \tilde{t}$. For convenience, we refer to the members of a set connected to 1 as \textbf{inner terms}.
\end{definition}

Every order ideal is connected to 1, while the converse is false. The term set $\{1,x,xy\}$ is connected to 1, but not an order ideal. Whenever we want to go beyond the order ideal setting, we still use the symbol $\mathcal{O}$, but call the elements inner terms instead of order ideal terms. All relevant concepts above generalize from order ideals to sets of inner terms, which are connected to 1.

\section{The Gradient-Weighted Semi-Norm of Polynomials}\label{sec:gradient-weighted-semi-norm}

Let $\XX=\{p_1,\dots,p_m\}\subset\RR^n$ be a non-empty, finite set of points. With respect to $\XX$ we define the gradient-weighted semi-norm for polynomials in two stages: first, for terms, then, for polynomials. 

\begin{definition}\label{def:gradient-semi-norm-of-terms}
For $\XX$, the \textbf{gradient-weighted semi-norm} of a term $t\in \TT$ is 
\begin{align}
    \gwnorm{t} =\begin{cases}
        \frac{1}{D(t)}\cdot\enorm{\nabla t(\XX)} & \text{if } t\ne 1\ ,\\
        0 & \text{if }t=1\ .
    \end{cases}
\end{align}
The scalar $D(t)=(\sum_{i=1}^n \deg_i(t)^2)^{1/2}$ denotes the \textbf{Euclidean degree} of $t$. 
\end{definition}

The term-dependent scaling by the Euclidean degree $D(t)$ in the denominator leads to more convenient inequalities later in this paper. In more detail, the gradient-weighted semi-norm is
\begin{align}
    \frac{1}{D(t)}\Bigl(\sum_{j=1}^m \enorm{\nabla t(p_j)}^2\Bigr)^{1/2}
    =\frac{1}{D(t)}\sqrt{\sum_{j=1}^m \sum_{i=1}^n \;\bigl|\frac{\partial t}{\partial x_i}(p_j)\bigr|^2}
    =\frac{1}{D(t)}\left(\sum_{i=1}^n \enorm{\frac{\partial t}{\partial x_i}(\XX)}^2\right)^{1/2}\ .
\end{align}

For example, the gradient-weighted semi-norms of the indeterminates depend only on the number of points in $\XX$, 
\begin{align}
    \gwnorm{x_k}=\frac{1}{D(x_k)}\sqrt{\sum_{p\in\XX} \sum_{i=1}^n \;\bigl|\frac{\partial x_k}{\partial x_i}(p)\bigr|^2}=\frac{1}{1}\sqrt{\sum_{p\in \XX} 1}=\sqrt{|\mathbb{X}|}\ .
\end{align}
From degree two on, the semi-norms depend on the coordinates of the points in $\XX$.

\begin{example}
Let $\mathbb{X}=\{(0,1),(0,3)\}$. Independent of $\XX$ we have $\gwnorm{1}=0$ and $\gwnorm{x}=\gwnorm{y}=2$. For $\ell\ge 2$ we compute
    \begin{align}
        \gwnorm{y^\ell}=\frac{1}{\ell}\enorm{\begin{pmatrix} 0(p_1)\\ \ell y^{\ell-1}(p_1)\\0(p_2) \\ \ell y^{\ell-1}(p_2)\end{pmatrix}}=\frac{1}{\ell}\enorm{\begin{pmatrix}
            0\\ \ell\\ 0\\ \ell 3^{\ell-1}
        \end{pmatrix}}=\sqrt{1+9^{\ell-1}}\ .
    \end{align}
    For $\ell\ge 1$,
    \begin{align}
    \gwnorm{x y^\ell}=\frac{1}{\sqrt{1+\ell^2}}\enorm{\begin{pmatrix} y^\ell(p_1)\\ \ell x y^{\ell-1}(p_1)\\ y^\ell(p_2)\\ \ell x y^{\ell-1}(p_2)\end{pmatrix}}=\frac{1}{\sqrt{1+\ell^2}}\enorm{\begin{pmatrix}
            1\\ 0\\ 3^\ell\\ 0
        \end{pmatrix}}=\frac{\sqrt{1+9^{\ell}}}{\sqrt{1+\ell^2}}
    \end{align}
    and, for $k\ge 2$ and $\ell\ge 1$,
    \begin{align}
        \gwnorm{x^k y^\ell}=\frac{1}{\sqrt{k^2+\ell^2}}\enorm{\begin{pmatrix} kx^{k-1}y^\ell(p_1)\\ \ell x^k y^{\ell-1}(p_1)\\ kx^{k-1}y^\ell(p_2)\\ \ell x^k y^{\ell-1}(p_2)\end{pmatrix}}=\frac{1}{2}\enorm{\begin{pmatrix}
            0\\ 0\\ 0\\ 0
        \end{pmatrix}}=0\ , \quad\gwnorm{x^2}=\frac{1}{2}\enorm{\begin{pmatrix} 2x(p_1)\\ 0(p_1)\\2x(p_2)\\ 0(p_2)\end{pmatrix}}=0
    \end{align}
    The gradient-weighted semi-norm of some terms is zero.
\end{example}

Next, we extend the definition of the gradient-weighted semi-norm to a polynomial. The gradient-weighted semi-norm is a weighted version of the coefficient norm whose weights are the gradient-weighted semi-norms of the terms.

\begin{definition}
The \textbf{gradient-weighted semi-norm} of a polynomial $h = \sum_{t\in \TT} c_t\, t$ with coefficients  $c_t\in\RR$ is defined by 
\begin{align}\label{eq:gradient-weighted-norm}
    \gwnorm{h} = \Bigl(\sum\nolimits_{\,t\in\TT} c_t^2\, \gwnorm{t}^2\Bigr)^{1/2}
    \ .
\end{align}
If $\gwnorm{h}=1$, then $h$ is \textbf{gradient-weighted normalized}.
\end{definition}

\begin{proposition}
The gradient-weighted semi-norm is indeed a semi-norm on $\PP$. For all scalars $\alpha$ and all polynomials $f,g$ we have
\begin{align}
    (i)\quad \gwnorm{f}\ge 0\ ,\quad (ii)\quad \gwnorm{\alpha f}=|\alpha|\cdot\gwnorm{f}\ ,\quad \text{and}\quad
    (iii)\quad \gwnorm{f+g}\le \gwnorm{f}+\gwnorm{g}\ .
\end{align}
The gradient-weighted semi-norm of a polynomial is zero if and only if all terms in its support have zero gradient-weighted semi-norm.
\end{proposition}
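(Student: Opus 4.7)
The plan is to recognize that the gradient-weighted semi-norm is the pullback of a standard (finite) Euclidean norm under a linear map, so the three semi-norm axioms follow mechanically once this observation is made. Concretely, for each polynomial $h=\sum_{t\in\TT}c_t t\in\PP$, let $\Phi(h)$ denote the finitely-supported real sequence indexed by $\TT$ with entries $c_t\,\gwnorm{t}$. The map $\Phi\colon\PP\to\bigoplus_{t\in\TT}\RR$ is $\RR$-linear, because the assignment $h\mapsto c_t$ is linear for each fixed $t$ and $\gwnorm{t}$ does not depend on $h$. By the very definition in~\eqref{eq:gradient-weighted-norm}, $\gwnorm{h}$ equals the ordinary Euclidean norm of the finite tuple $\Phi(h)$.

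From this viewpoint, (i) is immediate since $\gwnorm{h}$ is a square root of a sum of non-negative terms. For (ii), linearity of $\Phi$ gives $\Phi(\alpha f)=\alpha\,\Phi(f)$, so pulling $\alpha^{2}$ out of the sum yields $\gwnorm{\alpha f}=|\alpha|\cdot\gwnorm{f}$. For (iii), the triangle inequality follows from Minkowski's inequality applied to the two finitely-supported sequences $\Phi(f)$ and $\Phi(g)$:
\begin{align}
    \gwnorm{f+g}^{2} \ =\ \sum_{t\in\TT}\bigl(c_t^{(f)}+c_t^{(g)}\bigr)^{2}\gwnorm{t}^{2}
    \ =\ \enorm{\Phi(f)+\Phi(g)}^{2}\ \le\ \bigl(\enorm{\Phi(f)}+\enorm{\Phi(g)}\bigr)^{2}\ =\ \bigl(\gwnorm{f}+\gwnorm{g}\bigr)^{2}.
\end{align}
Taking square roots gives the desired estimate. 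There is no real obstacle here; the only non-trivial ingredient is invoking Minkowski (equivalently, Cauchy--Schwarz) on the weighted coefficient vectors, and because each polynomial has finite support the sums are honest finite sums, so no convergence issues arise.

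For the final characterization, the representation $\gwnorm{h}^{2}=\sum_{t\in\TT}c_t^{2}\,\gwnorm{t}^{2}$ is a sum of non-negative reals, hence vanishes if and only if each summand vanishes. The summand indexed by $t$ vanishes exactly when $c_t=0$ or $\gwnorm{t}=0$, so the condition $\gwnorm{h}=0$ is equivalent to $\gwnorm{t}=0$ for every term $t$ with $c_t\ne 0$, i.e., for every $t\in\supp{h}$. This is precisely the claimed equivalence, and it also clarifies why $\gwnorm{\cdot}$ is genuinely only a semi-norm: terms with vanishing gradient at all points of $\XX$ (for example, the constant term $1$) contribute nothing, so nonzero polynomials supported entirely on such terms still have gradient-weighted semi-norm zero.
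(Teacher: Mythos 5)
Your proof is correct and mathematically equivalent to the paper's: both treat $\gwnorm{\cdot}$ as a weighted $\ell^2$ norm on the coefficient vector, and the triangle inequality reduces to the same inequality on finitely-supported sequences. The only difference is presentational—you package the weighting as a linear map $\Phi$ and cite Minkowski, whereas the paper expands $\gwnorm{f+g}^2$ and invokes Cauchy--Schwarz on the cross-term, which is exactly the inline proof of Minkowski.
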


\begin{proof}
(i) and (ii) follow immediately from the definition. To prove (iii) let
$f, g \in \PP$ and $f = \sum_{t\in\TT}f_t\, t$ and $g=\sum_{t\in\TT}g_t\, t$ with coefficients $f_t, g_t\in\RR$ of which only finitely many are nonzero. 
We have
\begin{align}
    \gwnorm{f + g}^2
    &= \sum_{t\ne 1}(f_t + g_t)^2\gwnorm{t}^2\\
    &= \sum_{t\ne 1} f_t^2 \gwnorm{t}^2 + \sum_{t\ne 1} g_t^2\gwnorm{t}^2 + 2\sum_{t\ne 1} f_t\gwnorm{t}\cdot g_t \gwnorm{t} \\ 
    &\le \gwnorm{f}^2 + \gwnorm{g}^2 + 2 \sqrt{\sum_{t\ne 1}f_t^2\gwnorm{t}^2}\cdot\sqrt{\sum_{t\ne 1}\gwnorm{t}^2},\qquad\text{(Cauchy-Schwarz)}\\
    &= (\gwnorm{f}+ \gwnorm{g})^2.
\end{align}
The characterization of the non-zero semi-norm value follows immediately from the definition that uses a linear combination of the gradient-weighted semi-norms of the terms with non-zero coefficients.
\end{proof}

Non-zero constant polynomials have zero gradient-weighted semi-norm, but we generally do not consider them as vanishing polynomials. The gradient-weighted semi-norm of a polynomial that includes a linear term is non-zero regardless of the points in $\XX$. If a non-constant polynomial does not contain any linear term, then its gradient-weighted semi-norm may be zero. But this requires in each point of $\mathbb{X}$ some zero coordinate:

\begin{example}\label{example:zero-gcnorm}
Consider the set of points $\XX = \{(1, 0), (0, 1)\}$ in $\RR^2$ and the polynomial
$f = x^2y^2 - 1$ in $\RR[x,y]$. Its gradient-weighted semi-norm is 
\begin{align*}
    \gwnorm{f}&=\sqrt{\frac{1}{2^2+2^2}\gwnorm{x^2y^2}^2}
    =\sqrt{\frac{1}{8}\sum_{p\in\XX} \bigl(|2xy^2(p)|^2+|2x^2y(p)|^2\bigr)}=0
\end{align*}
It is no coincidence that the constant-fee part of $f$, namely $\tilde f=x^2y^2$, vanishes exactly on $\XX$. See Proposition~\ref{prop:zero-gwnorm}.
\end{example}

Thus, depending on $\XX$ there may be non-constant polynomials with zero gradient-weighted semi-norm. However, such polynomials will not occur in our computations to come. The technical foundation for this follows now.

\begin{proposition} %
\label{prop:zero-gwnorm}
Let $\XX=\{p_1,\dots,p_m\}\subset\RR^n$ be a finite, non-empty set.
\begin{enumerate}[(i)]
    \item If a non-constant term has zero gradient-weighted semi-norm, it vanishes exactly on $\XX$,
    \begin{align}
    1\ne t\in \TT\quad\text{with}\quad\gwnorm{t}=0\quad\text{implies}\quad t(\XX)=0\ .
    \end{align}
    \item If a non-constant polynomial $f=\sum_{t\in\TT} c_t\, t$ has zero gradient-weighted semi-norm, then its constant-free part $\tilde f=\sum_{t\ne 1} c_t\, t$ vanishes on $\XX$. 
\end{enumerate}
\end{proposition}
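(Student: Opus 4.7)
The plan is to derive part (ii) as an immediate corollary of part (i), so the substantive work lies in part (i). For part (i), I would first unpack $\gwnorm{t}=0$ via Definition~\ref{def:gradient-semi-norm-of-terms}: because the Euclidean degree $D(t)$ is positive for any non-constant term, the hypothesis is equivalent to $\nabla t(p_j)=0$ at every point $p_j\in\XX$. The key observation is then Euler's identity for the homogeneous polynomial $t$ of total degree $d=\degree{t}\geq 1$, namely $d\cdot t=\sum_{i=1}^n x_i\cdot \pdiv{t}{x_i}$ as an equality of polynomials. Evaluating both sides at an arbitrary $p_j$ and using $\nabla t(p_j)=0$ kills every summand on the right-hand side, yielding $d\cdot t(p_j)=0$ and hence $t(p_j)=0$, which is the claim for part (i).

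For part (ii), the expansion in \eqref{eq:gradient-weighted-norm} writes $\gwnorm{f}^2$ as a sum of non-negative reals $c_t^2\gwnorm{t}^2$ over $t\neq 1$. Setting this sum to zero forces each summand to vanish, so for every non-constant term $t\in\supp{f}$ we have $\gwnorm{t}=0$. Part (i) then yields $t(\XX)=0$ for each such $t$, and linearity of the evaluation map produces $\tilde{f}(\XX)=\sum_{t\neq 1}c_t\cdot t(\XX)=0$.

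I do not anticipate a real obstacle here: the proof is essentially a one-line appeal to Euler's identity followed by a routine coefficient-by-coefficient decomposition. The only subtlety worth noting is the degree-one case of part (i), where Euler's relation degenerates to a tautology; but $\gwnorm{x_k}=\sqrt{|\XX|}>0$ for any non-empty $\XX$, so the hypothesis is in fact vacuous on linear terms and the Euler argument is substantively required only for $d\geq 2$.
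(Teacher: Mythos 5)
Your proof is correct and follows essentially the same route as the paper's. The one cosmetic difference is in part (i): you invoke the full Euler identity $d\cdot t=\sum_{i=1}^n x_i\,\partial t/\partial x_i$, whereas the paper picks a single index $i$ with $\degreek{i}{t}=e_i>0$ (which exists since $t\ne 1$) and uses the one-variable scaling $e_i\cdot t=x_i\cdot\partial t/\partial x_i$, so that vanishing of just that one partial derivative on $\XX$ already forces $t(\XX)=0$. Both follow immediately from $\nabla t(\XX)=0$, so the difference is negligible; your closing remark about degree-one terms is accurate but inessential, since the Euler identity (or the paper's single-partial version) remains valid for $d=1$ --- the hypothesis simply never holds there. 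Part (ii) is identical to the paper's argument.
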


\begin{proof}
\begin{enumerate} [(i)]
\item Let $t=x_1^{e_1}\cdots x_n^{e_n}$ with $e_i>0$ for some $1\le i\le n$. By assumption $\enorm{\nabla t(\XX)}/D(t)=\gwnorm{t}=0$, so $\partial t/\partial x_i(\XX)=0$, i.e., $\partial t/\partial x_i(p_j)=0$ for all $p_j$. Then, $t(p_j)=(e_i^{-1}\cdot x_i\cdot \partial t/\partial x_i)(p_j)=0$ for all $p_j$ and, therefore, $t(\XX)=0$.
\item By assumption $0=\gwnorm{f}=(\sum_{t\ne 1} c_t^2/D(t)^2\cdot \gwnorm{t})^{1/2}$. Thus, each non-constant term $t$ occurring in $f$ has zero gradient-weighted semi-norm. By (i), $t(\XX)=0$ for all these terms, and the result follows.
\end{enumerate}
\end{proof}

Normalization is needed for approximately vanishing polynomials. The use of a semi-norm instead of a positive definite norm possesses the problem that some non-zero polynomials cannot be normalized. However, Theorem~\ref{thm:GWN_as_GEP} below uses Proposition~\ref{prop:zero-gwnorm} to show that all terms and polynomials considered by our computation algorithm are gradient-weighted normalizable.

\section{Approximate Computation of Border Bases with Gradient-Weighted Normalization}
\label{sec:implementing-gradient-normalization}

This section illustrates how gradient-weighted normalization is implemented into existing approximate border basis computation algorithms for $\epsilon$-vanishing ideals. Almost all these computation algorithms rely on solving eigenvalue problems or on computing singular value decompositions (SVD). We implement gradient-weighted normalization by replacing the eigenvalue problem or the SVD with a generalized eigenvalue problem. Section \ref{sec:GWN-as-GEP} discusses this generalized eigenvalue problem.

As a concrete example, Section \ref{sec:AMB-with-gwn} implements gradient-weighted normalization into the approximate Buchberger M\"oller (ABM) algorithm that is described by \cite{limbeck2013computation} and which aims at computing an approximate border basis. We name the adapted algorithm ABM+GWN (gradient weighted normalization). The algorithm determines step-by-step (i) a set of non-$\epsilon$-vanishing terms and (ii) a set of $\epsilon$-vanishing polynomials on $\XX$. We prove its efficacy. \cite{limbeck2013computation} mentions that the ABM algorithm sometimes fails to provide an order ideal but at least outputs a term set connected to 1. Section~\ref{subsec:counterexample} supplies an example of this shortcoming. Alhough we also aim at approximate border bases, the case of a non-order ideal still provides a useful basis of an approximately vanishing ideal.

Section~\ref{sec:gradient-weighted-features} uses the ABM+GWN algorithm to prove the features added by gradient-weighted normalization.

\subsection{The Generalized Eigenvalue Problem of Gradient-Weighted Normalization}\label{sec:GWN-as-GEP}

This subsection analyzes the key computational tool of the ABM+GWN algorithm: solving a generalized eigenvalue problem, whose main purpose is obtaining the solution of the following minimization problem with constraints.

\medskip
\noindent\textbf{Input:} A finite, non-empty set of points $\XX=\{p_1,\dots,p_m\}\subset \RR^n$ and a tolerance $\epsilon\ge 0$ a tolerance. A set of terms $\mathcal{O}=\{t_1,\dots,t_s\}\subset\TT$ so that all gradient-weighted normalized polynomials supported by $\mathcal{O}$ are non-$\epsilon$-vanishing. (Often---but not always---$\mathcal{O}$ is an order ideal.) Moreover, let $b\in\TT\setminus\mathcal{O}$ be a \textbf{trial term}. 

\smallskip
\noindent\textbf{Output:} Either all gradient-weighted normalized polynomials supported by the augmented set $\mathcal{O}\cup\{b\}$ are non-$\epsilon$-vanishing, or there is a gradient-weighted normalized, $\epsilon$-vanishing polynomial $g_{\min}$ with minimal evaluation on $\XX$,
\begin{align}
    g_{\min}=\arg\min\{\,\enorm{g(\XX)}\mid g=v_1 t_1+\dots +v_s t_s+ v b\ \text{with}\  v>0,\ v_1,\dots, v_s\in \RR, \text{and}\ \gwnorm{g}=1\}\ .
\end{align}

All details, in particular, the questions of which polynomials can be gradient-weighted normalized and how to deal with those that cannot, are answered in Theorem~\ref{thm:GWN_as_GEP}.

Now, we translate the constrained minimization problem into an equivalent generalized eigenvalue problem. The evaluation of a polynomial is written as the matrix-vector product
\begin{align}
    g(\XX)=
    \begin{pmatrix}
    g(p_1)\\
    g(p_2)\\
    \vdots\\
    g(p_m)\\
    \end{pmatrix}
    =
    \begin{pmatrix}
    t_1(p_1) & \dots & t_s(p_1) & b(p_1) \\
    t_1(p_2) & \dots & t_s(p_2) & b(p_2)\\
    \vdots && \vdots & \vdots\\
    t_1(p_m) & \dots & t_s(p_m) & b(p_m)\\
    \end{pmatrix}
    \begin{pmatrix}
        v_0 \\ \vdots \\ v_s\\ v
    \end{pmatrix}
    =\bigl(\mathcal{O}(\XX) \ b(\XX) \bigr)\mathbf{v}
    = M \mathbf{v}
\end{align}
and its squared Euclidean norm is the quadratic form
\begin{align}
    0\le\enorm{g(\XX)}^2=\enorm{M\mathbf{v}}^2=\mathbf{v}^\intercal M^\intercal M \mathbf{v}\ .
\end{align}

Similarly, the squared constraint $\gwnorm{g}^2=1$ turns into the quadratic form equation
\begin{align}
    \mathbf{v}^\intercal D^\intercal D \mathbf{v}=1 \quad\text{with the diagonal matrix}\quad D=\begin{pmatrix}
        \gwnorm{t_1} & 0 & \dots & 0 & 0\\
        0 & \gwnorm{t_2} & & 0& 0\\
        \vdots & & \ddots &  &\vdots\\
        0 & 0&&\gwnorm{t_s} & 0\\
        0 & 0& \dots & 0 & \gwnorm{b}
    \end{pmatrix}\ .
\end{align}

We solve the equivalent minimization problem of the squared quantities,
\begin{align}
    \min \bigl\{\,\|g(\XX)\|_2^2\mid g\in\PP, \text{supp}(g)\subseteq \mathcal{O}\cup \{b\}\bigr\} \quad\text{constrained by}\quad\gwnorm{g}^2=1\ ,
\end{align}
using Lagrange multipliers. Thus, we have to find the critical points of 
\begin{align}
    \mathcal{L}(\mathbf{v}, \lambda)=\|g(\XX)\|_2^2-\lambda\cdot (\gwnorm{g}^2-1)
    =\mathbf{v}^\intercal
    M^\intercal M
    \mathbf{v}
    -\lambda
    \mathbf{v}^\intercal
    D^\intercal D
    \mathbf{v}
    +\lambda\ .
\end{align}
The partial derivatives with respect to the components of $\mathbf{v}$ lead to the generalized eigenvalue problem
\begin{align}\label{eq:generalizedEigenproblem}
    M^\intercal M\mathbf{v}=\lambda D^2\mathbf{v}
\end{align}
while the partial derivative with respect to the Lagrange multiplier~$\lambda$ produces the gradient-weighted normalization condition 
\begin{align}\label{eq:constraint}
1=\mathbf{v}^\intercal D^2\mathbf{v}=\gwnorm{v_1t_1+\dots+v_st_s+v b}^2\ .
\end{align}
We multiply equation \eqref{eq:generalizedEigenproblem} by $\mathbf{v}^\intercal$ from the left and use equation \eqref{eq:constraint} to obtain
\begin{align}
    \enorm{g(\XX)}^2=\mathbf{v}^\intercal M^\intercal M\mathbf{v}=\lambda\ .
\end{align}
Therefore the minimization problem is solved by the minimal eigenvalue $\lambda_{\min}$ and a gradient-weighted normalized eigenvector $\mathbf{v}_{\min}$. 

This is the big picture. Now we consider the technical details.

\begin{theorem} \label{thm:GWN_as_GEP}
Let $\XX$ be a non-empty, finite set of points, $\mathcal{O}=\{t_1,\dots,t_s\}$ a set of terms, and $b\notin \mathcal{O}$ another term. These quantities define the generalized eigenvalue problem
\begin{align}
    M^\intercal M\mathbf{v}=\lambda D^2\mathbf{v}\ ,
\end{align}
where $M=\bigl(\mathcal{O}(\XX)\ b(\XX)\bigr)$ is the evaluation matrix, $D$ is the positive semi-definite diagonal matrix with diagonal entries $\bigl(\gwnorm{t_1},\dots,\gwnorm{t_s},\gwnorm{b} \bigr)$, and
\begin{align}
    g=v_1 t_1+\dots+v_s t_s+v b \quad\text{corresponds to}\quad \mathbf{v}=(v_1,\dots, v_s, v)^\intercal\ .
\end{align}
\begin{enumerate}[(i)]
\item \textbf{(Exactly Vanishing.)} The generalized eigenvalue problem possesses the generalized eigenvalue zero, i.e., the matrix $M^\intercal M$ is positive semi-definite, but not positive definite, if and only if there exists a non-zero, exactly vanishing polynomial on $\XX$ that is supported by $\mathcal{O}\cup \{b\}$.

\item \textbf{(Minimal Non-Exactly Vanishing.)} If there is no non-zero exactly vanishing polynomial supported by $\mathcal{O}\cup\{b\}$ then all non-constant polynomials supported by $\mathcal{O}\cup \{b\}$ can be gradient-weighted normalized. In that case the generalized eigenvalue problem corresponds to the minimization problem
\begin{align}
    \min_g \bigl\{\,\|g(\XX)\|_2\mid \text{supp}(g)\subseteq \mathcal{O}\cup \{b\}\bigr\} \quad\text{constrained by}\quad\gwnorm{g}=1\ .
\end{align}
Its minimum $\sqrt{\lambda_{\min}}>0$ corresponds to a minimal solution $(\lambda_{\min}, \mathbf{v}_{\min})$ of the generalized eigenvalue problem,
\begin{align}
    \sqrt{\lambda_{\min}}=\enorm{v_{{\min},1} t_1+\dots+v_{{\min},s} t_s+v_{{\min}} b}
    \quad\text{with gradient-weighted normalization}\quad \mathbf{v}_{\min}^\intercal D^2\mathbf{v}_{\min}=1\ .
\end{align}

\item \textbf{($\mathbf{\epsilon}$-Vanishing.)} Further, let $\epsilon\ge 0$ be a prescribed tolerance and assume that all gradient-weighted normalized polynomials supported by $\mathcal{O}$ are not $\epsilon$-vanishing. If $\sqrt{\lambda_{\min}}\le\epsilon$, then the gradient-weighted normalized $\mathbf{v}_{\min}$ is uniquely determined up to sign, and the coefficient $v$ is non-zero. We choose the sign that makes the coefficient $v$ of the term $b$ positive. The polynomial $g_b=v_{{\min},1} t_1+\dots+v_{{\min},s} t_s+v_{{\min}} b$ is gradient-weighted normalized and $\epsilon$-vanishing on $\XX$.
\end{enumerate}
\end{theorem}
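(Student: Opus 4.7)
The plan is to handle the three parts in order, using the identification of coefficient vectors $\mathbf{v}$ with polynomials $g = v_1 t_1 + \dots + v_s t_s + v b$ and relying on the Lagrangian derivation that has already been carried out in the paragraphs preceding the theorem.

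For part (i), the observation is that $M^\intercal M$ is always positive semi-definite, since $\mathbf{v}^\intercal M^\intercal M \mathbf{v} = \enorm{M\mathbf{v}}^2 = \enorm{g(\XX)}^2 \ge 0$ for the polynomial $g$ corresponding to $\mathbf{v}$. It fails to be positive definite precisely when $M\mathbf{v} = 0$ for some nonzero $\mathbf{v}$, which is equivalent to a nonzero polynomial supported by $\mathcal{O} \cup \{b\}$ vanishing exactly on $\XX$. The equivalence with the generalized eigenvalue zero follows because, taking $\lambda = 0$, the relation $M^\intercal M \mathbf{v} = \lambda D^2 \mathbf{v}$ reduces to $M^\intercal M \mathbf{v} = 0$, which for positive semi-definite $M^\intercal M$ is the same as $M \mathbf{v} = 0$.

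For part (ii), normalizability of every non-constant polynomial supported by $\mathcal{O} \cup \{b\}$ follows from Proposition~\ref{prop:zero-gwnorm}(ii) by contrapositive: a non-constant $g$ with $\gwnorm{g} = 0$ would have a nonzero constant-free part that exactly vanishes on $\XX$, violating the (ii) hypothesis. Likewise Proposition~\ref{prop:zero-gwnorm}(i) forces every non-constant term in $\mathcal{O} \cup \{b\}$ to have strictly positive gradient-weighted seminorm under this hypothesis, so $\ker D$ is at most the one-dimensional span of the constant term (when $1 \in \mathcal{O}$). Strict positivity $\lambda_{\min} > 0$ is immediate from (i), since the (ii) hypothesis forbids the zero eigenvalue. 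The equivalence of the constrained minimization with the generalized eigenvalue problem has already been carried out via Lagrange multipliers. For existence of the minimum when $D$ is singular, decompose $\mathbf{v} = \mathbf{v}_\perp + \mathbf{v}_0$ along $(\ker D)^\perp \oplus \ker D$: the constraint is compact in $\mathbf{v}_\perp$, while for each fixed $\mathbf{v}_\perp$ the objective is a positive definite quadratic in $\mathbf{v}_0$ (because $M$ has trivial kernel, hence is injective on $\ker D$), so the inner minimum is attained and depends continuously on $\mathbf{v}_\perp$.

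For part (iii), I would first show that the $b$-coefficient $v$ of any minimizer is nonzero. If $v = 0$, then $g_{\min}$ is gradient-weighted normalized, supported by $\mathcal{O}$, and satisfies $\enorm{g_{\min}(\XX)} = \sqrt{\lambda_{\min}} \le \epsilon$, contradicting the (iii) hypothesis. For uniqueness up to sign, the decisive structural fact is that the $\lambda_{\min}$-eigenspace $\ker(M^\intercal M - \lambda_{\min} D^2)$ is a linear subspace. If its dimension were $\ge 2$, I would pick two linearly independent eigenvectors $\mathbf{u}, \mathbf{w}$ with $b$-coefficients $u, w$ and form $\mathbf{z} = w\mathbf{u} - u\mathbf{w}$, a nonzero eigenvector whose $b$-coefficient vanishes, so its polynomial $g$ is supported by $\mathcal{O}$. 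Multiplying $M^\intercal M \mathbf{z} = \lambda_{\min} D^2 \mathbf{z}$ by $\mathbf{z}^\intercal$ gives $\enorm{g(\XX)}^2 = \lambda_{\min} \gwnorm{g}^2$. If $\gwnorm{g} > 0$, normalizing $g$ yields a gradient-weighted normalized polynomial supported by $\mathcal{O}$ with algebraic distance $\sqrt{\lambda_{\min}} \le \epsilon$, contradicting the (iii) hypothesis. If $\gwnorm{g} = 0$, then $\enorm{g(\XX)} = 0$, so $g$ is a nonzero exactly vanishing polynomial supported by $\mathcal{O} \cup \{b\}$, contradicting the (ii) hypothesis. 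Hence $\mathbf{v}_{\min}$ is unique up to sign; fixing the sign so $v > 0$, the resulting $g_b$ is gradient-weighted normalized by construction and $\epsilon$-vanishing because $\enorm{g_b(\XX)} = \sqrt{\lambda_{\min}} \le \epsilon$. The main obstacle I expect is exactly this uniqueness argument, which must combine the linear-subspace structure of the eigenspace with Proposition~\ref{prop:zero-gwnorm} to dispose of the degenerate seminorm case.
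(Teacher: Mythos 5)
Your proof follows essentially the same route as the paper's: parts (i) and (ii) are identical in substance, and part (iii) establishes one-dimensionality of the $\lambda_{\min}$-eigenspace by the same device of eliminating the $b$-coefficient via a linear combination of two eigenvectors and deriving a contradiction with the hypothesis on $\mathcal{O}$. The two pieces you add---the decomposition along $\ker D$ showing the minimum in (ii) is actually attained despite $D$ being only positive semi-definite, and the explicit case split on $\gwnorm{g}=0$ in (iii)---are valid refinements that close small gaps the paper leaves implicit without changing the strategy.
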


\begin{proof}
The diagonal matrix $D$ has non-negative diagonal elements, so it is positive semi-definite.
\begin{enumerate}[(i)]
    \item The symmetric matrix $M^\intercal M$ is positive semi-definite because of $0\le \enorm{g(\XX)}^2=\mathbf{v}^\intercal M^\intercal M\mathbf{v}$. This is the left-hand side of the generalized eigenvalue problem
    \begin{align}\label{eq:evalution-by-eigenvalue}
        \mathbf{v}^\intercal M^\intercal M\mathbf{v} =\lambda\cdot \mathbf{v}^\intercal D^2\mathbf{v}\ .
    \end{align} 
    If there is a generalized eigenvalue $\lambda=0$ then there is a generalized eigenvector $\mathbf{v}\ne \mathbf{0}$ that satisfies $0=\mathbf{v}^\intercal M^\intercal M\mathbf{v}=\enorm{g(\XX)}$.
    Conversely, if $g\ne 0$ vanishes exactly on $\XX$, i.e., $\enorm{g(\XX)}=0$, the coefficient vector $\mathbf{v}$ of $g$ is non-zero, and the generalized eigenvalue equation is satisfied with $\lambda=0$.
    \item Assume that there is a non-constant polynomial $g$ supported by $\mathcal{O}\cup\{b\}$ that cannot be gradient-weighted normalized, i.e., $\gwnorm{g}=0$. By Proposition~\ref{prop:zero-gwnorm}, the constant-free part of $g$ is non-zero and vanishes exactly on $\XX$, which contradicts the assumption of the case (ii).
    
    The correspondence between the minimization problem and the generalized eigenvalue problem has been derived in the text preceding this theorem. Since the generalized eigenvalue corresponds to the squared quantities of the minimization problem, the evaluation of the minimal polynomial is the square root of $\lambda_{\min}$.
    
    \item  We only have to show that, under the additional assumptions, the generalized eigenspace of the minimal generalized eigenvalue is one-dimensional. Note that a gradient-weighted normalized polynomial must be non-constant.

Let $\sqrt{\lambda_{\min}}\le \epsilon$ and $\mathbf{u}$ and $\mathbf{v}$ be eigenvectors of the minimal eigenvalue $\lambda_{\min}$. The coefficients $u$ and $v$ of the trial term $b$ must be non-zero. Otherwise, the corresponding polynomials would be supported by $\mathcal{O}$ and contradict the non-$\epsilon$-vanishing case assumption.

The linear combination $\mathbf{w}=v\cdot  \mathbf{u}-u\cdot\mathbf{v}$ eliminates the coefficient of the trial term $b$, and hence the corresponding polynomial is supported by $\mathcal{O}$. Its evaluation $\sqrt{\lambda_{\min}}$ exceeds $\epsilon$ by assumption. On the other hand, the generalized eigenvalue description gives $M^\intercal M \mathbf{w}=\lambda_m D^2\mathbf{w}$. To avoid a contradiction, we must have $\mathbf{w}=\mathbf{0}$. So the eigenvectors are linearly dependent, and the eigenspace of $\lambda_{\min}$ is one-dimensional. 
\end{enumerate}
\end{proof}

\begin{remark} [\textbf{Infinite Generalized Eigenvalue}]
The matrix $D$ is positive semi-definite but can fail to be positive definite. In particular, this is the case when $D$ contains the diagonal entry $\gwnorm{1}=0$. In this non-definite case we may have a non-zero vector $\mathbf{v}$ with
\begin{align}
    M^\intercal M\mathbf{v}\ne 0 \quad\text{and}\quad \lambda \underbrace{D^2\mathbf{v}}_{=0}=0\ .
\end{align}
We follow the general convention and say that $\mathbf{v}$ belongs to the generalized eigenvalue $\lambda=\infty$. The generalized eigenvalue $\lambda=\infty$ occurs when the degree of the characteristic polynomial $\det(M^\intercal M)$ is smaller than the size of $M^\intercal M$. (A proper treatment uses the homogenization $\lambda_0 M^\intercal M\mathbf{v}=\lambda_1 D^2\mathbf{v}$. Then the value infinity belongs to the homogeneous coordinate $[0:\lambda_1]$ of the point at infinity on the projective line.) 
\end{remark}

\begin{example} \label{example:eigenvalue-problem}
(\textbf{Exactly Vanishing.}) Let $\XX=\{(2,1)\}$ consist of a single point, $\mathcal{O}=\{1\}$ the set of terms, and  $b=x$ the trial term. Then
\begin{align}
    M=\bigl(1(\XX)\; x(\XX)\bigr)=\bigl( 1\; 2\bigr)\quad\text{and}\quad
    M^{\intercal}M =\begin{pmatrix}
        1& 2\\ 2 & 4
    \end{pmatrix}\quad.
\end{align}
Also
\begin{align}
    D=\begin{pmatrix}
        \gwnorm{1} & 0 \\
        0 & \gwnorm{x}
    \end{pmatrix}
    \quad\text{with}\quad
    \gwnorm{1}=0
    \quad\text{and}\quad 
    \gwnorm{x}=\frac{\enorm{\nabla x(\XX)}}{D(x)}=\frac{\enorm{1\choose 0}}{1}=1\ .
\end{align}
The generalized eigenvalue problem is
\begin{align}
    \begin{pmatrix}
        1 & 2 \\ 2 & 4
    \end{pmatrix}
    \begin{pmatrix}
        v_1 \\ v
    \end{pmatrix}
    =\lambda
    \begin{pmatrix}
        0 & 0 \\  0 & 1
    \end{pmatrix}
    \begin{pmatrix}
        v_1 \\ v
    \end{pmatrix}
\end{align}
with normalization condition 
    $\mathbf{v}D^{\intercal}D \mathbf{v}=1$, i.e. $v^2=1$.
The characteristic equation has degree one,
\begin{align}
    0=\det\begin{pmatrix}
        1 & 2 \\ 2 & 4-\lambda
    \end{pmatrix}=(4-\lambda)-4=-\lambda\ .
\end{align}
The generalized eigenvalues are $\lambda=\infty$ with generalized eigenvector $(1,0)^\intercal$ and $\lambda=0$ with generalized eigenvector direction $\mathbf{v}=\gamma\cdot (-2\; 1)^\intercal$, $\gamma\ne 0$. The generalized eigenvalue zero signals that the generalized eigenvector corresponds to an exactly vanishing polynomial (which does not need normalization). Anyway, gradient-weighted normalization is possible. Then the eigenvector is determined up to sign, $\mathbf{v}=\pm (-2\; 1)^\intercal$. We choose the sign that makes the coefficient of the trial term $b$ positive. The polynomial $g= -2+x$ is gradient-weighted normalized, has a positive coefficient of the trial term $x$, and vanishes exactly on $\XX$.
\end{example}

The following example uses the Matlab command \texttt{[V,$\Lambda$]=Eig(M'*M,D'*D)} for solving the generalized eigenvalue problem.

\begin{example} (\textbf{Minimal Non-Exactly Vanishing.})
 Let $\XX=\{(2,1), (4.1,2), (6,3)\}$ be three almost collinear points, $\mathcal{O}=\{1\}$, and  trial term $b=x$. Then
\begin{align}
    M=\bigl(1(\XX)\; x(\XX)\bigr)=\begin{pmatrix}
        1 & 2 \\ 1 & 4.1 \\ 1 & 6
    \end{pmatrix}\quad\text{and}\quad
    M^{\intercal}M =\begin{pmatrix}
        3 & 12.1 \\ 12.1 & 56.81
    \end{pmatrix}
\end{align}
and
\begin{align}
    D=\begin{pmatrix}
        \gwnorm{1} & 0 \\
        0 & \gwnorm{x}  
    \end{pmatrix}
    \quad\text{with}\quad
    \gwnorm{x}=\frac{\enorm{\nabla x(\XX)}}{D(x)}=\frac{\enorm{(1\,0\,1\,0\,1\,0)^{\intercal}}}{1}=\sqrt{3}
    \quad.
\end{align}
The generalized eigenvalue problem and its normalization constraints are
\begin{align}
    \begin{pmatrix}
        3 & 12.1 \\ 12.1 & 56.81
    \end{pmatrix}
    \begin{pmatrix}
        v_1 \\  v
    \end{pmatrix}
    =\lambda
    \begin{pmatrix}
        0 & 0 \\  0 & 3
    \end{pmatrix}
    \begin{pmatrix}
        v_1 \\ v
    \end{pmatrix}
    \quad\text{such that}\quad  \begin{pmatrix}
        v_1 &  v
    \end{pmatrix}
    \begin{pmatrix}
        0 & 0 \\  0 & 3
    \end{pmatrix}
    \begin{pmatrix}
        v_1 \\  v
    \end{pmatrix}=1\quad\text{, i.e.,}\quad 3 v^2=1
    \quad\text{and}\quad v>0\ .
\end{align}
The characteristic equation has degree one,
\begin{align}
    0=\det\begin{pmatrix}
        3 & 12.1 \\ 12.1 & 56.81-3\lambda 
    \end{pmatrix}=(56.81-3\lambda)\cdot 3-146.41=24.02-9\lambda\ .
\end{align}
There is only one finite generalized eigenvalue: $\lambda_{\min}=24.02/9\approx 1.634$.   Matlab computes the eigenvector $({v}_1,v)^{\intercal}\approx (-1, 0.2479)^{\intercal}$, which still has to be gradient-weighted normalized. We compute
\begin{align}
\gwnorm{-1+0.2479x}=
\left[\begin{pmatrix}
    -1 & 0.2479
\end{pmatrix}
\begin{pmatrix}
    0 & 0 \\ 0 & 3
\end{pmatrix}
\begin{pmatrix}
    -1 \\ 0.2479
\end{pmatrix}\right]^{1/2}
\approx 0.4294\quad\text{and}\quad
\frac{1}{0.4294}\cdot \begin{pmatrix}
    -1 \\ 0.2479
\end{pmatrix}
\approx 
\begin{pmatrix}
    -2.3288 \\
    0.5773
\end{pmatrix}
\ .
\end{align}
The polynomial $g_x\approx 0.5773x-2.3288$ has Euclidean evaluation norm 
\begin{align}
    \enorm{g_x(\XX)}\approx\enorm{\begin{pmatrix}
        0.5773\cdot 2-2.3288\\
        0.5773\cdot 4.1-2.3288\\
        0.5773\cdot 6-2.3288
    \end{pmatrix}}
    \approx \enorm{\begin{pmatrix}
        -1.1742\\
        0.0381\\
        1.1350
    \end{pmatrix}}
    \approx 1.634\approx \sqrt{\lambda_{\min}}\ .
    \end{align}
\end{example}

\subsection{The ABM algorithm with gradient-weighted normalization: the ABM+GWN algorithm}\label{sec:AMB-with-gwn}

\begin{algorithm}[t]
\caption{\\ The ABM+GWN algorithm (Approximate Buchberger-M\"oller with gradient-weighted normalization)}\label{alg:ABMGN}
\DontPrintSemicolon
  \KwInput{Point set $\mathbb{X}$, tolerance $\epsilon$, and a degree-compatible term ordering $\sigma$}
  \KwOutput{(i) List of terms $\mathcal{O}$ connected to 1. All gradient-weighted normalized polynomials supported by $\mathcal{O}$ are not $\epsilon$-vanishing. (ii) For each term $b\in\partial\mathcal{O}$ a polynomial $g_b$ with support in $\{b\}\cup\mathcal{O}$. The border term $b$ has a positive coefficient, and $g_b$ is gradient-weighted normalized and $\epsilon$-vanishing on $\XX$. }
   $\mathcal{O} \leftarrow [1]$\\
   $s\leftarrow 1$ \quad \% length of $\mathcal{O}$\\
   $G \leftarrow []$\\
   $r\leftarrow 0$ \quad \% length of $G$\\
   $d\leftarrow 1$ \quad \% current degree of terms\\
   \% list of trial terms in degree $d$:\\
   $T_d \leftarrow [b_1,\dots, b_{n_d}]$ 
  with $b_i\in \partial\mathcal{O}$, $\deg b_i=d$, and $b_1<_\sigma\dots<_\sigma b_{n_d}$\\
    \While {$T_d\ne [\;]$}{
  \For{$b$ in $T_d$}{
  Compute $\lambda_{\min}$ and  $\mathbf{v}_{\min}=(v_0,\ldots,v_{s})^{\intercal}$
  of the generalized eigenvalue problem
  $M^{\intercal}M\mathbf{v}=\lambda D^2\mathbf{v}$\\
  for $\mathcal{O}=[t_1,t_2,\cdots, t_s]$ and $b$ with $M=\bigl(\mathcal{O}(\XX)\;b(\XX)\bigr)$
  and $D=\text{diag}\bigl(\gwnorm{o_1},\dots,\gwnorm{o_s},\gwnorm{b}\bigr)$.\\
  Normalize $\textbf{v}$ so that $v>0$ and $\mathbf{v}^{\intercal}D^{\intercal}D\mathbf{v}=1$.
  \\
      \If{$\sqrt{\lambda_{\min}} \le \epsilon$}{
        Append the polynomial $g_{r+1}= v_1t_1 + \cdots +v_{s}t_{s}+v t$ to the list $G$\\
        $r\leftarrow r+1$\\
      }
      \Else{
        Append the term $t_{s+1}\leftarrow b$ to the list $\mathcal{O}$\\
        $s\leftarrow s+1$\\
      }
      }
      $d\leftarrow d+1$\\
      $T_d\leftarrow [b_1,\dots, b_{n_d}]$
  with $b_i\in \partial\mathcal{O}$, $\deg b_i=d$, and $b_1<_\sigma\dots<_\sigma b_{n_d}$\\
 }
 \Return{$\mathcal{O}$ and $G$}
\end{algorithm}

This subsection shows how gradient-weighted normalization is implemented in an already existing algorithm for computing approximate border bases. The key innovation is the generalized eigenvalue problem as presented in Section~\ref{sec:GWN-as-GEP}. As an example of an existing algorithm, we choose the Approximate Buchberger--M\"oller (ABM) algorithm as described by \cite{limbeck2013computation}. Refer to Appendix~\ref{app:method-comparison} for this choice, and  Appendix~\ref{sec:avi+gwn} for an adaptation with another algorithm.

Algorithm~\ref{alg:ABMGN} shows our adaptation of the ABM algorithm, which implements the generalized eigenvalue problem based on gradient-weighted normalization. We use the abbreviation ABM+GWN. In the following, to fix the order of the points in $\XX$, the terms, and the $\epsilon$-vanishing polynomials, we consider them as items in lists $[\dots]$ instead of as elements in sets $\{\dots\}$.

\medskip
\noindent\textbf{Input:} A non-empty, finite list of points $\mathbb{X}$, a tolerance $\epsilon\ge 0$, and a degree-compatible term ordering $\sigma$.

\smallskip
\noindent\textbf{Output:} \begin{enumerate} [(i)] \item A list of terms $\mathcal{O}$ that is connected to 1. All gradient-weighted normalized polynomials supported by $\mathcal{O}$ are not $\epsilon$-vanishing. 
\item For each term $b\in\partial\mathcal{O}$ a polynomial $g_b$ with support in $\{b\}\cup\mathcal{O}$. The border term $b$ has a positive coefficient, and $g_b$ is gradient-weighted normalized and $\epsilon$-vanishing on $\XX$. 
\end{enumerate}
\medskip

At degree $d=0$, the list of terms $\mathcal{O} = [1]$ of length $s=1$ and the list of the border prebasis polynomials $G=[\;]$ of length $r=0$ are initialized. Since $\XX$ is non-empty, the constant term 1 is generally considered as non-$\epsilon$-vanishing and placed in the term set. 

At degree $d=1$, the degree-$d$ terms belonging to $\partial\mathcal{O}$ are initialized in the list of trial terms $T_d = [b_1,\dots,b_{n_d}]$ in increasing order $ b_1<_\sigma \dots <_\sigma b_{n_d}$. Specifically, $T_1$ is the list of indeterminates $x_1$, \dots, $x_n$ in $\sigma$-increasing order and non-empty.

If the list of trial terms $T_d=[b_1,\dots, b_{n_d}]$ of degree $d$ is non-empty, we will execute the following  for loop over all trial terms $b_i$ in $\sigma$-increasing order.\footnote{\cite{limbeck2013computation} uses $\sigma$-decreasing order. Increasing order has the advantage that a trial term is greater than all terms in $\mathcal{O}$. In this construction, border terms are $\sigma$-leading terms.}

\begin{itemize}  
\item[(F1)] For $M = \mqty(O(\XX) & b(\XX))$ and 
    $D = \mathrm{diag}\qty(\gwnorm{t_1},\ldots, \gwnorm{t_s},\gwnorm{b})$,
    solve the generalized eigenvalue problem
    \begin{align}\label{eq:gep}
        M^{\intercal}M\mathbf{v}_{\min} &= \lambda_{\min}D^2\mathbf{v}_{\min},
    \end{align}
     where $\lambda_{\min}$ is the minimal generalized eigenvalue, possibly zero, and $\mathbf{v}_{\min}=(v_1,\ldots,v_{s},v)^{\intercal}$ is the corresponding generalized eigenvector. 
     
    \item[(F2)] If $\sqrt{\lambda_{\min}} \le \epsilon$, normalize $\textbf{v}$ so that $\textbf{v}^{\intercal}D^{\intercal}D\textbf{v}=1$ with $v>0$. Append the polynomial
    \begin{align}\label{eq:vanishing-polynomial}
        g_{r+1} = v_1t_1 + v_2t_2 + \cdots + v_{s}t_s+v b
    \end{align}
     to the list of $\epsilon$-vanishing border basis polynomials $G$. Increment the length $r$. 
     
     \item[(F3)] Else, if $\sqrt{\lambda_{\min}} > \epsilon$, append $t_{s+1}=b$ to the list of order ideal terms $\mathcal{O}$ and increment its length $s$. 
\end{itemize}

After the for loop (F1)--(F3), increment the degree, $d\leftarrow d+1$. At the new degree $d$, the degree-$d$ terms belonging to current $\partial\mathcal{O}$ initialize the list $T_d = [b_1,\dots,b_{n_d}]$ in increasing order $ b_1<_\sigma \dots <_\sigma b_{n_d}$. If $T_d$ is non-empty, execute the for loop again.

If the degree $d$ is sufficiently large so that the initialized list $T_d$ is empty, then the algorithm will output $(\mathcal{O},G)$ and terminate.

\begin{proposition}\label{prop:invariants-of-ABM+GWN}
The invariants of the for loop of the ABM+GWN algorithm are: 
\begin{enumerate}[(i)]
    \item The list of terms $\mathcal{O}=[t_1,\dots,t_s]$ is connected to 1. 
    
    \item Every non-constant polynomial $f$ supported by ${\mathcal{O}}$ can be gradient-weighted normalized, i.e., $\gwnorm{f}>0$, and then does not $\epsilon$-vanish on $\mathbb{X}$,
    \begin{align}
        \frac{1}{\gwnorm{f}}\cdot\enorm{f(\XX)}>\epsilon\quad ,\ \text{in particular,}\quad
        \frac{1}{\gwnorm{t}}\cdot\enorm{t(\XX)}>\epsilon\quad \text{for all terms \ }1\ne t\in\mathcal{O}\ .
    \end{align}
    
    \item For the polynomials $g_1,\dots, g_r$ in $G$ there are pairwise different terms $b_1,\dots, b_r\in\partial\mathcal{O}$  such that $g_i$ is supported by $\mathcal{O}\cup\{b_i\}$ and the coefficient of $b_i$ is positive. Each $b_i\in\partial{O}$ can be gradient-weighted normalized, 
    \begin{align}
        \gwnorm{b_i}>0\ .
    \end{align}
   Moreover each polynomial $g_i$ can be gradient-weighted normalized and then is $\epsilon$-vanishing on $\XX$,
    \begin{align}
        \frac{1}{\gwnorm{g_i}}\cdot\enorm{g_i(\XX)}\le\epsilon\ .
    \end{align}
\end{enumerate}
\end{proposition}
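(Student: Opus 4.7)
The plan is to prove all three invariants simultaneously by induction on the iterations of the inner for loop, using Theorem~\ref{thm:GWN_as_GEP} as the workhorse. Right after initialization one has $\mathcal{O} = [1]$, which is connected to $1$; there are no non-constant polynomials supported by $\{1\}$, so (ii) holds vacuously; and $G = [\,]$ makes (iii) vacuous. This settles the base case.

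For the inductive step, I would split on the branch taken by the algorithm. In the branch $\sqrt{\lambda_{\min}} > \epsilon$, the list $\mathcal{O}$ is extended by $b$. Since $b \in \partial\mathcal{O}$ means $b = x_k \tilde t$ for some $\tilde t \in \mathcal{O}$, connectedness to $1$ is preserved and (i) still holds. Because $\lambda_{\min} > \epsilon^2 \ge 0$, Theorem~\ref{thm:GWN_as_GEP}(i) rules out a non-zero exactly vanishing polynomial supported by $\mathcal{O} \cup \{b\}$, so Theorem~\ref{thm:GWN_as_GEP}(ii) applies. It simultaneously provides the gradient-weighted normalizability of every non-constant polynomial supported by $\mathcal{O}\cup\{b\}$ and the lower bound $\enorm{f(\XX)}/\gwnorm{f} \ge \sqrt{\lambda_{\min}} > \epsilon$, which is exactly (ii) on the enlarged $\mathcal{O}$. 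The list $G$ and its border terms are untouched; each prior $b_i$ differs from the current $b$, because trial terms are processed exactly once in degree-increasing order, and each $b_i$ still lies in the new $\partial\mathcal{O}$, so (iii) is inherited unchanged.

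In the branch $\sqrt{\lambda_{\min}} \le \epsilon$, $\mathcal{O}$ is unchanged so (i) and (ii) persist trivially, and I would invoke Theorem~\ref{thm:GWN_as_GEP}(iii), whose hypothesis is precisely the inductive (ii). It delivers a gradient-weighted normalized eigenvector with positive coefficient on $b$ and an $\epsilon$-vanishing polynomial $g_{r+1}$, which supplies almost all of (iii). The one clause not handed to me for free is $\gwnorm{b} > 0$, and I expect this to be the main obstacle: Theorem~\ref{thm:GWN_as_GEP}(iii) asserts only $\gwnorm{g_{r+1}} = 1$, while in principle the positive mass could sit entirely on $\mathcal{O}$-terms. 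I would argue by contradiction: if $\gwnorm{b} = 0$, then Proposition~\ref{prop:zero-gwnorm}(i) forces $b(\XX) = 0$, after which both $\enorm{g(\XX)}$ and $\gwnorm{g}$ for any $g$ supported by $\mathcal{O}\cup\{b\}$ depend only on the $\mathcal{O}$-coordinates. Minimizing $\enorm{g(\XX)}/\gwnorm{g}$ over gradient-weighted normalizable $g$ then reduces to the same minimization over polynomials supported by $\mathcal{O}$, which exceeds $\epsilon$ by invariant (ii); hence $\sqrt{\lambda_{\min}} > \epsilon$, contradicting the branch assumption. Once this delicate point is settled, the remaining checks are bookkeeping on the degree-ordered processing of trial terms.
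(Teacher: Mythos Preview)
Your inductive structure and the handling of invariants (i) and (ii) match the paper's proof closely. The gap is precisely where you anticipated it: the contradiction argument for $\gwnorm{b} > 0$ in the branch $\sqrt{\lambda_{\min}} \le \epsilon$ does not go through.

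Here is the problem. You assume $\gwnorm{b} = 0$ and correctly deduce $b(\XX) = 0$ from Proposition~\ref{prop:zero-gwnorm}(i). But then $b$ itself is a non-zero exactly vanishing polynomial supported by $\mathcal{O}\cup\{b\}$, so Theorem~\ref{thm:GWN_as_GEP}(i) gives that zero is a generalized eigenvalue, hence $\lambda_{\min} = 0$. Your identification of $\sqrt{\lambda_{\min}}$ with the infimum of $\enorm{g(\XX)}/\gwnorm{g}$ over gradient-weighted normalizable $g$ is licensed only by Theorem~\ref{thm:GWN_as_GEP}(ii), whose hypothesis is that no non-zero exactly vanishing polynomial exists --- and that hypothesis is exactly what you have just destroyed. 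So rather than $\sqrt{\lambda_{\min}} > \epsilon$, you obtain $\sqrt{\lambda_{\min}} = 0 \le \epsilon$, and there is no contradiction with the branch assumption.

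The paper avoids this circularity by arguing directly, exploiting the combinatorial structure of border terms. Since $b \in \partial\mathcal{O}$, one can write $b = x_i t_j$ for some indeterminate $x_i$ and some $t_j \in \mathcal{O}$, whence $\partial b/\partial x_i = \degreek{i}{b}\cdot t_j$ and
\[
\gwnorm{b} = \frac{\enorm{\nabla b(\XX)}}{D(b)} \;\ge\; \frac{\degreek{i}{b}}{D(b)}\,\enorm{t_j(\XX)} \;>\; 0,
\]
the last step because every term in $\mathcal{O}$ has positive evaluation norm (trivially for $t_j = 1$, and via invariant (ii) for $t_j \ne 1$). From $\gwnorm{b} > 0$ together with the positive coefficient of $b$ in $g_{r+1}$, the normalizability $\gwnorm{g_{r+1}} > 0$ then follows immediately.
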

\begin{proof}
Before the first run of the for loop we have $\mathcal{O}=[1]$ and $G=[]$. All three invariants are trivially satisfied. In each run of the for loop, the generalized eigenvalue problem is solved for the current order ideal $\mathcal{O}=[t_1,\dots,t_s]$ and the trial term $b$. 
\begin{enumerate}[(i)]
    \item If a term $t_{s+1}=b$ is appended to $\mathcal{O}$, it started as a trial term $b\in\partial\mathcal{O}$, which is connected to $\mathcal{O}$. The list $\mathcal{O}$ stays connected to 1.

\item If a term $t_{s+1}=b$ is appended to $\mathcal{O}$, then the generalized eigenvalue problem produced $\sqrt{\lambda_{\min}}>\epsilon$. By Theorem~\ref{thm:GWN_as_GEP}, all non-constant polynomials supported by $\mathcal{O}\cup\{b\}$ can be gradient-weighted normalized. Their evaluations are greater than $\sqrt{\lambda_{\min}}$ and thus greater than $\epsilon$.

\item If a polynomial $g_{r+1}$ is appended to $G$ then it is supported by a trial term $b=b_{r+1}\in\partial\mathcal{O}$ that has not been considered before. Since $b$ is a border term at the time of its consideration, there is an indeterminate $x_i$ and a term $t_j\in \mathcal{O}$ such that $b=x_i\cdot t_j$. Hence $\partial b/\partial x_i= \degreek{i}{b}\cdot t_j$ with $\degreek{i}{b}>0$ and
\begin{align}
    \gwnorm{b}=\frac{\enorm{\nabla b(\XX)}}{{D(b)}}\ge \frac{1}{D(b)}\cdot\enorm{\frac{\partial b(\XX)}{\partial x_i}}=\frac{\degreek{i}{b}}{D(b)}\cdot \enorm{t_j(\XX)}>0\ ,
\end{align}
since $t_j\in\mathcal{O}$.
The relationship $b=x_i\cdot t_j$ with $t_j\in\mathcal{O}$ does not change when $\mathcal{O}$ gets extended, so $b$ will remain a border term of all future $\mathcal{O}$. By construction, all trial terms are pairwise different, so $b_1,\dots, b_{r+1}$ are pairwise different. Since $b$ occurs in $g_i$ with coefficient $c_b>0$, we have
\begin{align}
    \gwnorm{g_i}= \left(\sum_{t\in\supp{g}} c_t^2\, \gwnorm{t}^2\right)^{1/2}\ge c_b\, \gwnorm{b}>0\ .
\end{align}The $\epsilon$-vanishing property of $g_i$ follows from its construction.
\end{enumerate}
\end{proof}

\begin{remark}\label{rem:diff-from-original-ABM}
The ABM+GWN algorithm differs from the ABM algorithm only in one computational detail: the generalized eigenvalue problem $(M^{\intercal}M, D^2)$ in ABM+GWN replaces the SVD of $M$ in ABM. That SVD is equivalent to the eigenvalue problem of $M^{\intercal}M$. Therefore, the change from ABM to ABM+GWN replaces the ordinary eigenvalue problem
\begin{align}
    M^\intercal M\mathbf{v}=\lambda I\mathbf{v} \qquad\text{(identity matrix $I$ inserted for emphasis)}
\end{align}
by the generalized eigenvalue problem
\begin{align}
    M^\intercal M\mathbf{v}=\lambda D^2\mathbf{v}\ .
\end{align}
The matrices $D^2$ and $I$ on the right-hand sides encode the normalization. Obviously, other normalizations can be designed after this pattern.
\end{remark}

The following theorem states the fundamental properties of the ABM+GWN algorithm. There are noteworthy differences to Theorem~4.3.1 in~\citep{limbeck2013computation} about the ABM algorithm. However, the improvements in the formulations of these fundamental properties are not due to the implementation of gradient-weighted normalization. We introduced a new way to extend the key definitions from the exact to the approximate case, whose suitability is shown by simpler statements and proofs. In particular, the ABM algorithm uses the definition of an $\delta$-approximate border basis, which introduces a second parameter, while our definition of an $\epsilon$-approximate border basis for $\XX$ uses the tolerance prescribed by the problem. 
 
The significant advantages of gradient-weighted normalization, in particular, its data-driven approach, are presented in Section~\ref{sec:gradient-weighted-features}.

\begin{theorem}\label{thm:ABM+GWN}
Let $\XX\subset\RR^n$ be a finite, non-empty set of points, $\epsilon\ge 0$ a tolerance, and $\sigma$ a degree-compatible term ordering. The ABM+GWN algorithm~(Algorithm~\ref{alg:ABMGN}) computes in finitely many steps a list of terms $\mathcal{O}\subset\TT$ and a list of polynomials $G=[g_1,\dots,g_r]$ with the following properties:
\begin{enumerate}[(i)]
\item The list of terms $\mathcal{O}$ is connected to 1.

\item Any non-constant polynomial $f$ supported by $\mathcal{O}$ can be gradient-weighted normalized, i.e., $\gwnorm{f}>0$, and then does not $\epsilon$-vanish on $\XX$,
\begin{align}
    \frac{1}{\gwnorm{f}}\cdot \enorm{f(\XX)}>\epsilon\quad ,\ \text{in particular,}\quad
        \frac{1}{\gwnorm{t}}\cdot\enorm{t(\XX)}>\epsilon\quad \text{for all terms \ }1\ne t\in\mathcal{O}\ .
\end{align}

\item There is a one-to-one correspondence between the border terms $\partial\mathcal{O}=[b_1,\dots,b_r]$ and the polynomials of $G=[g_1,\dots,g_r]$. Each polynomial $g_i$ is supported by $\mathcal{O}\cup \{b_i\}$ and $b_i$ has a positive coefficient. Each border term $b_i$ can be gradient-weighted normalized.
Moreover, each polynomial $g_i$ is gradient-weighted normalized and $\epsilon$-vanishing on $\XX$. Therefore, $\mathcal{I}=\langle g_1,\dots, g_r\rangle$ is an $\epsilon$-vanishing ideal of $\XX$ with respect to gradient-weighted normalization.

\item If $\mathcal{O}$ is an order ideal, then $G$ is an $\epsilon$-approximate, gradient-weighted normalized $\mathcal{O}$-border basis for $\XX$. Moreover, if each $g_i$ vanishes exactly on $\XX$, then $G$ is an $\mathcal{O}$-border basis.
\end{enumerate}
\end{theorem}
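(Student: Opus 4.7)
The plan is to rest on Proposition~\ref{prop:invariants-of-ABM+GWN}, which already discharges the bulk of parts (i)--(iii), and to supplement it with (a) a termination argument, (b) a combinatorial check that every border term of the terminal $\mathcal{O}$ gets processed, and (c) a short derivation of (iv) from Definition~\ref{def:our-approximate-border-basis}.

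First I would establish termination. By Theorem~\ref{thm:GWN_as_GEP}(i), a trial term $b$ is appended to $\mathcal{O}$ only when $\sqrt{\lambda_{\min}}>\epsilon\ge 0$, which forces the evaluation vector $b(\XX)$ to be linearly independent from those of the terms already in $\mathcal{O}$; otherwise an exactly vanishing polynomial would give $\lambda_{\min}=0\le\epsilon$ and $b$ would contribute to $G$ instead. Hence $|\mathcal{O}|$ is bounded by $m=|\XX|$, so from some degree onward no new term enters $\mathcal{O}$. Thereafter $\partial\mathcal{O}$ is fixed with a maximal degree $d_{\max}$, and any iteration with $d>d_{\max}$ exits the while loop via $T_d=[\,]$.

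Parts (i), (ii), (iii) then follow from Proposition~\ref{prop:invariants-of-ABM+GWN}, whose invariants hold at every iteration and thus at termination. What must be added is that the border terms $b_1,\dots,b_r$ accumulated while producing $G$ exhaust the terminal $\partial\mathcal{O}$. Given $b\in\partial\mathcal{O}$ in the terminal state, write $b=x_j t$ with $t\in\mathcal{O}$ of degree $d-1$; when $T_d$ is freshly initialized at the top of iteration $d$, $t$ is already in $\mathcal{O}$ (either from before iteration $d$ or from an earlier iteration), so $b$ belongs to $T_d$ and is processed. Since $b\notin\mathcal{O}$ terminally, its processing must have produced an $\epsilon$-vanishing $g_b$ appended to $G$; distinctness of the $b_i$ is immediate because each trial term is enumerated at most once.

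For part~(iv), assume $\mathcal{O}$ is an order ideal. Then conditions (i) and (ii) of Definition~\ref{def:our-approximate-border-basis} are exactly parts (iii) and (ii) of the present theorem, so $G$ is an $\epsilon$-approximate, gradient-weighted normalized $\mathcal{O}$-border basis for $\XX$. If in addition each $g_i$ vanishes exactly on $\XX$, then $\mathcal{I}:=\langle g_1,\dots,g_r\rangle\subseteq\II(\XX)$, while part (ii) shows that no nonzero polynomial supported by $\mathcal{O}$ lies in $\II(\XX)$ (for a nonzero constant $c$ one uses $\enorm{c(\XX)}=|c|\sqrt{m}>0$). Therefore $\mathcal{I}\cap\langle\mathcal{O}\rangle_{\RR}=\{0\}$, and the characterization recalled after Definition~\ref{def:border-basis} promotes the prebasis $G$ to an exact $\mathcal{O}$-border basis. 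The only delicate point in the whole argument is the combinatorial bookkeeping behind the completeness claim: one must verify that no degree-$d$ border term is created or destroyed mid-iteration, which follows because only degree-$d$ trial terms can be appended during iteration $d$, so new border terms have degree $d+1$ and are captured by the freshly initialized $T_{d+1}$. Everything else is a direct translation of Proposition~\ref{prop:invariants-of-ABM+GWN} and Definition~\ref{def:our-approximate-border-basis}.
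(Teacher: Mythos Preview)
Your proof is correct and follows essentially the same route as the paper: termination via the bound $|\mathcal{O}|\le|\XX|$ from linear independence of evaluation vectors, then invoking Proposition~\ref{prop:invariants-of-ABM+GWN} for (i)--(iii), and reading off (iv) from Definition~\ref{def:our-approximate-border-basis} and the characterization $\mathcal{I}\cap\langle\mathcal{O}\rangle_{\RR}=\{0\}$. In fact you are more careful than the paper in two spots---you explicitly verify that the border terms $b_1,\dots,b_r$ exhaust the terminal $\partial\mathcal{O}$ (the paper tacitly assumes this when asserting the one-to-one correspondence), and you spell out the $\mathcal{I}\cap\langle\mathcal{O}\rangle_{\RR}=\{0\}$ verification for the exact case rather than just citing the definition.
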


\begin{proof} We first show the finiteness of the ABM+GWN algorithm. Suppose we are at degree $d$ with $T_d = [b_1,\ldots, b_{n_d}]$. The algorithm processes each term in $T_d$ one by one. If at (F2) all the terms in $T_d$ lead to $\epsilon$-vanishing border basis polynomials and are appended to $G$, then the algorithm terminates because we have an empty list $T_{d+1} = []$ in the next degree. Otherwise, the order ideal $\mathcal{O}$ is extended at least by one term. Once we reach $\abs{\mathcal{O}} = \abs{\mathbb{X}}$, we always have $\lambda_{\min} = 0$ at Eq.~\eqref{eq:gep}. This is because $O(\mathbb{X})$ is now a square and full-rank matrix of size $\abs{\mathbb{X}}$, and thus, $M^{\intercal}M$ with $M = \qty(\mathcal{O}(\mathbb{X})\ b(\mathbb{X}))$ leads to a degenerated square matrix of size $\abs{\mathbb{X}}+1$. Thus, $\sqrt{\lambda_{\min}} \le \epsilon$ at (F2) always holds, and eventually, there will be no trial terms at the next degree, which terminates the algorithm.
The arguments (i), (ii), and (iii) are the results of the invariants of Proposition~\ref{prop:invariants-of-ABM+GWN} at the termination of the algorithm. Property (iv) follows from (ii) and (iii) by definition of an $\epsilon$-approximate border basis for $\XX$ and the definition of an exact border basis for $\XX$.

\end{proof}

\begin{corollary}
At each stage of the ABM+GWN algorithm, the only zero diagonal entry in 
\begin{align}
    D=\bigl(\gwnorm{1},\gwnorm{t_2},\dots,\gwnorm{t_s},\gwnorm{b}\bigr)
\end{align} 
is $\gwnorm{1}$. All other diagonal entries are positive.
\end{corollary}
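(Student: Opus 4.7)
The plan is to deduce the corollary directly from the invariants already established in Proposition~\ref{prop:invariants-of-ABM+GWN}, treating the three types of diagonal entries in $D$ separately. I would organize the argument as three short claims in the order: (a) $\gwnorm{1}=0$, (b) $\gwnorm{t_i}>0$ for $i\ge 2$, and (c) $\gwnorm{b}>0$ for the current trial term.

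For (a), the value $\gwnorm{1}=0$ is immediate from Definition~\ref{def:gradient-semi-norm-of-terms}, since the gradient-weighted semi-norm of the constant term is zero by definition. For (b), each $t_i$ with $i\ge 2$ is a non-constant term that was appended to $\mathcal{O}$ by branch (F3) of the algorithm, and invariant (ii) of Proposition~\ref{prop:invariants-of-ABM+GWN} then applies to $t_i$ itself and gives $\gwnorm{t_i}>0$.

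For (c), I would reuse the gradient-bound argument that appears in the proof of part (iii) of Proposition~\ref{prop:invariants-of-ABM+GWN}. Since $b$ is the current trial term, it lies in $\partial\mathcal{O}$, so there exist $x_k$ and $t_j\in\mathcal{O}$ with $b=x_k\cdot t_j$; hence $\partial b/\partial x_k=\degreek{k}{b}\cdot t_j$ with $\degreek{k}{b}\ge 1$. Keeping only this partial in the Euclidean norm of $\nabla b(\XX)$ yields
\begin{align}
\gwnorm{b}\ \ge\ \frac{\degreek{k}{b}}{D(b)}\cdot\enorm{t_j(\XX)}.
\end{align}
It then remains to show $\enorm{t_j(\XX)}>0$. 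I would split into two subcases: if $t_j=1$, then $\enorm{1(\XX)}=\sqrt{|\XX|}>0$ since $\XX$ is non-empty; if $t_j\ne 1$, then $t_j$ is a non-constant term in $\mathcal{O}$ and invariant (ii) of Proposition~\ref{prop:invariants-of-ABM+GWN} gives $\enorm{t_j(\XX)}>\epsilon\cdot\gwnorm{t_j}\ge 0$.

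The argument involves essentially no obstacle, since all the heavy lifting is already carried out in Proposition~\ref{prop:invariants-of-ABM+GWN}. The only point that requires a moment's care is the dichotomy for $t_j$ in step~(c): invariant (ii) is stated for non-constant polynomials only, so the case $t_j=1$ must be handled separately by using non-emptiness of $\XX$. Everything else is bookkeeping.
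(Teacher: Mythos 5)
Your proof is correct, and the argument is essentially the one the paper has in mind: the corollary is stated without a separate proof precisely because all three cases are covered by the definition of the gradient-weighted semi-norm, by invariant~(ii) of Proposition~\ref{prop:invariants-of-ABM+GWN}, and by the gradient-lower-bound computation already spelled out in the proof of invariant~(iii). You correctly observe that invariant~(iii) as \emph{stated} only concerns border terms that produce polynomials appended to $G$, whereas the corollary needs $\gwnorm{b}>0$ for \emph{every} trial term; you also correctly note that the \emph{argument} in the proof of~(iii) (namely $b=x_k t_j$ with $t_j\in\mathcal{O}$, hence $\gwnorm{b}\ge \degreek{k}{b}\enorm{t_j(\XX)}/D(b)$) applies unchanged to any $b\in\partial\mathcal{O}$. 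Your explicit dichotomy $t_j=1$ versus $t_j\ne 1$ is a small refinement over the paper's terser ``$>0$ since $t_j\in\mathcal{O}$,'' and it is worth retaining: for $t_j=1$ one uses $\enorm{1(\XX)}=\sqrt{|\XX|}>0$, while for $t_j\ne 1$ invariant~(ii) gives $\enorm{t_j(\XX)}>\epsilon\,\gwnorm{t_j}\ge 0$, hence $\enorm{t_j(\XX)}>0$.
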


\begin{example} [\textbf{Middle Tolerance}]\label{ex:ABM+GWN-middle-tolerance}
We consider the point set $\XX = \{p_1, p_2, p_3\}$ with
        $p_1 = (2,2)$, 
        $p_2 = (-4,1)$, and
        $p_3 = (2, 0)$.
We run the ABM+GWN algorithm with tolerance $\epsilon=1$ and the graded reverse lexicographic term ordering $\sigma$, i.e., $y<x$. 
   
Again, we use the Matlab command \texttt{[V,$\Lambda$]=Eig(M'*M,D'*D)} for solving the generalized eigenvalue problems. According to Matlab's output standard, the results are reported with four significant digits. 

\begin{enumerate}
       \item The algorithm starts with $\mathcal{O}= [1]$, $T_1=[y,x]$, $G=[\,]$, and trial term $b= y$. Thus we have 
    \begin{align}
        M_y = \mqty(1 & 2 \\ 
                    1 & 1   \\ 
                    1 & 0 
                    ) \ ,\quad M_y^{\intercal}M_y = \mqty(3 & 3 \\ 3 & 5)\ ,\text{ and }\quad D_y^2=\begin{pmatrix}
        \gwnorm{1}^2 & 0 \\
        0 & \gwnorm{y}^2
    \end{pmatrix}
    =\mqty(0 & 0 \\ 0 & 3)\ .
    \end{align}
    The generalized eigenvalues of $M_y^{\intercal}M_y \mathbf{v}=\lambda D_y^2\mathbf{v}$ are $2/3$ with generalized eigenvector $(-1,1)^\intercal$ and $\infty$. Since $\lambda_{\min} < \epsilon^2$, the generalized eigenvector for the minimal generalized eigenvalue produces the gradient-weighted normalized basis polynomial $g_y=(y-1)/\sqrt{3}\approx 0.5774y-0.5774$. The normalized border term has evaluation $\enorm{y(\XX)}/\gwnorm{y}=\sqrt{5}/\sqrt{3}\approx 1.291$. Check: $\enorm{g_y(\XX)}=\enorm{(1/\sqrt{3},0, -1/\sqrt{3})^\intercal}=\sqrt{2/3}$. 
    
    \item The next step uses $\mathcal{O} = [1]$, $T_1=[y,x]$, $G=[g_y]$, and trial term $b= x$. Therefore 
    \begin{align}
        M_x = \mqty(1 & \phantom{-}2 \\ 
                    1 & -4 \\
                    1 & \phantom{-}2 ) \ ,\quad M_x^{\intercal}M_x = \mqty(3 & 0 \\ 0 & 24)\ ,\text{ and }\quad D_x^2=\begin{pmatrix}
        \gwnorm{1}^2 & 0 \\
        0 & \gwnorm{x}^2
    \end{pmatrix}
    =\mqty(0 & 0 \\ 0 & 3)\ .
    \end{align}
    Since both generalized eigenvalues, 8 and $\infty$, are greater than $\epsilon^2$, the list of terms is augmented to $\mathcal{O} = [1, x]$. We have $\enorm{x(\XX)}/\gwnorm{x}=\sqrt{24}/\sqrt{3}=\sqrt{8}$. This finishes the trial terms of degree one.
    
    \item Now $\mathcal{O} = [1, x]$, $T_2=[xy, x^2]$, $G=[g_y]$, and trial term $b=xy$. We have
    \begin{align}
        M_{xy} = \mqty(1 & \phantom{-}2 & \phantom{-}4\\ 
                       1 & -4 & -4\\
                       1 & \phantom{-}2 & \phantom{-}0
                       ) \ ,\quad M_{xy}^{\intercal}M_{xy} 
              = \mqty(3 & 0 & 0  \\ 
                      0 & 24 & 24 \\
                      0 & 24 & 32
                      )\ ,\text{ and }\quad
                      D_{xy}^2=
    \mqty(0 & 0 & 0\\ 0 & 3 & 0 \\ 0 & 0 & 14.5)
    \end{align}
    with generalized eigenvalues 0.4525, 9.7544, and $\infty$. A generalized eigenvector for the minimal eigenvalue $(0, -1.000, 0.9434)^\intercal$. Since $\lambda_{\min}< \epsilon^2$, we compute the corresponding gradient-weighted normalized basis polynomial $g_{xy}=0.2366xy-0.2507x$. Check: Up to the roundoff error, the evaluation agrees with $\sqrt{\lambda_{\min}}$,
    \begin{align}
       \enorm{g_{xy}(\XX)}=\enorm{\begin{pmatrix}
           0.2366\cdot 2\cdot 2-0.2507\cdot 2\\
           0.2366\cdot (-4)\cdot 1-0.2507\cdot (-4)\\
           0.2366\cdot 2\cdot 0-0.2507\cdot 2
       \end{pmatrix}} 
       =\enorm{\begin{pmatrix}
           0.4450\\ 0.0564 \\ -0.5014
       \end{pmatrix}}\approx \sqrt{0.4526}\approx 0.6728\ .
    \end{align}
    Since the coefficients of $g_{xy}$ differ, this is not the gradient-weighted normalization of $x\cdot g_y\approx 0.5774xy-0.5774x$. We have $\gwnorm{x\cdot g_y}\approx 2.415$. We compute
    \begin{align}
    \frac{\enorm{(x\cdot g_y)(\XX)}}{\gwnorm{x\cdot g_y}}=
    \frac{1}{2.415}
       \enorm{\begin{pmatrix}
           2\cdot (0.5774\cdot 2-0.5774)\\
           -4\cdot (0.5774\cdot 1-0.5774)\\
           2\cdot (0.5774\cdot 0-0.5774)
       \end{pmatrix}} 
       =\frac{1}{2.415}\enorm{\begin{pmatrix}
           1.1548\\ 0 \\ -1.1548
       \end{pmatrix}}\approx 0.6762>\enorm{g_{xy}(\XX)}\ .
    \end{align}
    The border term $xy$ has gradient-weighted normalized evaluation
    \begin{align}
        \frac{\enorm{xy(\XX)}}{\gwnorm{xy}}=\frac{\sqrt{4^2+4^2+0^2}}{\sqrt{14.5}}\approx 1.486\ .
    \end{align}

    \item Now $\mathcal{O} = [1, x]$, $T_2=\{xy,x^2\}$, $G=[g_y, g_{xy}]$, and trial term $b=x^2$. We have
    \begin{align}
        M_{x^2} = \mqty(1 & \phantom{-}2 & 4\\ 
                       1 & -4 & 16\\
                       1 & \phantom{-}2 & 4
                       ) \ ,\quad M_{x^2}^{\intercal}M_{x^2} 
              = \mqty(3 & 0 & \phantom{-}24\\ 
                      0 & 24 & -48\\
                      24 & -48 & 288)
                      \ ,\text{ and }\quad
            D_{x^2}^2=
    \mqty(0 & 0 & 0\\ 0 & 3 & 0 \\ 0 & 0 & 24)\ .
    \end{align}
    with Matlab-computed generalized eigenvalues 0.000, 12.0000, and $\infty$. The resulting border polynomial is $g_{x^2}=0.2222x^2+0.4444x-1.7778$. It is appended to $G$. Up to roundoff error, this is the gradient-weighted normalization of the exactly vanishing polynomial $h_{x^2}=(x-2)(x+4)=x^2+2x-8$. The border term $x^2$ has gradient-weighted normalized evaluation
    \begin{align}
        \frac{\enorm{x^2(\XX)}}{\gwnorm{x^2}}=\frac{\sqrt{4^2+16^2+4^2}}{\sqrt{24}}=\sqrt{12}\approx 3.464\ .
    \end{align}

    \item Since $\mathcal{O}$ does not possess border terms of degree three, the algorithm terminates with the outputs $\mathcal{O} = [1, x]$ and $G=[g_y, g_{xy}, g_{x^2}]$. The polynomials $g_y$ and $g_{xy}$ are only $\epsilon$-vanishing, while $g_{x^2}$ is exactly vanishing.
    \end{enumerate} 

\end{example}

   Step 2 is the last step that augments $\mathcal{O}$. There, the minimal generalized eigenvalue is 8, so all non-constant, gradient-weighted normalized polynomials supported by the final $\mathcal{O}$ have evaluations greater than $\sqrt{8}\approx 2.8284$. A tolerance greater than this value changes the set of terms. The greatest admissible tolerance for the border polynomials is $\sqrt{2/3}\approx 0.8165$. A tolerance less than this value changes the set of terms.

   The following example uses the same point set, but this time, it is interested in the dependence of the output on the prescribed tolerance.

\begin{example} [\textbf{Small Tolerance}] \label{ex:ABM+GWN-small-tolerance}
We consider again the point set $\XX = \{p_1, p_2, p_3\}$ with
        $p_1 = (2,2)$, 
        $p_2 = (-4,1)$, and
        $p_3 = (2, 0)$.
First, we run the ABM+GWN algorithm with tolerance $\epsilon=0$ and the graded reverse lexicographic term ordering $\sigma$, i.e., $y<x$. Since we only consider exact vanishing, $g(\XX)=0$, we do not have to worry about normalizing the border polynomials. In this exact setting, of course, it would be more efficient simply to compute the kernel of $M$, but for comparison with other tolerances, we still consider the generalized eigenvalue problem.
\begin{enumerate}
    \item As in Example~\ref{ex:ABM+GWN-middle-tolerance}, the algorithm starts with $\mathcal{O}= [1]$, $T_1=[y,x]$, $G=[\,]$, and trial term $b= y$. 
    \begin{align}
        M_y=\begin{pmatrix}
            1 & 2 \\ 
            1 & 1   \\ 
            1 & 0 
        \end{pmatrix}
        \ ,\quad M_y^{\intercal}M_y = \begin{pmatrix} 3 & 3 \\ 3 & 5\end{pmatrix}
        \quad\text{and}\quad
        D_{y}^2=\begin{pmatrix}
            0 & 0 \\ 0 & 3
        \end{pmatrix}\ .
    \end{align}
    The generalized eigenvalues are 2/3 and $\infty$. There is no generalized eigenvalue zero, so there is no non-zero polynomial supported by $\mathcal{O}\cup\{y\}$ that is exactly vanishing on $\XX$. We append $y$ to $\mathcal{O}$ and, already, the list of terms deviates from Example~\ref{ex:ABM+GWN-middle-tolerance}.
    
  \item  Now $\mathcal{O}= [1,y]$, $T_1=[y,x]$, $G=[\,]$, and trial term $b= x$. The matrices
\begin{align}
       M_{x} = \begin{pmatrix}
            1 & 2 & \phantom{-}2\\ 
            1 & 1 & -4\\
           1 & 0 & \phantom{-}2
           \end{pmatrix}
            \ ,\quad M_y^{\intercal}M_x = \begin{pmatrix} 3 & 3 & 0 \\ 3 & 5 & 0 \\ 0 & 0 & 24\end{pmatrix}
        \quad\text{and}\quad
        D_{x}^2=\begin{pmatrix}
            0 & 0 & 0 \\ 0 & 3 & 0 \\ 0 & 0 & 3
        \end{pmatrix}\ .
   \end{align}
   lead to the generalized eigenvalues $2/3$, $8$, and $\infty$. We append $x$ to $\mathcal{O}$ and move to the next degree.

   \item Now $\mathcal{O}= [1,y,x]$, $T_2=[y^2,xy, x^2]$, $G=[\,]$, and trial term $b= y^2$. We need the gradient-weighted semi-norm of the trial term.
   \begin{align}
       \gwnorm{y^2}=\frac{1}{2}\enorm{\nabla y^2(\XX)}=\frac{1}{2}\enorm{\begin{pmatrix}
           0(\XX) \\ 2y(\XX)
       \end{pmatrix}}
       =\sqrt{5}\ .
   \end{align}
   Then
\begin{align}
       M_{y^2} = \begin{pmatrix}
            1 & 2 & \phantom{-}2 & 4\\ 
            1 & 1 & -4 & 1\\
           1 & 0 & \phantom{-}2 & 0
           \end{pmatrix}
           \ ,\quad M_{y^2}^{\intercal}M_{y^2} = \begin{pmatrix} 3 & 3 & 0 & 5\\ 3 & 5 & 0 & 9\\ 0 & 0 & 24& 4\\ 5 & 9 & 4 & 17\end{pmatrix}
        \quad\text{and}\quad
        D_{y^2}^2=\begin{pmatrix}
            0 & 0 & 0 & 0\\ 0 & 3 & 0 &0\\ 0 & 0 & 3& 0\\ 0 & 0 & 0 & 5
        \end{pmatrix}\ .
   \end{align}
   The generalized eigenvalues are 0, 2.2305, 8.1695, $\infty$. The zero eigenvalue has the gradient-weighted normalized eigenvector $g_{y^2}=0.2419y^2-0.0403x-0.4839y+0.0806$, which up to roundoff error is a multiple of $6y^2-x-12y+2$.

   \item Now $\mathcal{O}= [1,y,x]$, $T_2=[y^2,xy, x^2]$, $G=[g_{y^2}]$, and trial term $b= xy$. The kernel of
\begin{align}
       M_{xy} = \begin{pmatrix}
            1 & 2 & \phantom{-}2 & \phantom{-}4\\ 
            1 & 1 & -4 & -4\\
           1 & 0 & \phantom{-}2 & \phantom{-}0
           \end{pmatrix}
           \ ,\quad M_{xy}^{\intercal}M_{xy} = \begin{pmatrix} 3 & 3 & 0 & 0\\ 3 & 5 & 0 & 4\\ 0 & 0 & 24& 24\\ 0 & 4 & 24 & 32\end{pmatrix}
        \quad\text{and}\quad
        D_{xy}^2=\begin{pmatrix}
            0 & 0 & 0 & 0\\ 0 & 3 & 0 &0\\ 0 & 0 & 3& 0\\ 0 & 0 & 0 & 14.5
        \end{pmatrix}\ .
   \end{align}
   The generalized eigenvalues are 0, 1.1115, 9.7621, $\infty$. The zero eigenvalue has the gradient-weighted normalized eigenvector $g_{xy}=0.1841xy-0.1841x-0.3682y+0.3682$, which up to roundoff error is a multiple of $xy-x-2y+2=(y-1)(x-2)$.

    \item Now $\mathcal{O}= [1,y,x]$, $T_2=[y^2,xy, x^2]$, $G=[g_{y^2}, g_{xy}]$, and trial term $b= x^2$. The kernel of
\begin{align}
       M_{x^2} = \begin{pmatrix}
            1 & 2 & \phantom{-}2 & \phantom{1}4\\ 
            1 & 1 & -4 & 16\\
           1 & 0 & \phantom{-}2 & \phantom{1}4
           \end{pmatrix}
           \ ,\quad M_{x^2}^{\intercal}M_{x^2} = \begin{pmatrix} 3 & 3 & 0 & 24\\ 3 & 5 & 0 & 24\\ 0 & 0 & 24& -48\\ 24 & 24 & -48 & 288\end{pmatrix}
        \quad\text{and}\quad
        D_{x^2}^2=\begin{pmatrix}
            0 & 0 & 0 & 0\\ 0 & 3 & 0 &0\\ 0 & 0 & 3& 0\\ 0 & 0 & 0 & 24
        \end{pmatrix}\ .
   \end{align}
   The generalized eigenvalues are 0, 0.6667, 12.0000, $\infty$. The zero eigenvalue has the gradient-weighted normalized eigenvector $g_{x^2}=0.1667x^2+0.3333x-0.0000y-1.3333$, which up to roundoff error is a multiple of $x^2+2x-8=(x-2)(x+4)$.
   \end{enumerate}
\end{example}

All decisions and results in Example~\ref{ex:ABM+GWN-small-tolerance} will be the same if the tolerance satisfies $0\le \epsilon < \sqrt{2/3}$. For a tolerance $\sqrt{2/3}\le \epsilon<\sqrt{8}$ the decisions and results follow Example~\ref{ex:ABM+GWN-middle-tolerance}. Finally, a tolerance $\sqrt{8}\le \epsilon$ leads to an approximately vanishing ideal of $\XX$ which is the exact vanishing ideal of the single point $(0, 1)$:

\begin{example} [\textbf{Large Tolerance}]\label{ex:ABN+GWN-large-tolerance}
    Let $\epsilon\ge \sqrt{8}$ and again $\XX = \{p_1, p_2, p_3\}$ with
        $p_1 = (2,2)$, 
        $p_2 = (-4,1)$, and
        $p_3 = (2, 0)$.
We run the ABM+GWN algorithm again with the graded reverse lexicographic term ordering $\sigma$.
    \begin{enumerate}[(i)]
        \item This step is the same as in Example~\ref{ex:ABM+GWN-middle-tolerance}. The polynomial $g_y=(y-1)/\sqrt{3}$ is appended to $G$
        \item The next step uses $\mathcal{O} = [1]$, $T_1=[y,x]$, $G=[g_y]$, and trial term $b= x$. As in Example~\ref{ex:ABM+GWN-middle-tolerance}
    \begin{align}
        M_x^{\intercal}M_x = \mqty(3 & 0 \\ 0 & 24)\quad\text{and}\quad D_x^2=\begin{pmatrix}
        \gwnorm{1}^2 & 0 \\
        0 & \gwnorm{x}^2
    \end{pmatrix}
    =\mqty(0 & 0 \\ 0 & 3)\ .
    \end{align}
    The generalized eigenvalues are 8 and $\infty$. The generalized eigenvector belonging to eigenvalue 8 is $(0,1)^\intercal$. We append $g_x=x/\sqrt{3}$ to $G$. This finishes the trial terms of degree one. The algorithm terminates as there are no degree-two terms in $\partial\mathcal{O}$.
    \end{enumerate}
\end{example}

These three examples show that different tolerances produce quantized levels of different approximately vanishing ideals.

\begin{remark}[\textbf{Computational complexity}]\label{rem:time-complexity}
The additional computational cost that is incurred by the implementation of gradient-weighted normalization is small. The replacement of standard eigenvalue problem of $M^{\intercal}M$ to generalized version of ($M^{\intercal}M$, $D^2$) in Eq.~\eqref{eq:gep} does not change the original complexity $O(|\mathcal{O}|^3)$. The computational cost of $D$ is $O(n\cdot|\XX|\cdot E_n)$, where $E_n$ denotes the cost of evaluating an $n$-variate term at a point. Note that with careful implementation, most of the entries in $M$ and $D$ were computed in previous steps, which even reduces the complexity.. 
\end{remark}

\subsection{Counterexample - ABM with Coefficient Normalization Need Not Return an Order Ideal.}
\label{subsec:counterexample}

For the tolerance $\epsilon=1$, we compute again a basis of an $\epsilon$-vanishing ideal of the point set $\{(2,2),(-4,1),(2,0)\}$. Only this time, we replace the gradient-weighted normalization that is used in Example~\ref{ex:ABM+GWN-middle-tolerance} by coefficient normalization. Thus, we compute the result of the original ABM algorithm. Computationally, this means that the generalized eigenvalue problems are replaced by the corresponding ordinary eigenvalue problems. In other words, the diagonal matrix $D$ becomes the identity matrix.

The output is going to be a set of terms $\mathcal{O}$ that is not an order ideal. Example~\ref{ex:ABM+GWN-small-tolerance} for $0\le \epsilon< \sqrt{2/3}$, Example~\ref{ex:ABM+GWN-middle-tolerance} for $\sqrt{2/3} \le \epsilon<\sqrt{8}$, and Example~\ref{ex:ABN+GWN-large-tolerance} for $ \sqrt{8}\le\epsilon$ showed that for all tolerances, the ABM algorithm with gradient-weight normalization computed an order ideal for this set of points.

\begin{example}\label{ex:not-an-order-ideal}
    Let $\XX = \{p_1, p_2, p_3\}$ with
    \begin{align}
        p_1 = (2,2), \quad
        p_2 = (-4,1), \quad
        p_3 = (2, 0) \ .
    \end{align}
   We run the ABM algorithm with coefficient normalization, with tolerance $\epsilon=1$ and the graded reverse lexicographic term ordering $\sigma$, i.e., $y<x$. In this setting, the ABM algorithm returns a term set $\mathcal{O}$ that is not an order ideal. Here are the steps of the algorithm:

   \begin{enumerate}
       \item The algorithm starts with $\mathcal{O} = [1]$, $\partial\mathcal{O}=\{y,x\}$, and trial term $b= y$. Thus it computes 
    \begin{align}
        M_y = \mqty(1 & 2 \\ 
                    1 & 1   \\ 
                    1 & 0 
                    ) \quad\text{ and }\quad M_y^{\intercal}M_y = \mqty(3 & 3 \\ 3 & 5).
    \end{align}
    The eigenvalues are $\lambda_{1,2}=4\pm\sqrt{10}$, so $\lambda_{\min}\approx 0.8377 < \epsilon^2$. The eigenvector for the minimal eigenvalue produces 
    the coefficient-normalized basis polynomial $g_y=0.5847y-0.8112$. 
    
    \item The next step uses $\mathcal{O} = [1]$, $\partial\mathcal{O}=\{y,x\}$, $G=[g_y]$, and trial term $b= x$. Therefore 
    \begin{align}
        M_x = \mqty(1 & \phantom{-}2 \\ 
                    1 & -4 \\
                    1 & \phantom{-}2 ) \quad\text{ and }\quad M_x^{\intercal}M_x = \mqty(3 & 0 \\ 0 & 24).
    \end{align}
    Since all eigenvalues, 3 and 24, are greater than $\epsilon^2$, the set $\mathcal{O}$ is appended to $\mathcal{O} = [1, x]$. 
    
    \item Now $\mathcal{O} = [1, x]$, $\partial\mathcal{O}=\{xy, x^2\}$, $G=[g_y]$, and with trial term $b=xy$. We have
    \begin{align}
        M_{xy} = \mqty(1 & \phantom{-}2 & \phantom{-}4\\ 
                       1 & -4 & -4\\
                       1 & \phantom{-}2 & \phantom{-}0
                       ) \quad\text{ and }\quad M_{xy}^{\intercal}M_{xy} 
              = \mqty(3 & 0 & 0  \\ 
                      0 & 24 & 24 \\
                      0 & 24 & 32
                      )
    \end{align}
    with eigenvalues 3, 3.6689, and 52.3311. Since they are all larger than $\epsilon^2$, the list of terms becomes $\mathcal{O} = [1, x, xy]$, which is not an order ideal.

    \item Now $\mathcal{O} = [1, x, xy]$, $\partial\mathcal{O}=\{x^2, xy^2, x^2y\}$, $G=[g_y]$, and $b=x^2$. We have
    \begin{align}
        M_{x^2} = \mqty(1 & \phantom{-}2 & \phantom{-}4 & 4\\ 
                       1 & -4 & -4 & 16\\
                       1 & \phantom{-}2 & \phantom{-}0 & 4
                       ) \quad\text{ and }\quad M_{x^2}^{\intercal}M_{x^2} 
              = \mqty(3 & 0 & 0  & 24\\ 
                      0 & 24 & 24 & -48\\
                      0 & 24 & 32 & -48 \\
                      24 & -48 & -48 & 288)
    \end{align}
    with Matlab-computed eigenvalues -0.000, 3.6402, 35.4599, and 307.8999. The resulting border polynomial is $g_{x^2}=0.1204x^2+0.000xy+0.2408x-0.9631$. It is appended to $G$. It is the coefficient-normalized scalar multiple of the exactly vanishing polynomial $h_{x^2}=(x-2)(x+4)=x^2+2x-8$ with scalar $1/\sqrt{1^2+2^2+8^2}\approx 0.1204$. 

    \item Now $\mathcal{O} = [1, x, xy]$, $\partial\mathcal{O}=\{x^2, xy^2, x^2y\}$, $G=[g_y, g_{x^2}]$, and $b=xy^2$. We have
    \begin{align}
        M_{xy^2} = \mqty(1 & \phantom{-}2 & \phantom{-}4 & \phantom{-}8\\ 
                       1 & -4 & -4 & -4\\
                       1 & \phantom{-}2 & \phantom{-}0 & \phantom{-}0
                       ) \quad\text{ and }\quad M_{xy^2}^{\intercal}M_{xy^2} 
              = \mqty(3 & 0 & 0  & 4\\ 
                      0 & 24 & 24 & 32\\
                      0 & 24 & 32 & 48 \\
                      4 & 32 & 48 & 80)
    \end{align}
    with Matlab-computed eigenvalues -0.000, 3.1948, 10.3819, and 125.4233. The resulting coefficient-normalized polynomial is $g_{xy^2}=0.3721xy^2-0.7442xy+0.2481y-0.4961$, which is appended to $G$.
    
    \item Finally $\mathcal{O} = [1, x, xy]$, $\partial\mathcal{O}=\{x^2, xy^2, x^2y\}$, $G=[g_y, g_{x^2},g_{xy^2}]$, and $b=x^2y$. 
    \begin{align}
        M_{x^2y} = \mqty(1 & \phantom{-}2 & \phantom{-}4 & \phantom{-}8\\ 
                       1 & -4 & -4 & 16\\
                       1 & \phantom{-}2 & \phantom{-}0 & \phantom{-}0
                       ) \quad\text{ and }\quad M_{x^2y}^{\intercal}M_{x^2y} 
              = \mqty(3 & \phantom{-}0 & \phantom{-}0  & \phantom{-}24\\ 
                      0 & \phantom{-}24 & \phantom{-}24 & -32\\
                      0 & \phantom{-}24 & \phantom{-}32 & -48 \\
                      24 & -32 & -48 & 320)
    \end{align}
    with Matlab-computed eigenvalues -0.000, 3.4909, 42.0609, and 333.4482. The resulting coefficient-normalized polynomial is $g_{x^2y}=0.1085x^2y-0.2169xy+0.4339x-0.8677$, which is appended to $G$. The corresponding exactly vanishing polynomial is obtained from $y\cdot h_{x^2}=x^2y+2xy-8y$ which is reduced by $4\cdot h_{xy}=4(xy-2y-x+2)$. Then normalize $y\cdot h_{x^2}-4\cdot h_{xy}=x^2y-2xy+4x-8$ with respect to its coefficient norm.

    \item Since $\mathcal{O}$ does not contain border terms of degree four, the algorithm terminates with the outputs $\mathcal{O} = [1, x, xy]$ and $G=[g_y, g_{x^2},g_{xy^2},g_{x^2y}]$.
    \end{enumerate} 
\end{example}

This example produces a set of terms $\mathcal{O}$ that is not an order ideal. Anyway, $|\mathcal{O}|=|\XX|$, so $\mathcal{O}$ has maximal size.

\section{Features of Gradient-Weighted Normalization}
\label{sec:gradient-weighted-features}

Gradient-weighted normalized bases of approximately vanishing ideals have two key features related to changes of the point set $\XX$. These bases are robust against perturbations of $\XX$, and the bases scale nicely when $\XX$ is scaled.

\subsection{Robustness against Perturbations of the Data Points}

\begin{proposition}\label{prop:gradient-norm-upper-bound}
    Let $\mathbb{X}\subset\mathbb{R}^n$ be a non-empty, finite set of points.
    For every polynomial $g\in\mathbb{R}[x_1,\dots,x_n]$, we have
    \begin{align}
        \enorm{\nabla g(\XX)}\le \degree{g}\cdot \gwnorm{g}\ .
    \end{align}
    In particular, every gradient-weight normalized polynomial satisfies
    \begin{align}
        \enorm{\nabla g(\XX)}\le \degree{g}\ .
    \end{align}
\end{proposition}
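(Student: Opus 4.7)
The plan is to reduce the statement to a weighted triangle-inequality estimate followed by a carefully weighted Cauchy--Schwarz step that uses both the definition of $D(t)$ and the degree bound $D(t) \le \deg(t) \le \deg(g)$.

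First I would decompose $g = \sum_{t \in \supp(g)} c_t\, t$, so that linearity of the gradient gives $\nabla g(\XX) = \sum_{t \in \supp(g)} c_t \nabla t(\XX)$. By the very definition of the gradient-weighted semi-norm of a term, $\enorm{\nabla t(\XX)} = D(t)\cdot\gwnorm{t}$ for every non-constant $t$, and $\nabla 1(\XX) = 0$ so the constant term drops out. This converts the problem into controlling a weighted sum whose $t$-th summand involves $|c_t|\,D(t)\,\gwnorm{t}$, while the target upper bound $\deg(g)\cdot\gwnorm{g}$ is built from $\sqrt{\sum_t c_t^2 \gwnorm{t}^2}$.

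Next I would pass to the coordinate-wise expansion $\enorm{\nabla g(\XX)}^2 = \sum_{i,j} \bigl(\partial g/\partial x_i(p_j)\bigr)^2$, rewrite $\partial g/\partial x_i(p_j) = \sum_t c_t\, \partial t/\partial x_i(p_j)$, and apply Cauchy--Schwarz to this inner sum with the weighting $\deg_i(t)^2/D(t)^2$ on the coefficient side. The crucial algebraic identity making this work is $\sum_i \deg_i(t)^2 = D(t)^2$, so that the weights sum to $1$ across $i$ for each fixed $t$. On the other side of Cauchy--Schwarz, the factors $(\partial t/\partial x_i(p_j))^2 D(t)^2/\deg_i(t)^2$ reassemble into $\enorm{\nabla t(\XX)}^2/\gwnorm{t}^2 = D(t)^2$ after summing over $i$ and $j$, so the estimate collapses to $\enorm{\nabla g(\XX)}^2 \le \gwnorm{g}^2 \cdot \deg(g)^2$ once I also use $D(t) \le \deg(t) \le \deg(g)$ for every $t \in \supp(g)$.

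The main obstacle is precisely this last step: choosing the Cauchy--Schwarz weights so that the degree factor appears as $\deg(g)$ to the first power rather than as $\bigl(\sum_t D(t)^2\bigr)^{1/2}$ (which would carry an implicit $\sqrt{|\supp(g)|}$). The identity $\sum_i \deg_i(t)^2 = D(t)^2$ is what allows me to trade a sum-over-terms against a sum-over-coordinates and thereby avoid the support-size penalty; matching this algebra cleanly with the coordinate-wise Cauchy--Schwarz is the delicate bookkeeping part of the proof.

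Finally, the ``in particular'' statement is immediate: if $g$ is gradient-weighted normalized then $\gwnorm{g} = 1$, so the general inequality specializes to $\enorm{\nabla g(\XX)} \le \deg(g)$.
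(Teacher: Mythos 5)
Your plan breaks at the claimed ``reassembly'' step, and in fact no plan can succeed because the inequality is false as stated. You assert that after summing over $i$ and $j$ the factors $(\partial t/\partial x_i(p_j))^2\,D(t)^2/\degreek{i}{t}^2$ collapse to $\enorm{\nabla t(\XX)}^2/\gwnorm{t}^2 = D(t)^2$. This is not an identity: without the weights the sum already equals $\enorm{\nabla t(\XX)}^2$, which depends on $\XX$ and grows with $|\XX|$, whereas $D(t)^2$ is a fixed integer; and the weights $D(t)^2/\degreek{i}{t}^2\ge 1$ only push the sum further up whenever $t$ involves more than one indeterminate. Moreover, the identity $\sum_i \degreek{i}{t}^2 = D(t)^2$ normalizes a sum over the coordinate index $i$ for fixed $t$, yet your Cauchy--Schwarz is applied to a sum over terms $t$ for fixed $i$; it therefore cannot absorb the $i$-dependent coefficient factor $\sum_t c_t^2\degreek{i}{t}^2/D(t)^2$, and extracting that factor from the $i$-sum reintroduces $\cnorm{g}^2$ or, after a cruder bound, a $\sqrt{|\supp{g}|}$.

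The deeper problem is that the proposition itself fails. Take $n=1$, $\XX=\{1\}$, $g=x+x^2$. Then $\enorm{\nabla g(\XX)}=|g'(1)|=3$, while $\gwnorm{x}=\gwnorm{x^2}=1$, so $\gwnorm{g}=\sqrt{2}$ and $\degree{g}\cdot\gwnorm{g}=2\sqrt{2}\approx 2.83<3$; normalizing $g$ by $\sqrt2$ also defeats the ``in particular'' clause since $3/\sqrt2>2$. The same example breaks the paper's own proof, whose first inequality silently invokes the Pythagorean bound $\enorm{\sum_t c_t\nabla t(p)}^2\le\sum_t c_t^2\enorm{\nabla t(p)}^2$; this requires orthogonality of the vectors $\nabla t(p)$, which here are collinear. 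What one \emph{can} prove via triangle inequality and Cauchy--Schwarz is $\enorm{\nabla g(\XX)}\le\sum_t|c_t|\,D(t)\,\gwnorm{t}\le\degree{g}\sqrt{|\supp{g}|}\,\gwnorm{g}$, and that extra $\sqrt{|\supp{g}|}$ would then have to be carried through Proposition~\ref{prop:perturbation} as well.
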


\begin{proof}
    For a constant polynomial $g$, both $\enorm{\nabla g(\XX)}$ and the gradient-weighted semi-norm are zero, so equality holds. Next, let $g=\sum_{t\in\supp{g}} c_t\cdot t$ be non-constant, so $\deg g\ne 0$. For each term $t\in\supp{g}$, the Euclidean degree satisfies $D(t)=\sqrt{\sum_k {\degreek{k}{t}^2}}\le \degree{t} \le \degree{g}$. Then 
    \begin{align}
    \frac{\enorm{\nabla g(\XX)}}{\degree{g}}
    & =\sqrt{\sum_{p\in\XX}\enorm{\nabla g(p)}^2}/\degree{g}
    \le \sqrt{\sum_{p\in\XX}\sum_{t\ne 1} c_t^2 \enorm{\nabla t(p)}^2/\degree{g}^2}\\
    &\le \sqrt{\sum_{p\in\XX}\sum_{t\ne 1} c_t^2 \enorm{\nabla t(p)}^2/D_t^2}
    =\gwnorm{g}
\end{align}
\end{proof}

\begin{proposition}\label{prop:perturbation}
    Let $\XX = \{p_1,\ldots, p_m\}\subset\RR^n$ and $\epsilon \ge 0$. Let $(\mathcal{O}, G)\subset \TT\times \PP$ be the output of the ABM+GWN algorithm. Let $\Delta=\{\delta p_1,\ldots, \delta p_m\}\subset \RR^n$ be perturbations for each point. By computation of the algorithm
    each $g \in G$ is gradient-weighted normalized and $\epsilon$-vanishing on $\XX$. We have
        \begin{align}
        \enorm{g(\XX+\Delta)-g(\XX)}\le \degree{g}\cdot \norm{\Delta}_{\max}+o(\norm{\Delta}_{\max})
    \end{align}
    where $\|\Delta\|_{\max} = \max_{1\le i\le m} \enorm{\delta p_i}$, and $o(\cdot)$ denotes Landau's small o. In particular,
    \begin{align}
        \enorm{g(\XX+\Delta)} \le \epsilon + \degree{g}\cdot\norm{\Delta}_{\max}  + o(\norm{\Delta}_{\max})\ ,
    \end{align}
    so if $g$ is strictly $\epsilon$-vanishing on $\XX$, i.e., $\enorm{g(\XX)}<\epsilon$, then, for sufficiently small perturbations, $g$ will be $\epsilon$-vanishing on $\XX+\Delta$. 
\end{proposition}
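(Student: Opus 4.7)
The plan is to combine a first-order Taylor expansion of $g$ at each sample point with the gradient bound from Proposition~\ref{prop:gradient-norm-upper-bound}. Because $g$ is a polynomial, it is $C^\infty$, so Taylor's theorem gives, for each $i$,
\begin{align}
    g(p_i+\delta p_i) - g(p_i) = \nabla g(p_i)\cdot \delta p_i + R_i\ ,
\end{align}
where $|R_i|\le C_g \cdot \enorm{\delta p_i}^2$ for some constant $C_g$ depending on $g$ (and on a fixed neighborhood of $\XX$ that eventually contains all $p_i+\delta p_i$). In particular, $|R_i| = o(\enorm{\delta p_i})$ uniformly as $\norm{\Delta}_{\max}\to 0$.

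Next I would bound the evaluation-vector difference componentwise by Cauchy--Schwarz, $|g(p_i+\delta p_i)-g(p_i)|\le \enorm{\nabla g(p_i)}\cdot \enorm{\delta p_i}+|R_i|$, square and sum over $i$, and take square roots. Using $\enorm{\delta p_i}\le \norm{\Delta}_{\max}$ for every $i$, this yields
\begin{align}
    \enorm{g(\XX+\Delta)-g(\XX)} \ \le\ \enorm{\nabla g(\XX)}\cdot \norm{\Delta}_{\max} + o(\norm{\Delta}_{\max})\ ,
\end{align}
where the remainder is absorbed via $\bigl(\sum_i R_i^2\bigr)^{1/2}\le \sqrt{m}\cdot C_g\cdot \norm{\Delta}_{\max}^2 = o(\norm{\Delta}_{\max})$. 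Then I invoke Proposition~\ref{prop:gradient-norm-upper-bound}: since $g$ is gradient-weighted normalized, $\enorm{\nabla g(\XX)}\le \deg(g)\cdot \gwnorm{g} = \deg(g)$, which gives the first claimed inequality.

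For the second claim, I would simply apply the triangle inequality
\begin{align}
    \enorm{g(\XX+\Delta)} \ \le\ \enorm{g(\XX)} + \enorm{g(\XX+\Delta)-g(\XX)} \ \le\ \epsilon + \deg(g)\cdot \norm{\Delta}_{\max} + o(\norm{\Delta}_{\max})\ ,
\end{align}
using that the ABM+GWN algorithm guarantees $\enorm{g(\XX)}\le \epsilon$ by Theorem~\ref{thm:ABM+GWN}(iii). The final ``strictly $\epsilon$-vanishing'' consequence is immediate: if $\enorm{g(\XX)}<\epsilon$, pick $\norm{\Delta}_{\max}$ small enough that $\deg(g)\cdot \norm{\Delta}_{\max}+o(\norm{\Delta}_{\max})\le \epsilon - \enorm{g(\XX)}$.

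I do not expect any real obstacle here; the content of the result lies entirely in tying the differential bound $\enorm{\nabla g(\XX)}$ to the normalization $\gwnorm{g}=1$, which is exactly what Proposition~\ref{prop:gradient-norm-upper-bound} provides. The only point requiring a touch of care is bookkeeping of the Taylor remainder when combining pointwise bounds into a Euclidean norm over $m$ terms, so that the aggregate is still $o(\norm{\Delta}_{\max})$ rather than $o(1)$; this is handled by the uniform quadratic bound $|R_i|\le C_g\enorm{\delta p_i}^2$ since there are only finitely many points.
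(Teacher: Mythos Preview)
Your proposal is correct and follows essentially the same approach as the paper's proof: a componentwise first-order Taylor expansion, Cauchy--Schwarz to bound each linear term by $\enorm{\nabla g(p_i)}\cdot\enorm{\delta p_i}$, aggregation into $\enorm{\nabla g(\XX)}\cdot\norm{\Delta}_{\max}$, and then Proposition~\ref{prop:gradient-norm-upper-bound}. If anything, your bookkeeping of the remainder (via a uniform quadratic bound over finitely many points) is slightly more explicit than the paper's.
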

\begin{proof} 
    Apply the Taylor expansion in each coordinate
    \begin{align}
    \begin{pmatrix}
        g(p_1+\delta p_1)\\
        \vdots\\
        g(p_m+\delta p_m)
    \end{pmatrix}
        =
    \begin{pmatrix}
        g(p_1)\\
        \vdots\\
        g(p_m)
    \end{pmatrix}  
    +\begin{pmatrix}
        \nabla g(p_1)\cdot \delta p_1\\
        \vdots\\
        \nabla g(p_m)\cdot \delta p_m 
    \end{pmatrix}+
    \begin{pmatrix}
    o\bigl(\enorm{\delta p_1}\bigr)\\
    \vdots\\
    o\bigl(\enorm{\delta p_m}\bigr)
    \end{pmatrix}\quad.   
    \end{align}
    The Cauchy-Schwarz inequality leads to
    \begin{align}
    \begin{pmatrix}
        g(p_1+\delta p_1)\\
        \vdots\\
        g(p_m+\delta p_m)
    \end{pmatrix}
        =
    \begin{pmatrix}
        g(p_1)\\
        \vdots\\
        g(p_m)
    \end{pmatrix}  
    +\begin{pmatrix}
        \enorm{\nabla g(p_1)}\cdot \enorm{\delta p_1}\\
        \vdots\\
        \enorm{\nabla g(p_m)}\cdot \enorm{\delta p_m}
    \end{pmatrix}+
    \begin{pmatrix}
    o\bigl(\enorm{\delta p_1}\bigr)\\
    \vdots\\
    o\bigl(\enorm{\delta p_m}\bigr)
    \end{pmatrix}\quad.
    \end{align}
    We obtain for the Euclidean norm
    \begin{align}
    \enorm{g(\XX+\Delta)}
        &=
    \enorm{g(\XX)}
    +\sqrt{\sum_{i=1}^m
        \enorm{\nabla g(p_i)}^2\cdot \enorm{\delta p_i}^2}+
    o\bigl(\|{\Delta}\|_{\max}\bigr)\\
    &\le
    \epsilon
    +\|\Delta\|_{\max}\cdot\sqrt{\sum_{i=1}^m\enorm{\nabla g(p_i)}^2}+
    o\bigl(\|{\Delta}\|_{\max}\bigr)\\
    &=
    \epsilon
    +\|\Delta\|_{\max}\cdot\enorm{\nabla g(\XX)}+
    o\bigl(\|{\Delta}\|_{\max}\bigr)\quad.
    \end{align}
    Then apply Proposition~\ref{prop:gradient-norm-upper-bound}.
\end{proof}
Proposition~\ref{prop:perturbation} connects the change in algebraic distance (i.e., $\norm{g(\XX+\Delta) - g(\XX))}$) to that in geometric distance (i.e., $\norm{\Delta}_{\max}$). Importantly, our gradient-weighted normalization removes the dependency on $\XX$ from the right-hand side of the inequalities. If one uses coefficient normalization, the second term becomes $\degree{g}\cdot\norm{\Delta}_{\max} \cdot R$ with some $R \ge 0$ depending on $\mathbb{X}$; thus, the impact of the perturbations on the extent of vanishing is unclear.

\subsection{Gradient-Weighted Normalization Produces Invariance with Respect to Data Point Scaling}

The key feature of implementing gradient-weighted normalization is that the structure of the computed approximate basis of an approximately vanishing ideal is invariant of the scale of the points and tolerance. More precisely, the set of terms, its border, and the supports of the basis polynomials are invariant. On the other hand, an example is going to show that the computation with coefficient normalization may be affected by the scale of the points and tolerance.

\begin{theorem}\label{thm:scaling-consistency}
Let $\XX\subset\RR^n$ be a set of points, $\epsilon \ge 0$ a tolerance, and $\sigma$ a degree-compatible term ordering. Moreover, let $\alpha > 0$ be a scaling factor. On the one hand, apply the ABM+GWN algorithm to $\XX$ with tolerance $\epsilon$ to obtain the result  $(\mathcal{O}, G)$. On the other hand, apply the ABM+GWN algorithm to the scaled points $\alpha\cdot \XX$ with the scaled tolerance $(\alpha\cdot\epsilon)$ to obtain $({\mathcal{O}_\alpha}, G_\alpha)$. 

The ABM+GWN algorithm is scale-invariant in the following sense. 
\begin{enumerate}[(i)]
\item The set of terms are the same, $\mathcal{O}_\alpha=\mathcal{O}$.
\item The polynomials in $G_\alpha=[g_{\alpha,1},\dots,g_{\alpha,r}]$ are coefficient-wise scaled versions of those in $G=[g_1,\dots,g_r]$, namely the coefficients $c_{\alpha,i,t}$ and $c_{i,t}$ of a term $t$ are transformed into one another via $\alpha^{\deg t-1}\cdot c_{\alpha,i,t}=c_{i,t}$. This transformation rule only depends on the degree of $t$ and is independent of the polynomial index $i$. This relationship can be expressed invariantly as  
\begin{align}
    g_{\alpha,i}(\alpha\cdot \mathbf{p})=\alpha\cdot g_{i}\bigl(\alpha^{-1}\cdot(\alpha\cdot \mathbf{p})\bigr) \quad\text{for}\ 1\le i\le r\ . 
\end{align}
\end{enumerate}
\end{theorem}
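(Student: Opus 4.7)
The plan is induction on the iterations of the ABM+GWN algorithm. The fundamental scaling identities, valid for any term $t$ of degree $d=\deg t$, are the homogeneity of evaluation, $t(\alpha p)=\alpha^d\, t(p)$, and the induced scaling of the gradient-weighted semi-norm, $\gwnormx{t}{\alpha\XX}=\alpha^{d-1}\gwnorm{t}$. The latter follows because each partial derivative of $t$ is homogeneous of degree $d-1$, while the Euclidean degree $D(t)$ is scale-independent; the identity is also consistent with the constant term, where both sides vanish. Since the term ordering $\sigma$ ignores the point set, at every iteration the trial term $b$ produced from $\partial\mathcal{O}$ coincides in the two runs, provided the algorithm has maintained $\mathcal{O}_\alpha=\mathcal{O}$ up to that point.

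Now fix an iteration and assume inductively that $\mathcal{O}_\alpha=\mathcal{O}=[t_1,\dots,t_s]$. Set $\Sigma=\mathrm{diag}(\alpha^{\deg t_1},\dots,\alpha^{\deg t_s},\alpha^{\deg b})$. The evaluation matrix transforms as $M_\alpha=M\Sigma$, so $M_\alpha^\intercal M_\alpha=\Sigma M^\intercal M\Sigma$. The semi-norm identity packages into $D_\alpha=\alpha^{-1}\Sigma D$ and hence $D_\alpha^2=\alpha^{-2}\Sigma D^2\Sigma$. Substituting $\mathbf{v}_\alpha=\alpha\Sigma^{-1}\mathbf{v}$ into the scaled generalized eigenvalue problem $M_\alpha^\intercal M_\alpha\mathbf{v}_\alpha=\lambda_\alpha D_\alpha^2\mathbf{v}_\alpha$ and left-multiplying by $\alpha^{-1}\Sigma^{-1}$ collapses it to $M^\intercal M\mathbf{v}=(\lambda_\alpha/\alpha^2)\,D^2\mathbf{v}$. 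Hence the finite eigenvalues correspond bijectively via $\lambda_\alpha=\alpha^2\lambda$, and since this correspondence is order-preserving, $\lambda_{\alpha,\min}=\alpha^2\lambda_{\min}$. The algorithmic test $\sqrt{\lambda_{\alpha,\min}}\le\alpha\epsilon$ is therefore equivalent to $\sqrt{\lambda_{\min}}\le\epsilon$, so the two runs take the same branch---augmenting $\mathcal{O}$ by $b$ or producing a border polynomial---and the inductive hypothesis is preserved, yielding (i).

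When a border polynomial is produced, the gradient-weighted normalization is compatible with the change of variables: a direct computation gives $\mathbf{v}_\alpha^\intercal D_\alpha^2\mathbf{v}_\alpha=\mathbf{v}^\intercal D^2\mathbf{v}=1$, and the sign convention on the coefficient of $b$ is preserved because $\alpha^{1-\deg b}>0$. Reading off coordinates, the coefficient $c_{\alpha,i,t}$ of a term $t$ in $g_{\alpha,i}$ equals $\alpha^{1-\deg t}$ times the corresponding coefficient $c_{i,t}$ in $g_i$, which is exactly (ii); the invariant form follows from the one-line computation $g_{\alpha,i}(\alpha p)=\sum_t c_{\alpha,i,t}\,\alpha^{\deg t}\,t(p)=\alpha\, g_i(p)$. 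I expect the main subtlety to be bookkeeping around the constant term $1$: its semi-norm is zero in both scales, so $D$ and $D_\alpha$ are only positive semi-definite and each carry an ``infinite'' generalized eigenvalue. The key check is that the identity $D_\alpha=\alpha^{-1}\Sigma D$ and the change of variables $\mathbf{v}_\alpha=\alpha\Sigma^{-1}\mathbf{v}$ send finite minimizers to finite minimizers, which follows because the rank-deficient diagonal coordinate of $D$ and $D_\alpha$ is the same (namely the constant-term slot) and $\Sigma$ is positive-definite.
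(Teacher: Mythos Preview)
Your proof is correct and follows essentially the same inductive strategy as the paper: relate the two generalized eigenvalue problems via the diagonal scaling matrix $\Sigma=\mathrm{diag}(\alpha^{\deg t_1},\dots,\alpha^{\deg b})$, deduce $\lambda_{\alpha,\min}=\alpha^2\lambda_{\min}$, and conclude that both runs branch identically. If anything, you are more careful than the paper in tracking the extra factor of $\alpha$ in the eigenvector transformation $\mathbf{v}_\alpha=\alpha\Sigma^{-1}\mathbf{v}$, which is precisely what makes the normalized coefficients satisfy $c_{\alpha,i,t}=\alpha^{1-\deg t}c_{i,t}$ rather than $\alpha^{-\deg t}c_{i,t}$; you also explicitly address the semi-definiteness issue at the constant term, which the paper leaves implicit.
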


\begin{proof}
    We show inductively that the steps of the ABM+GWN algorithm on the scaled data correspond to those on the original data. Both start with the lists $\mathcal{O}=\mathcal{O}_\alpha=[1]$ and $G=G_\alpha=[]$. Since both runs of the algorithm use the same term ordering $\sigma$, the loops are performed in the same order. 

    Now consider an arbitrary step that starts with $\mathcal{O}_\alpha=[t_1,\dots, t_s]=\mathcal{O}$ and with the $\sigma$-dependent, but scale-indepedent trial term $b\in\partial\mathcal{O}$. The evaluation matrices for the different scales are related in the following way.
    \begin{align}
        M_\alpha &=\begin{pmatrix}
             t_1(\alpha\cdot\XX) &\dots & t_s(\alpha\cdot\XX) & b(\alpha\cdot\XX) 
        \end{pmatrix}
        =\begin{pmatrix}
             \alpha^{\deg t_1}t_1(\XX) &\dots & \alpha^{\deg t_s}t_s(\XX) & \alpha^{\deg b}b(\XX) 
        \end{pmatrix}\\
        &=\begin{pmatrix}
             t_1(\XX) &\dots & t_s(\XX) & b(\XX) 
        \end{pmatrix}\cdot A= M\cdot A\ ,
    \end{align}
    where $A$ is the diagonal scaling matrix $\text{diag}(\alpha^{\deg t_1},\dots, \alpha^{\deg t_s},\alpha^{\deg b})$, which is invertible. Hence, the left-hand side of the generalized eigenvalue problem of the scaled points uses the matrix $M_\alpha^\intercal M_\alpha=A^\intercal M^\intercal M A$.

    For the right-hand side, the concept  $\|t\|_{\nabla,\alpha\XX}$ is notationally tricky. The gradient-weighted semi-norm requires taking the gradient first and then plugging in the scaled points---the scaling factor does not lead to using the chain rule. So, the scaled points are plugged into differentiated terms of degree $\deg t-1$. Therefore, we replace the possibly misleading notation $\enorm{\nabla t(\alpha\cdot\XX)}$ by $\enorm{\nabla t\big\rvert_{\alpha\cdot \XX}}$ and obtain 
    \begin{align}
        \|t\|_{\nabla,\alpha\XX}=\frac{\enorm{\nabla t\big\rvert_{\alpha\cdot \XX}}}{D(t)}= \alpha^{\deg t-1}\enorm{\nabla t(\XX)}/D(t)=\alpha^{\deg t-1}\cdot \gwnorm{t}\ .
    \end{align}
    Hence, the diagonal matrix of the scaled points relates to the one of the original points as $D_\alpha=D\cdot \frac{1}{\alpha}A$. Therefore, the scaled version of the generalized eigenvalue problem is
    \begin{align}
        M^\intercal_\alpha M_\alpha \mathbf{v}_\alpha = \lambda_\alpha D^\intercal_\alpha D_\alpha \mathbf{v}_\alpha 
        \ \Leftrightarrow\  A^\intercal M^\intercal M A\mathbf{v}_\alpha=\lambda_\alpha \frac{1}{\alpha}A^\intercal D^\intercal D\frac{1}{\alpha}A\mathbf{v}_\alpha
        \ \Leftrightarrow\  M^\intercal M (A\mathbf{v}_\alpha)=\frac{\lambda_\alpha}{\alpha^2}D^\intercal D(A\mathbf{v}_\alpha)
        \ .
    \end{align}
     Hence the generalized eigenvalues of the scaled data points and the original data points are related by $\lambda_{\alpha}/\alpha^2=\lambda$, and the eigenvectors are related via the invertible transformation $\mathbf{v}_\alpha=A^{-1}\mathbf{v}$. The respective normalization conditions are
    \begin{align}
        \text{(original)}\quad \mathbf{v}^\intercal D^\intercal D\mathbf{v}=1 \qquad\text{and}\quad \text{(scaled)}\quad
        \mathbf{v}_\alpha^\intercal D_\alpha^\intercal D_{\alpha}\mathbf{v}_\alpha=1
        \ \Leftrightarrow\ \mathbf{v}_\alpha^\intercal A^\intercal D^\intercal D A\mathbf{v}_\alpha=1\ .
    \end{align}
    
    The decision $\sqrt{\lambda_{\min}}\le \epsilon$ for the original data points is equivalent to the one $\sqrt{\lambda_{\alpha,{\min}}}=\alpha\cdot\sqrt{\lambda_{\min}}\le \alpha\cdot\epsilon$ for the scaled data points. Hence, the decision of whether the trial term is appended to the list of terms or becomes the border term of an $\epsilon$-vanishing polynomial is independent of the data point scaling.
    
    The transformation of the generalized eigenvectors $A\mathbf{v}_\alpha=\mathbf{v}$ gives the scaling of the polynomial coefficients.
\end{proof}

\begin{example}\label{example:scaling-consistency}
We started with six equiangular points on the unit circle, $\tilde{\XX} = \{(\cos k\pi/3, \sin k\pi/3)\mid 1\le k\le 6\}$, perturbed them with additive two-dimensional Gaussian noise $\mathcal{N}(\bigl(0,0), 0.1\cdot I_2\bigr)$, and rounded the coordinates to hundredths to create the point set
\begin{align}
    \XX = \{( 0.41,  0.86), (-0.50,  0.85), (-1.12, -0.05), (-0.55, -0.97), ( 0.44, -0.91), ( 1.01, -0.02)\}\ .
\end{align}
We apply the ABM+GWN algorithm to $\XX$, with tolerance $\epsilon=0.1$, and the graded reverse lexicographic order $\sigma$. The computed order ideal is $\mathcal{O} = \{1, x, y, x^2, xy, x^3\}$. The computed basis $G$ consists of one quadratic, two degree-four polynomials, and two cubic polynomials, 
\begin{align}
    g_1 &= 1.0 x_{0}^{2} - 0.09411 x_{0} x_{1} + 1.126 x_{1}^{2} + 0.1173 x_{0} + 0.1122 x_{1} - 1.126, \quad with\quad \enorm{g_1} = 0.028, \\
    g_2 &= 1.0 x_{0}^{3} - 255.6 x_{0}^{2} x_{1} - 3.994 x_{0}^{2} - 25.81 x_{0} x_{1} + 1.766 x_{0} + 57.28 x_{1} - 3.33, \quad with\quad \enorm{g_2} = 0.0, \\
    g_3 &= 1.0 x_{0}^{3} + 1.125 x_{0} x_{1}^{2} + 0.09809 x_{0}^{2} + 0.1216 x_{0} x_{1} - 1.13 x_{0} - 0.02142 x_{1} + 0.01206, \quad with\quad \enorm{g_3} = 0.0, \\
    g_4 &= 1.0 x_{0}^{4} + 0.209 x_{0}^{3} - 1.343 x_{0}^{2} + 0.009772 x_{0} x_{1} - 0.136 x_{0} - 0.01868 x_{1} + 0.2515, \quad with\quad \enorm{g_4} = 0.0, \\
    g_5 &= 1.0 x_{0}^{3} x_{1} + 0.01684 x_{0}^{3} - 0.01374 x_{0}^{2} - 0.2342 x_{0} x_{1} + 0.01319 x_{0} + 0.02255 x_{1} - 0.0003311, \quad with\quad \enorm{g_5} = 0.0.
\end{align}
Particularly, the zero set of $g_1$ is close to the unit circle. 

For the scaled inputs $(\alpha\cdot \XX, \alpha\cdot\epsilon)$ with $\alpha = 0.1$, the ABM+GWN algorithm computes $G_\alpha$, which has the same configuration as $G$. The order ideal is unchanged, i.e., $\mathcal{O}_{\alpha} = \mathcal{O}$. For example, Let $g_{i,\alpha} \in G_\alpha$ be the quadratic polynomial corresponding to $g_i$, 
\begin{align}
    g_{1,\alpha} &= 1.0 x_{0}^{2} - 0.0941 x_{0} x_{1} + 1.126 x_{1}^{2} + 0.01173 x_{0} + 0.01122 x_{1} - 0.01126, \quad with\quad \enorm{g_{1,\alpha}} = 0.0028, \\
    g_{2,\alpha} &= 1.0 x_{0}^{3} - 255.6 x_{0}^{2} x_{1} - 0.3994 x_{0}^{2} - 2.581 x_{0} x_{1} + 0.01766 x_{0} + 0.5728 x_{1} - 0.00333, \quad with\quad \enorm{g_{2,\alpha}} = 0.0, \\
    g_{3,\alpha} &= 1.0 x_{0}^{3} + 1.125 x_{0} x_{1}^{2} + 0.009809 x_{0}^{2} + 0.01216 x_{0} x_{1} - 0.0113 x_{0} - 0.0002142 x_{1}, \quad with\quad \enorm{g_{3,\alpha}} = 0.0, \\
    g_{4,\alpha} &= 1.0 x_{0}^{4} + 0.0209 x_{0}^{3} - 0.01343 x_{0}^{2} + 9.772 \cdot 10^{-5} x_{0} x_{1} - 0.000136 x_{0} - 1.868 \cdot 10^{-5} x_{1}, \quad with\quad \enorm{g_{4,\alpha}} = 0.0, \\
    g_{5,\alpha} &= 1.0 x_{0}^{3} x_{1} + 0.001684 x_{0}^{3} - 0.0001374 x_{0}^{2} - 0.002342 x_{0} x_{1} + 1.319 \cdot 10^{-5} x_{0} + 2.255 \cdot 10^{-5} x_{1}, \  with \ \  \enorm{g_{5,\alpha}} = 0.0.
\end{align}
The coefficients of the quadratic, linear, and constant terms in $g_{1, \alpha}$ are $0.1^{-1}$, $0.1^{0}$, and $0.1$ times those in $g_1$, respectively. 

On the other hand, the ABM algorithm with coefficient normalization computes the bases sets with different configurations for $(\XX, \epsilon)$ and $(\alpha\cdot \XX, \alpha\cdot\epsilon)$. The order ideal is $\mathcal{O} = \{1, x, y, x^2, xy, x^3\}$ for the former and $\mathcal{O}_{\alpha} = \{1, x, y, x^2\}$ for the latter.
See Figure~\ref{fig:scaling}.
\end{example}
 \begin{figure}[t]
  \centering
  \includegraphics[width=\linewidth]{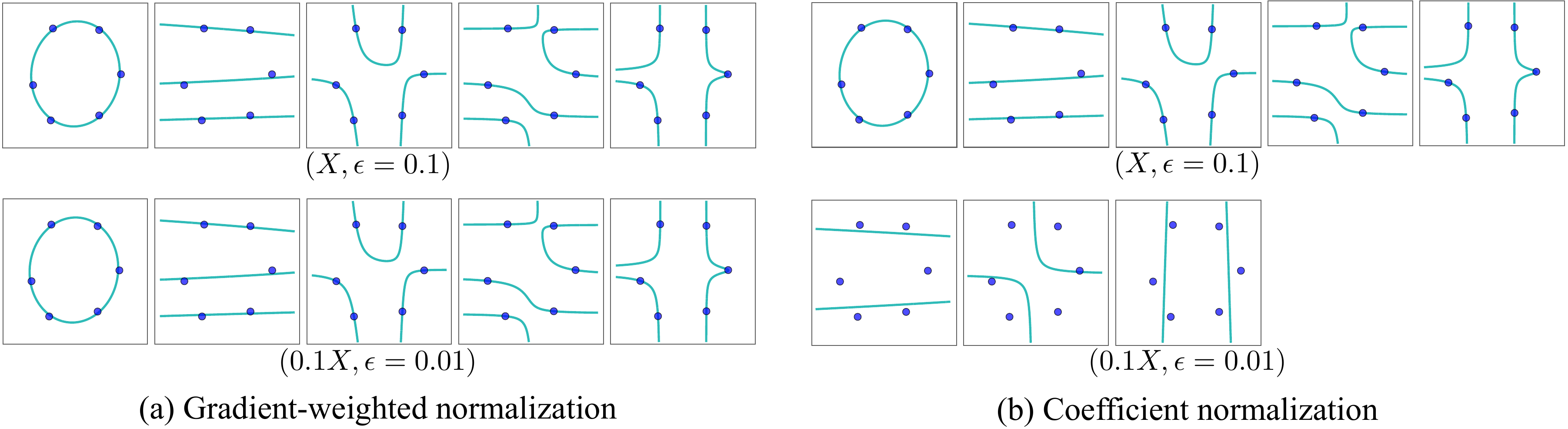}
  \caption{Numerical example of the scaling invariance of gradient-weighted normalization (Theorem~\ref{thm:scaling-consistency}).  (a)~Scaling of the points and tolerance does not have an essential effect on the result of the ABM+GWN algorithm. (b)~In contrast, ABM (with coefficient normalization) sometimes responds differently to scaled data.}
  \label{fig:scaling}
\end{figure}

\section{Coefficient-Normalized Bases Depend on the Data Point Scale}
\label{sec:coefficient-scaling-dependence}

Theorem~\ref{thm:scaling-consistency} shows that gradient-weighted normalization leads to a structural invariance of the computed ideal basis. The approximate bases of the scaled and unscaled point sets correspond to one another. This correspondence breaks down when coefficient normalization is used instead, as illustrated by Example~\ref{example:scaling-consistency}. A reason for this is that coefficient normalization ignores the data points and their scale.

More specifically, a term $t\in \mathcal{O}$---by definition, a term is coefficient-normalized---must have a sufficiently large evaluation at the data points,
\begin{align}
    \enorm{t(\XX)}>\epsilon\ .
\end{align}
For the constant term $t=1$, this establishes an upper bound for the tolerance $\epsilon$ that depends on the size of the point set, $\sqrt{|\XX|}>\epsilon$. For a non-constant term $t$ scaling of the point set leads to
\begin{align}
    \enorm{t(\alpha\cdot \XX)}=\alpha^{\deg t}\cdot\enorm{t(\XX)}\ .
\end{align}
This is already a strong indication that scaling the data points may affect the structure of $\mathcal{O}$. Scaling by a large parameter may enlarge $\mathcal{O}$; scaling by a small parameter may shrink $\mathcal{O}$. This indication is proven in the following setting.

\begin{theorem} \label{thm:failure-scale-invariance}
    Let $\XX \subset\RR^n$ be a non-empty, finite set of points, $\epsilon>0$ a positive tolerance, and $\sigma$ a degree-compatible term ordering. Let $\mathcal{O}$ and $G$ be the output of the ABM algorithm (with coefficient normalization).

    If $\mathcal{O}$ contains an at least quadratic term, then there exists a scaling parameter $\alpha>0$ such that the ABM algorithm with the same term ordering applied to $\alpha\cdot \XX$ with tolerance $\epsilon\cdot \alpha$ produces a different set of terms $\mathcal{O}_\alpha$ and, necessarily, a significantly different set of basis polynomials $G_\alpha$.
\end{theorem}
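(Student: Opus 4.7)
The plan is to exhibit a single small scaling parameter $\alpha>0$ for which a specific degree-$\ge 2$ term $t^{*}\in\mathcal{O}$ cannot appear in $\mathcal{O}_{\alpha}$, irrespective of the intermediate decisions the rescaled run of ABM makes. The crux is a scaling mismatch: in the coefficient-normalized eigenvalue problem the constraint $\mathbf{v}^{\intercal}\mathbf{v}=1$ is unchanged under $\XX\to\alpha\XX$, but the evaluation column of a degree-$d$ term is $\alpha^{d}$ times the original, which for $d\ge 2$ decays strictly faster than the rescaled tolerance $\alpha\epsilon$.

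First I would fix a term $t^{*}\in\mathcal{O}$ with $d=\degree{t^{*}}\ge 2$ and verify that $\enorm{t^{*}(\XX)}>0$. At the step of the original ABM run at which $t^{*}$ was a trial term, the coefficient vector that is $1$ at $t^{*}$ and zero elsewhere is already of unit coefficient norm and yields $\sqrt{\lambda_{\min}}\le\enorm{t^{*}(\XX)}$; if this evaluation vanished, then $\lambda_{\min}=0\le\epsilon^{2}$ and ABM would have appended a vanishing polynomial to $G$ instead of adding $t^{*}$ to $\mathcal{O}$, contradicting $t^{*}\in\mathcal{O}$.

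Next I would pick any $\alpha$ with
\begin{align}
0\;<\;\alpha\;<\;\left(\frac{\epsilon}{\enorm{t^{*}(\XX)}}\right)^{1/(d-1)},
\end{align}
which is possible because $d-1\ge 1$ and $\enorm{t^{*}(\XX)}>0$; this guarantees $\alpha^{d}\enorm{t^{*}(\XX)}<\alpha\epsilon$. Running ABM on $(\alpha\XX,\alpha\epsilon)$ to produce $(\mathcal{O}_{\alpha},G_{\alpha})$, I argue that $t^{*}\notin\mathcal{O}_{\alpha}$ by cases. If $t^{*}$ is never a trial term during the rescaled run, then, since $\mathcal{O}_{\alpha}$ only grows by appending processed trial terms, $t^{*}$ cannot end up in $\mathcal{O}_{\alpha}$. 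Otherwise, at the step that processes $t^{*}$, the coefficient-normalized indicator $\mathbf{v}=\mathbf{e}_{t^{*}}$ gives
\begin{align}
\lambda_{\min}^{\alpha}\;\le\;\enorm{M_{\alpha}\mathbf{e}_{t^{*}}}^{2}\;=\;\enorm{t^{*}(\alpha\XX)}^{2}\;=\;\alpha^{2d}\,\enorm{t^{*}(\XX)}^{2}\;<\;(\alpha\epsilon)^{2},
\end{align}
so the algorithm appends a vanishing polynomial to $G_{\alpha}$ and, again, does not place $t^{*}$ into $\mathcal{O}_{\alpha}$. In either case $t^{*}\in\mathcal{O}\setminus\mathcal{O}_{\alpha}$, hence $\mathcal{O}_{\alpha}\ne\mathcal{O}$.

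To finish, I would deduce the claim about $G_{\alpha}$ from the border structure. Each polynomial produced by ABM is indexed by a border term of the output term set, so it suffices to show $\partial\mathcal{O}_{\alpha}\ne\partial\mathcal{O}$: picking any $t\in\mathcal{O}\setminus\mathcal{O}_{\alpha}$ of minimal degree and writing $t=x_{k}\cdot s$ with $s$ in the order ideal, minimality forces $s\in\mathcal{O}\cap\mathcal{O}_{\alpha}$, so $t\in\partial\mathcal{O}_{\alpha}\setminus\partial\mathcal{O}$, and consequently the supports of $G_{\alpha}$ and $G$ differ. There is no serious obstacle: the argument rests entirely on the exponent mismatch $\alpha^{d}$ vs.\ $\alpha$. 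The only point requiring care is ruling out the degenerate case $\enorm{t^{*}(\XX)}=0$, which was handled at the outset.
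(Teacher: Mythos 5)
Your proof is correct and uses essentially the same idea as the paper: exploit that for a degree-$d\ge 2$ term the evaluation norm scales as $\alpha^{d}$ while the tolerance scales as $\alpha$, and choose $\alpha$ below the threshold $(\epsilon/\enorm{t^{*}(\XX)})^{1/(d-1)}$ (the paper writes this same bound as $\gamma^{-1/(d-1)}$ with $\gamma = \enorm{t(\XX)}/\epsilon$, noting immediately that $\gamma>1$ from the ABM output invariant, which also makes your opening positivity check redundant). The extra steps you include---the case split on whether $t^{*}$ is ever a trial term, and the concluding argument that $\partial\mathcal{O}_{\alpha}\ne\partial\mathcal{O}$ forces $G_{\alpha}$ to differ---are sound elaborations that the paper leaves implicit.
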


\begin{proof} Let $t\in\mathcal{O}$ be an at least quadratic term. Then $\enorm{t(\XX)}>\epsilon$, i.e., $\enorm{t(\XX)}=\gamma\cdot\epsilon$ for some $\gamma>1$. Define $0<\alpha<\sqrt[\deg t-1]{1/\gamma}$. Then $\alpha^{\deg t-1}\cdot \gamma<1$ and
\begin{align}
    \enorm{t(\alpha\cdot\XX)}=\alpha^{\deg t}\cdot \enorm{t(\XX)}=\alpha^{\deg t}\cdot \gamma\cdot\epsilon
    < \alpha\cdot\epsilon\ .
\end{align}
Therefore $t\notin \mathcal{O}_\alpha$.
\end{proof}

The scale change need not be large to create a change in the computed set of terms $\mathcal{O}$. The closer $\gamma$ is to 1 and the higher the degrees in $\mathcal{O}$, the closer $\alpha$ will be to 1.

The influence of the scaling parameter on the algebraic structure of bases has hardly been discussed in the literature. \cite{heldt2009approximate} discuss scaling from a numerical perspective and propose to scale the point set into $[-1, 1]^n$. However, as a result of Theorem~\ref{thm:failure-scale-invariance} any scaling in combination with coefficient normalization has algebraic structural consequences. Coefficient normalization should be replaced by gradient-weighted normalization to avoid these consequences.

\section{Application to Perturbed, Scaled Samples -- Numerical Experiments}\label{sec:numerical-experiments}

We now demonstrate that the scaling of data points is a crucial factor in the success of the approximate computation of border bases. 
\begin{figure*}[t]
  \centering
  \includegraphics[width=\linewidth]{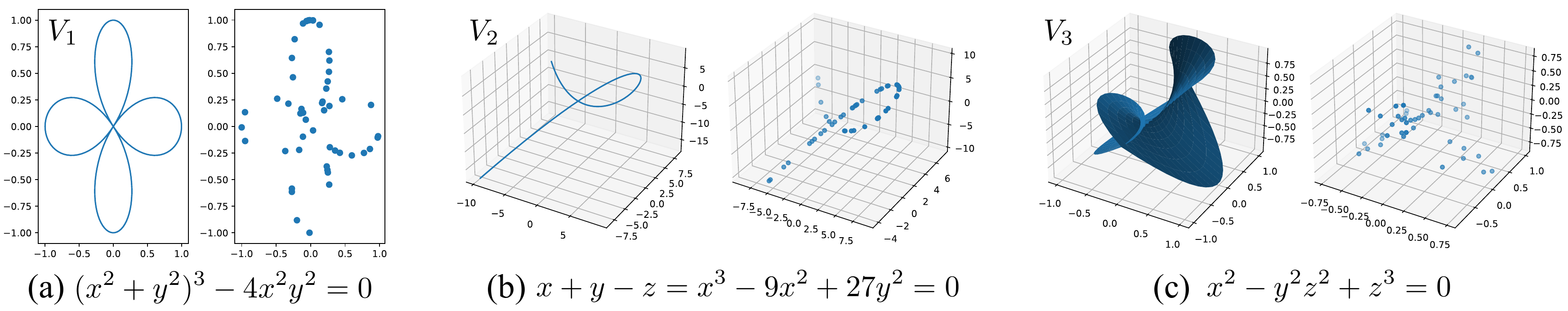}
  \caption{Three affine varieties ($V_1$, $V_2$, and $V_3$) and sampled points (before preprocessing and perturbation).}
  \label{fig:varieties}
\end{figure*}
We construct three datasets from the affine varieties in Figure~\ref{fig:varieties} and examine whether the output of the ABM algorithm---with gradient-weighted or coefficient normalization---is \textit{consistent} across perturbation and scaling. 

\begin{definition}[Scaling Consistency Test] Let $T \in \mathbb{N}$ and $\alpha > 0$. Let $\XX^* \subset \RR^n$ be a finite set of points, and let $\XX$ be its perturbed version. Let $G^*$, $G$, $\widehat{G}$ be the polynomial sets from three runs of the ABM algorithm (with some normalization) with $\XX^*$, $\XX$, and $\alpha\cdot \XX$, respectively. The runs are said to be \textbf{scaling consistent} if the following holds: 
\begin{align}\label{eq:consistency-test}
     |G^* \cap \mathcal{P}_t| = |G \cap  \mathcal{P}_t| = |\widehat{G}_t \cap \mathcal{P}_t|, \quad \text{for all $t = 0, \ldots, T$}.
\end{align}
\end{definition}
Note that the first equality of Eq.~\eqref{eq:consistency-test} requires robustness to perturbations, and the second equality requires that to scaling. For the former, Proposition~\ref{prop:perturbation} suggests robustness of the ABM algorithm with gradient-weighted normalization to perturbations . If the first equality holds, Theorem~\ref{thm:scaling-consistency} on scale invariance guarantees that the second inequality holds as well for any $\alpha > 0$. This is not the case with coefficient normalization, as presented in Theorem~\ref{thm:failure-scale-invariance}.

Datasets were constructed as follows. The three affine varieties in Figure~\ref{fig:varieties} have parametric representations. 
\begin{align}\label{eq:three-varieties}
    V_1 &= \qty{(x,y) \in \mathbb{R}^2 \mid (x^2+y^2)^3 - 4x^2y^2 = 0}, \\
        &= \qty{(x,y) \in \mathbb{R}^2 \mid x = \cos(2u)\cos(u), y = \cos(2u)\sin(u), u\in \mathbb{R}}, \\
    V_2 &= \qty{(x, y, z) \in \mathbb{R}^3 \mid x + y - z = 0, x^3 - 9(x^2 - 3y^2) = 0}, \\
        &= \qty{(x, y, z) \in \mathbb{R}^3 \mid x = 3(3-u^2), y = u(3-u^2), z = x + y, u\in \mathbb{R}}, \\
    V_3 &= \qty{(x, y, z) \in \mathbb{R}^3 \mid x^2-y^2z^2+z^3 = 0}, \\
        &= \qty{(x, y, z) \in \mathbb{R}^3 \mid x = v(u^2-v^2), y=u, z=u^2-v^2, (u, v)\in \mathbb{R}^2}.
\end{align}
From each, we first sampled 50 points from $u \in [-1, 1)$ for $V_1$, $u \in [-2.5, 2.5)$ for $V_2$, and $(u, v) \in [-1, 1)^2$ for $V_3$, respectively. Each sample set was preprocessed by subtracting the mean and scaling to make the average $L_2$ norm of the points unit, which we denote $\XX_i$ for $V_i$ for $i=1,2,3$. The sampled points were then perturbed by an additive Gaussian noise $\mathcal{N}(\mathbf{0}, \nu I)$, where $I$ denotes the identity matrix and $\nu\in \{0.01, 0.05\}$, and then recentered again. The set of such perturbed points from $\XX_i^{*}$ is denoted by $\XX_i$. 

With these datasets, we conducted the Scaling Consistency Test with five scales $\alpha=0.01, 0.1, 1.0, 10, 100$. To remove the dependency on the threshold $\epsilon$, all $\epsilon$ in $[10^{-5}\alpha, \alpha)$ were exhaustively tested for each run with a step size $10^{-3}\alpha$. The value of $T$ was set to $T=6, 3, 3$ for $V_1, V_2, V_3$, respectively, according to the highest degree of polynomials defining them (cf.~Eq.~\eqref{eq:three-varieties}).

Tables~\ref{table:SRT-noise01} and~\ref{table:SRT-noise05} summarize the results, corresponding to the perturbation level $\nu = 0.01, 0.05$, respectively. We performed 20 independent runs for each setup.
We first focus on Table~\ref{table:SRT-noise01}. The success rate column displays the ratio of the cases with successfully passing the Scaling Consistency Test to 20 runs. With gradient-weighted normalization, the ABM algorithm succeeded in all datasets and scales (except $(V_2, \alpha=0.01)$), whereas with coefficient normalization, it succeeded only in specific scales (not necessarily $\alpha=1$). For numerically stable computation, it is common to preprocess data points in a certain range (in our case, mean-zero, unit average $L_2$ norm, and $\alpha=1$). However, our experiment shows that, with coefficient normalization, such preprocessing can lead the approximate border basis construction to fail. In contrast, gradient-weighted normalization provides robustness against such preprocessing. 

Another observation is that the valid range of $\epsilon$ and the extent of vanishing of gradient-weighted normalization both change in proportion to the scale. For example, at $V_2$, the ranges of valid $\epsilon$ change as $(1.94, 2.17)\times 10^{-n}, (n=-3, -2, \ldots, 1)$ and the extent of vanishing changes as $1.80\times 10^{-n}, (n=-3, -2, \ldots, 1)$.  This tendency is supported by the scaling invariance (Theorem~\ref{thm:scaling-consistency}). Note that although the extent of vanishing appears to be large at $\alpha=100$, the signal-to-noise ratio remains unchanged.
With coefficient normalization, the range of $\epsilon$ and the extent of vanishing change in an inconsistent way. At $V_3$, between $\alpha=0.01$ and $\alpha=1.0$, the scale of the range of $\epsilon$ differs by three orders, while between $\alpha=1.0$ and $\alpha=100$ share the same order. For $\alpha=0.1$, the successful case was only $\epsilon=10^{-5}$, where the initial value and the step size of the linear search are $10^{-6}$ and $10^{-4}$, respectively.

When the perturbation level increases to $\nu=0.05$~(Table~\ref{table:SRT-noise05}), similar results were observed; gradient-weighted normalization showed more robustness against a scaling of points than coefficient normalization, and the range of $\epsilon$ and the extent of vanishing changed proportionally to the scaling. 
Besides, coefficient normalization resulted in a lower success rate at $V_2$ and $V_3$. For example, at $(V_2, \alpha=0.1)$, the success rate dropped from 1.00 to 0.00. In contrast, gradient-weighted normalization retained its performance. This result implies the better stability of gradient-weighted normalization against perturbation.

\begin{table*}
    \centering
    \caption{Summary of the Scaling Consistency Test of 20 independent runs with 1\,\% noise. Columns \textit{coeff.} and \textit{grad. w.} present the coefficient and gradient-weighted normalization, respectively. Column \textit{coeff. dist.} presents the distance of the normalized coefficient vectors between the systems from 500 unperturbed and 100 perturbed points, respectively. Column \textit{e.v.} denotes the extent of vanishing at the unperturbed points. The values of the range, coefficient distance, and extent of vanishing are averaged values over 20 independent runs. As indicated by the success rate, the proposed gradient-weighted normalization approach is robust and consistent (see the proportional change in the range and the extent of vanishing) to the scaling, whereas coefficient normalization is not.}\label{table:SRT-noise01}
\begin{tabular}{c|crcccc}
    \toprule
        dataset & normalization & scaling $\alpha$ & range & coeff. dist. & e.v. & success rate \\
    \midrule
    \multirow{10}{*}{$V_1$} & \multirow{5}{*}{coefficient}  & 0.01        & --                             & --         & --                          & 0.00 [00/20]       \\
                       &                              & 0.1         & --                             & --         & --                          & 0.00 [00/20]       \\
                       &                              & 1.0         & [2.28, 2.61] $\times 10^{-2}$  &  1.39      & 1.27 $\times 10^{-2}$       & 1.00 [20/20]       \\
                       &                              & 10          & [1.80, 1.91] $\times 10^{-0}$  &  1.41      & 1.28                        & 1.00 [20/20]       \\
                       &                              & 100         & [1.31, 1.41] $\times 10^{-0}$  &  1.34      & 1.26                        & 0.85 [17/20]       \\
                       \cmidrule{2-7}
                       & \multirow{5}{*}{gradient-weighted}    & 0.01        & --                             &            & --                          & 0.00 [00/20]        \\
                       &                              & 0.1         & [3.33, 4.20] $\times 10^{-3}$  &  1.32      & 4.91 $\times 10^{-3}$       & 1.00 [20/20]       \\
                       &                              & 1.0         & [3.33, 4.20] $\times 10^{-2}$  &  1.32      & 4.91 $\times 10^{-2}$       & 1.00 [20/20]       \\
                       &                              & 10          & [3.33, 4.20] $\times 10^{-1}$  &  1.41      & 4.91 $\times 10^{-1}$       & 1.00 [20/20]       \\
                       &                              & 100         & [3.33, 4.20] $\times 10^{-0}$  &  1.41      & 4.91                        & 1.00 [20/20]       \\
    \midrule
    \multirow{10}{*}{$V_2$} & \multirow{5}{*}{coefficient}  & 0.01        & --                             & --         & --                          & 0.00 [00/20]       \\
                       &                              & 0.1         & --                             & --         & --                          & 0.00 [00/20]       \\
                       &                              & 1.0         & [0.69, 1.30] $\times 10^{-1}$  & 0.0121     & 3.67 $\times 10^{-2}$       & 1.00 [20/20]       \\
                       &                              & 10          & [1.43, 1.63] $\times 10^{-0}$  & 0.577      & 1.28                        & 1.00 [20/20]       \\
                       &                              & 100         & --                             &  --        & --                          & 0.00 [00/20]       \\
                       \cmidrule{2-7}
                       & \multirow{5}{*}{gradient-weighted}    & 0.01        & [1.94, 2.17] $\times 10^{-3}$  &  0.612     & 1.80 $\times 10^{-3}$       & 1.00 [20/20]        \\
                       &                              & 0.1         & [1.94, 2.17] $\times 10^{-2}$  &  0.518     & 1.80 $\times 10^{-2}$       & 1.00 [20/20]        \\
                       &                              & 1.0         & [1.94, 2.17] $\times 10^{-1}$  &  0.104     & 1.80 $\times 10^{-1}$       & 1.00 [20/20]        \\
                       &                              & 10          & [1.94, 2.17] $\times 10^{-0}$  &  0.455     & 1.80                        & 1.00 [20/20]        \\
                       &                              & 100         & [1.94, 2.17] $\times 10^{+1}$  &  0.681     & 1.80 $\times 10^{+1}$       & 1.00 [20/20]        \\
    \midrule
    \multirow{10}{*}{$V_3$} & \multirow{5}{*}{coefficient}  & 0.01        & --                             &  --        & --                          & 0.00 [00/20]       \\
                       &                              & 0.1         & [1.00, 1.00] $\times 10^{-6}$  &  0.836     & 1.93 $\times 10^{-6}$       & 0.65 [13/20]       \\
                       &                              & 1.0         & [7.06, 8.85] $\times 10^{-3}$  &   1.23     & 2.35 $\times 10^{-2}$       & 0.95 [19/20]       \\
                       &                              & 10          & [1.12, 1.26] $\times 10^{-0}$  &   1.20     & 2.92                        & 1.00 [20/20]       \\
                       &                              & 100         & [1.68, 1.72] $\times 10^{-0}$  &   1.05     & 2.18                        & 0.75 [15/20]       \\
                       \cmidrule{2-7}
                       & \multirow{5}{*}{gradient-weighted}    & 0.01        & [1.67, 2.36] $\times 10^{-4}$  &   1.30     & 5.97 $\times 10^{-4}$       & 1.00 [20/20]        \\
                       &                              & 0.1         & [1.67, 2.36] $\times 10^{-3}$  &   1.29     & 5.97 $\times 10^{-3}$       & 1.00 [20/20]        \\
                       &                              & 1.0         & [1.67, 2.36] $\times 10^{-2}$  &   1.29     & 5.97 $\times 10^{-2}$       & 1.00 [20/20]        \\
                       &                              & 10          & [1.67, 2.36] $\times 10^{-1}$  &   1.15     & 5.97 $\times 10^{-1}$       & 1.00 [20/20]        \\
                       &                              & 100         & [1.67, 2.36] $\times 10^{-0}$  &   1.36     & 5.97                        & 1.00 [20/20]        \\
    \bottomrule
\end{tabular}
\end{table*}

\begin{table*}
    \centering
    \caption{Summary of the Scaling Consistency Test of 20 independent runs with 5\,\% noise. Compared to the results in Table~\ref{table:SRT-noise01}, coefficient normalization decreases the success rate at $V_2$ and $V_3$, whereas gradient-weighted normalization retains the performance. }\label{table:SRT-noise05}
    \begin{tabular}{c|crcccc}
        \toprule
            dataset & normalization & scaling $\alpha$ & range & coeff. dist. & e.v. & success rate \\
        \midrule
        \multirow{10}{*}{$V_1$} & \multirow{5}{*}{coefficient}  & 0.01        & --                             & --         & --                          & 0.00 [00/20]       \\
                           &                              & 0.1         & --                             & --         & --                          & 0.00 [00/20]       \\
                           &                              & 1.0         & [2.28, 2.61] $\times 10^{-2}$  &  1.31      & 1.14 $\times 10^{-1}$       & 1.00 [20/20]       \\
                           &                              & 10          & [1.80, 1.91] $\times 10^{-0}$  &  1.41      & 4.13                        & 1.00 [20/20]       \\
                           &                              & 100         & [1.86, 1.91] $\times 10^{-0}$  &  1.20      & 3.34                        & 0.85 [17/20]       \\
                           \cmidrule{2-7}
                           & \multirow{5}{*}{gradient-weighted}    & 0.01        & --                             &            & --                          & 0.00 [00/20]        \\
                           &                              & 0.1         & [5.07, 5.66] $\times 10^{-3}$  &  1.22      & 2.61 $\times 10^{-2}$       & 1.00 [20/20]       \\
                           &                              & 1.0         & [5.07, 5.66] $\times 10^{-2}$  &  1.34      & 2.61 $\times 10^{-1}$       & 1.00 [20/20]       \\
                           &                              & 10          & [5.07, 5.66] $\times 10^{-1}$  &  1.37      & 2.61                        & 1.00 [20/20]       \\
                           &                              & 100         & [5.07, 5.66] $\times 10^{-0}$  &  1.40      & 2.61 $\times 10^{+1}$       & 1.00 [20/20]       \\
        \midrule
        \multirow{10}{*}{$V_2$} & \multirow{5}{*}{coefficient}  & 0.01        & --                             & --         & --                          & 0.00 [00/20]       \\
                           &                              & 0.1         & --                             & --         & --                          & 0.00 [00/20]       \\
                           &                              & 1.0         & --                             & --         & --                          & 0.00 [00/20]       \\
                           &                              & 10          & [3.64, 4.21] $\times 10^{-0}$  & 0.693      & 3.41                        & 1.00 [20/20]       \\
                           &                              & 100         & --                             &  --        & --                          & 0.00 [00/20]       \\
                           \cmidrule{2-7}
                           & \multirow{5}{*}{gradient-weighted}    & 0.01        & [4.11, 5.42] $\times 10^{-3}$  &  0.692     & 2.16 $\times 10^{-3}$       & 1.00 [20/20]        \\
                           &                              & 0.1         & [4.11, 5.42] $\times 10^{-2}$  &  0.667     & 2.16 $\times 10^{-2}$       & 1.00 [20/20]        \\
                           &                              & 1.0         & [4.11, 5.42] $\times 10^{-1}$  &  0.515     & 2.16 $\times 10^{-1}$       & 1.00 [20/20]        \\
                           &                              & 10          & [4.11, 5.42] $\times 10^{-0}$  &  0.565     & 2.16                        & 1.00 [20/20]        \\
                           &                              & 100         & [4.11, 5.42] $\times 10^{+1}$  &  0.692     & 2.16 $\times 10^{+1}$       & 1.00 [20/20]        \\
        \midrule
        \multirow{10}{*}{$V_3$} & \multirow{5}{*}{coefficient}  & 0.01        & --                             &  --        & --                          & 0.00 [00/20]       \\
                           &                              & 0.1         & [1.00, 1.00] $\times 10^{-6}$  &  0.416     & 2.10 $\times 10^{-6}$       & 0.30 [06/20]       \\
                           &                              & 1.0         & [0.90, 1.04] $\times 10^{-2}$  &   1.29     & 1.40                        & 0.95 [19/20]       \\
                           &                              & 10          & [1.32, 1.39] $\times 10^{-0}$  &   4.86     & 1.33                        & 1.00 [20/20]       \\
                           &                              & 100         & [1.67, 1.67] $\times 10^{-0}$  &   2.51     & 8.48 $\times 10^{-1}$       & 0.60 [12/20]       \\
                           \cmidrule{2-7}
                           & \multirow{5}{*}{gradient-weighted}    & 0.01        & [3.13, 3.68] $\times 10^{-4}$  &   1.38     & 2.37 $\times 10^{-3}$       & 1.00 [20/20]        \\
                           &                              & 0.1         & [3.13, 3.68] $\times 10^{-3}$  &   1.38     & 2.37 $\times 10^{-2}$       & 1.00 [20/20]        \\
                           &                              & 1.0         & [3.13, 3.68] $\times 10^{-2}$  &   1.34     & 2.37 $\times 10^{-1}$       & 1.00 [20/20]        \\
                           &                              & 10          & [3.13, 3.68] $\times 10^{-1}$  &   1.17     & 2.37                        & 1.00 [20/20]        \\
                           &                              & 100         & [3.13, 3.68] $\times 10^{-0}$  &   1.37     & 2.37 $\times 10^{+1}$       & 1.00 [20/20]        \\
        \bottomrule
    \end{tabular}
\end{table*}

\section{Conclusion}
We proposed gradient-weighted normalization for the basis computation of approximately vanishing ideals. Alhough the Euclidean norm of the gradient evaluated at a prescribed point set only defines a semi-norm, we prove that all quantities---(order ideal) terms, border terms, and border basis polynomials--- that are considered during the basis computation can be gradient-weighted normalized.
The introduction of gradient-weighted normalization is compatible with the existing analysis of approximate border bases and the computation algorithms. The time complexity does not change either.

The data-dependent nature of gradient-weighted normalization possesses important properties: stability against perturbation and invariance with respect to the scale of the point set.
In particular, through theory and numerical experiments, we highlighted the critical effect of the scaling of points on the success of approximate basis computation. We consider that the present study provides a new perspective and ingredients to analyze the border basis computation in the approximate setting, where perturbed points should be dealt with, and stable computation is required. 

Some methods that neither rely on eigenvalue problems nor on SVD solve simple quadratic programs, e.g., as least-squares problems. The implementation of gradient-weighted normalization in these methods is left to future work.

\section*{Acknowledgments}
We are grateful to Sebastian Pokutta, Elias Wirth, and the Zuse Institute Berlin for hosting us at the same time and, thus, giving us the opportunity to collaborate there offline. Hiroshi Kera would also like to thank Yuichi Ike for helpful discussions. This research was partially supported by JST PRESTO Grant Number JPMJPR24K4, JSPS KAKENHI Grant Number JP23KK0208, Mitsubishi Electric Information Technology R\&D Center, and the Chiba University IAAR Research Support Program and the Program for Forming Japan's Peak Research Universities (J-PEAKS).

\bibliographystyle{plain}

\clearpage

\section*{Appendices}
\appendix

\renewcommand{\thesection}{\Alph{section}}

\section{Comparison of border basis computation methods for perturbed points}\label{app:method-comparison}\label{sec:bb-computation-comparison}

Here, we compare border basis computation methods for perturbed points and justify the adoption of the ABM algorithm. We compare the ABM algorithm with two algorithms: the approximately vanishing ideal (AVI) algorithm~\citep{heldt2009approximate} and the stable order ideal~(SOI) algorithm~\citep{abbott2008stable}. The comparison of these three algorithms can also be found in \citep{limbeck2013computation}.

\paragraph{The AVI algorithm}\label{paragraph:bb-computation-comparison-avi}
The AVI algorithm is a seminal algorithm for the approximate computation of vanishing polynomials and computes approximate border bases efficiently. Several variants of this algorithm were developed (the AVI-family algorithms~\citep{limbeck2013computation}), including the ABM algorithm. The vanishing component analysis~\citep{livni2013vanishing} in machine learning was also inspired by this algorithm. However, the AVI algorithm has several caveats. We use the same notation as in Algorithm~\ref{alg:ABMGN} for the comparison.
\begin{itemize}
    \item As with the ABM algorithm, the AVI algorithm takes $\epsilon$ as an input. However, the polynomials in the computed basis are not necessarily $\epsilon$-approximately vanishing. The threshold $\epsilon$ and the upper bound $\delta$ of the extent of vanishing only share a soft connection. 
    \begin{align}
        \delta &= \qty(\epsilon\sqrt{|G|} + \tau |G|(|G| + |\mathcal{O}|) )\sqrt{|X|}\cdot \|X\|_{\infty}^{\degree{G}},
    \end{align}
    where $\|X\|_{\infty} = \max_{\mathbf{x}\in X}\ \|\mathbf{x}\|_{\infty}$ ($\|\cdot\|_{\infty}$ denotes the $L_0$ norm), and $\tau > 0$ is an additional hyperparameter to determine the precision of the algorithm.\footnote{The $\delta$ here is subtly different from the original one, which is $\tau = \epsilon \sqrt{|G|} + \tau \sqrt{|G|}(|G| + |\mathcal{O}|)\sqrt{|X|}$, in \citep{heldt2009approximate} because we fixed a minor error and did not use the assumption $X\subset [-1, 1]^n$.} Evidently, $\delta$ is impractically large and cannot be known before the computation. 
    \item The computed approximately vanishing polynomials are not optimal in the least squares sense. 
    \item Unlike the ABM algorithm, the AVI algorithm's output is not affected by the presence of multiple (nearly) identical points in $\XX$. However, in several applications, the multiplicity of points indicates their reliability, and thus, polynomials should be more fitted to them, whereas the outliers tend to have no multiplicity. This implies the sensitivity of the AVI algorithm to outliers. 
\end{itemize}
All these caveats originate from the (stabilized) reduced row echelon form calculation step (cf. Appendix~\ref{sec:avi+gwn}). Not only the practical caveats mentioned above but also the calculation of the reduced row echelon form make the algorithm too complex for theoretical analysis. We chose the ABM algorithm as the object of analysis because it can circumvent the calculation of the reduced row echelon form. Therefore, it is free from all the caveats. 

\paragraph{The SOI algorithm} The SOI algorithm takes a different approach from the AVI and ABM algorithms. Instead of an approximate border basis, it computes an order ideal that is \textit{stable} against perturbation of points. If a border basis is needed, then any algorithm that computes a border basis from an order ideal can be applied. Although the SOI algorithm can more directly control the stability of the border basis in theory, it instead requires several impractical assumptions. 
\begin{itemize}
    \item The SOI algorithm assumes that the maximal magnitude of noise is known a priori. However, several applications involving unbounded noise (e.g., additive Gaussian noise) are considered. 
    \item The SOI algorithm is sensitive to the choice of the value of the hyperparameter $\gamma$, which is related to the supremum of the residual error within a small ball of nonlinear terms of the Taylor expansion of a polynomial obtained during the algorithm execution. Thus, it is very difficult to set the value $\gamma$ properly. 
\end{itemize}
The SOI algorithm is also known for its high computation cost. For example, with 40 three-dimensional generic points, the SOI algorithm takes approximately 3,000 seconds, whereas the ABM algorithm takes 0.25 seconds~\citep{limbeck2013computation}. 

To summarize, although both the AVI and SOI algorithms are seminal, the ABM algorithm is superior to them in practice and also useful for analysis, motivating us to adopt it as an example.

\section{The AVI Algorithm with Gradient-Weighted Normalization}\label{sec:avi+gwn}

Our replacement of normalization can be integrated to various AVI-family algorithms because they are all based on solving eigenvalue problems or the singular value decomposition. Here, we provide an example of the AVI algorithm with gradient-weighted normalization. 

In the following, the original AVI algorithm\footnote{Strictly speaking, the AVI algorithm presented here is based on~\citep{heldt2009approximate}, and it does not give an border prebasis due to a minor error. The fixed version is presented in~\citep{limbeck2013computation}.} is modified at \textbf{S2} and \textbf{S3}, where a generalized eigenvalue problem instead of an SVD is solved as in Eq.~\eqref{eq:gep} of the ABM algorithm with gradient-weighted normalization. 
Because the AVI algorithms manipulate degree-$d$ border terms simultaneously, not only the generalized eigenvector of the smallest eigenvalue but also those of other eigenvalues not exceeding ${\epsilon}^2$ are calculated. Note that the polynomials in $G$ output by the original AVI algorithm are not necessarily $\epsilon$-approximately vanishing because of the calculation of the stabilized row echelon form at \textbf{S3} (see Section~\ref{sec:bb-computation-comparison}). 

Let $\XX\subset[-1, 1]^n$ be a non-empty, finite set of points, $\epsilon > \tau > 0$ error tolerances, and $\sigma$ a degree-compatible term ordering. As in the ABM algorithm, the AVI algorithm collects order terms and approximately vanishing polynomials from lower to higher degrees. The algorithm starts with lists $G = []$, $\mathcal{O} = [1]$, and degree parameter $d= 0$. The parameters $r=0$ and $s=1$ count the elements in $G$ and $\mathcal{O}$ respectively.
\begin{itemize}
    \item[\textbf{S1}] Increase $d$ by one and let $T_d = [b_1,\dots, b_{n_d}]$ the trial terms of degree $d$ in $\partial \mathcal{O}$ in $\sigma$-decreasing order. If $T_d$ is an empty set, the algorithm outputs $(G, \mathcal{O})$ and terminates. 
    
    \item[\textbf{S2}] Let $\mathcal{O}=[t_1,\dots,t_s]$. Let $M = \mqty(T_d(\XX) &\mathcal{O}(\XX)) \in \mathbb{R}^{m\times (n_d+s)}$. 
    \textbf{Solve the simultaneous generalized eigenvalue problem} 
    \begin{align}
        M^{\intercal}MV = D^2V\Lambda,
    \end{align}
    where $D = \text{diag}\bigl(\gwnorm{b_1}, \ldots, \gwnorm{b_{n_d}}, \gwnorm{t_1}, \ldots, \gwnorm{t_s}\bigr)$, while $\Lambda$ the diagonal matrix of generalized eigenvalues and $V$ the matrix of generalized eigenvectors. 
    
    Let $B$ be a matrix whose columns consist of the generalized eigenvectors corresponding to the generalized eigenvalues less than or equal to $\epsilon^2$. If $B$ is the empty matrix, i.e., if there is no generalized eigenvalue of value at most $\epsilon^2$, then go to step S1.
    
    \item[\textbf{S3}] Compute the stabilized reduced row echelon form~\citep{heldt2009approximate} of $B^{\intercal}$ with respect to the lower tolerance $\tau$ (\textbf{the coefficient normalization in the postprocessing phase is replaced with gradient-weighted normalization}). The result is a matrix $C \in\mathbb{R}^{k\times (s+n_d)}$ such that $c_{ij} = 0$ for $j \le \nu(i)$, where $\nu(i)$ denotes the column index of the pivot element in the $i$-th row of $C$.
    
    \item[\textbf{S4}] For each index $j \in \{1, \ldots, n_d\}$ whose column possess a pivot, say in row $i$, append the polynomial
    \begin{align}
        g_{r+1}=c_{ij}b_{j} + \sum_{k = j+1}^{n_d} c_{ik}b_{k} + \sum_{k=1}^{s} c_{i,n_d+k}t_{k}
    \end{align}
    to the list $G$ and increase $r$ by one.
    
    \item[\textbf{S5}] For each index $j=n_d, n_d-1, \ldots, 1$ such that the $j$-th column of $C$ is pivot-free, append the term $t_{s+1}=b_j$ as a new first element to $\mathcal{O}$ and increase $s$ by one. 
    
    \item[\textbf{S6}] Continue with step \textbf{S2}. 
\end{itemize}

\section{Singular Value Decomposition as Alternative for the Generalized Eigenvalue Problem}\label{sec:ABMGN-svd}

Some programming languages do not support a solver of the generalized eigenvalue problem. In this case, one can use the singular value decomposition (SVD) instead. For example, the step \textbf{F1} of ABM+GWN algorithm can be implemented by SVD as follows.

\begin{itemize}
    \item[\textbf{F1$^\prime$}] Consider the reduced set of terms $\mathcal{O}_{-} = \mathcal{O} \setminus \{1\} = \{t_2, \ldots, t_s\}$. Then let $M_{-} = \mqty(\mathcal{O}_{-}(X) & b(X))$ and $D_{-} = \mathrm{diag}(\gwnorm{t_2},\dots, \gwnorm{t_s}, \gwnorm{b})$, which is positive definite.
    Compute the SVD of $M_{-}D_{-}^{-1}$, 
    \begin{align}
        M_{-}D_{-}^{-1} &= U\Sigma V^{\intercal}\ ,\qquad 
        U=\begin{pmatrix}
            \vert &  &\vert \\
            \mathbf{u}_1 & \cdots & \mathbf{u}_m\\
            \vert & &\vert \\
        \end{pmatrix}
        \ ,\quad
        V=\begin{pmatrix}
            \vert & &\vert \\
            \mathbf{v}_1 & \cdots & \mathbf{v}_s\\
            \vert &  &\vert \\
        \end{pmatrix}
        \ ,\text{\ and\ }
        \Sigma=\begin{pmatrix}
            \sigma_1 &  & 0\\
            & \ddots &\\
            0 & & \sigma_s\\
            0 & \dots & 0\\
            \vdots & & \vdots\\
            0 & \dots & 0
            \end{pmatrix}
    \end{align}
    with the orthogonal matrices $U$ and $V$ and singular values $\sigma_1\ge \dots\ge \sigma_s$.
    Let $\sigma_{\min}=\sigma_s$ be the smallest singular value and $\mathbf{v}_s$ its right singular vector. 
    Let $\hat{\mathbf{v}}_{\min} = D_{-}{\mathbf{v}_s}$ and let $v_0$ be the mean value of the components of $\sigma_{\min}\hat{\mathbf{v}}_{\min}$. Then, these results produce the minimal solution of the non-reduced generalized eigenvalue problem,
    \begin{align}
        \mathbf{v}_{\min} = \text{sign}(v_b)\cdot\mqty(-v_0 & \hat{\mathbf{v}}_{\min}^{\intercal} )^{\intercal}\ ,\qquad
        \lambda_{\min} = \enorm{M\mathbf{v}_{\min}}^2\ ,
    \end{align}
    where $\text{sign}(v_b) \in \{-1, 1\}$ denotes the sign of the border term coefficient.
\end{itemize}

\begin{remark}
The removal of $t_1 = 1$ makes $D_{-}$ invertible. This allows us to rewrite the generalized eigenvalue problem $M_{-}^{\intercal}M_{-}\mathbf{w}=\sigma D^2_{-}\mathbf{w}$ as the eigenvalue problem $D^{-1}_{-}M_{-}^{\intercal}M_{-}D^{-1}_{-}(D_{-}\mathbf{w})=\sigma(D_{-}\mathbf{w})$, which coincides with the SVD of $M_{-}D^{-1}_{-}$.
\end{remark}
The smallest generalized eigenvalue $\lambda_{\min}$ can be computed as follows. Let $\mathbf{1}=(1,\dots,1)^\intercal$ be the vector of length $m=|\XX|$. Then,
\begin{align*}
    \lambda_{\min} 
    &= \|M\mathbf{v}_{\min}\|^2 =\mathbf{v}_{\min}^\intercal M^\intercal M\mathbf{v}_{\min}\\ 
    &= \mqty(-v_0 & \hat{\mathbf{v}}_{\min}^{\intercal} ) 
    \begin{pmatrix}
        \mathbf{1}^\intercal \\
        M_{-}^\intercal
    \end{pmatrix}
    \begin{pmatrix}
        \mathbf{1}& M_{-}
    \end{pmatrix}
    \mqty( -v_0 \\ \hat{\mathbf{v}}_{\min})=
    \mqty(-v_0 & \hat{\mathbf{v}}_{\min}^{\intercal} ) 
    \mqty( |\XX| & \mathbf{1}^\intercal M_{-}\\ M_{-}^{\intercal}\mathbf{1} & M_{-}^{\intercal}M_{-} ) \mqty( -v_0 \\ \hat{\mathbf{v}}_{\min}) \\
    & = v_0^2 \cdot |\XX| - 2v_0\mathbf{1}^{\intercal}M_{-}\hat{\mathbf{v}}_{\min}+\hat{\mathbf{v}}_{\min}^\intercal M_{-}^\intercal M_{-}\hat{\mathbf{v}}_{\min}\\
    & = v_0^2 \cdot |\XX| - 2v_0\mathbf{1}^{\intercal}M_{-}\hat{\mathbf{v}}_{\min}+\sigma_{\min}^2\ . 
\end{align*}

For implementation, we recommend solving the generalized eigenvalue problem directly instead of reducing it and computing the SVD because \textbf{S2$^\prime$} involves more steps, which can accumulate numerical errors.

\end{document}